\documentclass[11pt,letterpaper]{article}
\bibliographystyle{alphaurl}


\usepackage{geometry}
\def\PrintMode{0}

\usepackage{tikz}
\usetikzlibrary{positioning}
\usepackage{pgffor}
\usetikzlibrary{decorations.pathreplacing}
\usepackage{expl3}

\usepackage[T1]{fontenc}
\usepackage{graphicx}
\usepackage{float}
\usepackage[lined,ruled,linesnumbered]{algorithm2e}
\usepackage{xspace}
\usepackage[utf8]{inputenc}
\usepackage[noadjust]{cite}
\usepackage{amsmath, amssymb, amsfonts, amsthm}
\usepackage{enumerate}
\usepackage{xcolor}
\usepackage[linktocpage=true,pagebackref=true,colorlinks,linkcolor=magenta,citecolor=blue,bookmarks,bookmarksopen,bookmarksnumbered]{hyperref}
\usepackage{bm}
\usepackage[normalem]{ulem}
\usepackage{paralist}
\usepackage{gensymb}
\usepackage{thmtools}
\usepackage{rotating}
\usepackage{mdframed}
\usepackage{multirow}
\usepackage{tabularx}
\usepackage{verbatim}
\usepackage{mathrsfs}
\usepackage{indentfirst}
\usepackage{mleftright}
\usepackage[nottoc]{tocbibind}
\usepackage{complexity}
\usepackage{tablefootnote}
\usepackage{epigraph}
\usepackage{autobreak}

\ifnum\PrintMode=1

\setlength{\pltopsep}{0pt}
\geometry{left=1cm,top=1cm,right=1cm,bottom=1.2cm}
\usepackage{savetrees}
\else
\geometry{left=1in,top=1in,right=1in,bottom=1in}
\fi

\graphicspath{{./figs/}}

\usepackage[capitalize]{cleveref}

\newtheorem{theorem}{Theorem}[section]
\newtheorem{lemma}[theorem]{Lemma}
\newtheorem{corollary}[theorem]{Corollary}
\newtheorem{claim}[theorem]{Claim}
\newtheorem{proposition}[theorem]{Proposition}

\newtheorem{fact}[theorem]{Fact}

\theoremstyle{definition}
\newtheorem{definition}[theorem]{Definition}

\theoremstyle{remark}
\newtheorem{remark}[theorem]{Remark}

\renewcommand*\backref[1]{\ifx#1\relax \else (cit.~on p.~#1) \fi} 





\makeatletter
\def\moverlay{\mathpalette\mov@rlay}
\def\mov@rlay#1#2{\leavevmode\vtop{%
		\baselineskip\z@skip \lineskiplimit-\maxdimen
		\ialign{\hfil$\m@th#1##$\hfil\cr#2\crcr}}}
\newcommand{\charfusion}[3][\mathord]{
	#1{\ifx#1\mathop\vphantom{#2}\fi
		\mathpalette\mov@rlay{#2\cr#3}
	}
	\ifx#1\mathop\expandafter\displaylimits\fi}
\makeatother


\renewcommand{\emptyset}{\varnothing}

\newcommand{\cd}{\mathsf{cd}}
\newcommand{\Ccd}{\mathsf{Ccd}}

\newlang{\MCSP}{MCSP}
\newlang{\MFSP}{MFSP}
\newlang{\MKtP}{MKtP}
\newlang{\MKTP}{MKTP}
\newlang{\itrMCSP}{itrMCSP}
\newlang{\itrMKTP}{itrMKTP}
\newlang{\itrMINKT}{itrMINKT}
\newlang{\MINKT}{MINKT}
\newlang{\MINK}{MINK}
\newlang{\MINcKT}{MINcKT}
\newlang{\CMD}{CMD}
\newlang{\DCMD}{DCMD}
\newlang{\CGL}{CGL}
\newlang{\PARITY}{PARITY}
\newlang{\Empty}{\textsc{Empty}}
\newlang{\Avoid}{\textsc{Avoid}}
\newlang{\Sparsification}{\textsc{Sparsification}}
\newlang{\HamEst}{\mathsf{HammingEst}}
\newlang{\HamHit}{\mathsf{HammingHit}}
\newlang{\CktEval}{\textsc{Circuit-Eval}}
\newlang{\Hard}{\textsc{Hard}}
\newlang{\cHard}{\textsc{cHard}}
\newlang{\CAPP}{CAPP}
\newlang{\GapUNSAT}{GapUNSAT}
\newlang{\OV}{OV}
\renewlang{\PCP}{PCP}
\newlang{\PCPP}{PCPP}
\newclass{\Avg}{Avg}
\newclass{\ZPEXP}{ZPEXP}
\newclass{\DLOGTIME}{DLOGTIME}
\newclass{\ALOGTIME}{ALOGTIME}
\newclass{\ATIME}{ATIME}
\newclass{\SZKA}{SZKA}
\newclass{\Laconic}{Laconic\text{-}}
\newclass{\APEPP}{APEPP}
\newclass{\SAPEPP}{SAPEPP}
\newclass{\TFSigma}{TF\Sigma}
\newclass{\NTIMEGUESS}{NTIMEGUESS}

\newlang{\Formula}{Formula}
\newlang{\THR}{THR}
\newlang{\EMAJ}{EMAJ}
\newlang{\MAJ}{MAJ}
\newlang{\SYM}{SYM}
\newlang{\DOR}{DOR}
\newlang{\ETHR}{ETHR}
\newlang{\Midbit}{Midbit}
\newlang{\LCS}{LCS}
\newlang{\TAUT}{TAUT}
\newlang{\Poly}{\text{-}Poly}


\newcommand{\maex}{\mathsf{Ex}}





\newcommand{\dist}{\mathsf{dist}}


\renewcommand{\epsilon}{\varepsilon}

\usepackage[draft, inline,marginclue,index]{fixme}  
\fxsetup{theme=color,mode=multiuser}

\definecolor{color1}{RGB}{46,134,193}
\FXRegisterAuthor{hanlin}{ahanlin}{\colorbox{color1}{\color{white}Hanlin}}

\definecolor{color7}{RGB}{128,0,128}
\FXRegisterAuthor{yeyuan}{ayeyuan}{\colorbox{color7}{\color{white}Yeyuan}}

\definecolor{color3}{RGB}{255,128,0}
\FXRegisterAuthor{jiatu}{ajiatu}{\colorbox{color3}{\color{white}Jiatu}}

\definecolor{color4}{RGB}{150,150,150}
\FXRegisterAuthor{tianqi}{atianqi}{\colorbox{color4}{\color{white}Tianqi}}

\definecolor{color2}{RGB}{20,60,100}
\FXRegisterAuthor{zhikun}{azhikun}{\colorbox{color2}{\color{white}Zhikun}}

\definecolor{color6}{RGB}{250,0,250}
\FXRegisterAuthor{yizhi}{ayizhi}{\colorbox{color6}{\color{white}Yizhi}}

\definecolor{color5}{RGB}{128,128,128}
\FXRegisterAuthor{task}{atask}{\colorbox{color5}{\color{white}Task}}

\setlength{\pltopsep}{1ex}
\setlength{\plitemsep}{0.5ex} 


\geometry{left=1in,top=1in,right=1in,bottom=1in}
\setcounter{tocdepth}{3}


\begin{document}

\title{The Gap Between Greedy Algorithm and Minimum Multiplicative Spanner}
\author{
\text{Yeyuan Chen}\vspace{6pt}\\{EECS Department}\\University of Michigan, Ann Arbor\\ \href{mailto:yeyuanch@umich.edu}{\texttt{yeyuanch@umich.edu}}
}
\maketitle
\begin{abstract}
Given any undirected graph $G=(V,E)$ with $n$ vertices, if its subgraph $H\subseteq G$ satisfies $\dist_{H}(u,v)\leq k\cdot\dist_G(u,v)$ for any two vertices $(u,v)\in V\times V$, we call $H$ a $k$-spanner of $G$. The greedy algorithm adapted from Kruskal's algorithm is an efficient and folklore way to produce a $k$-spanner with girth at least $k+2$. The greedy algorithm has shown to be `existentially optimal', while it's not `universally optimal' for any constant $k$. Here, `universal optimality' means an algorithm can produce the smallest $k$-spanner $H$ given any $n$-vertex input graph $G$.
 
However, how well the greedy algorithm works compared to `universal optimality' is still unclear for superconstant $k:=k(n)$. In this paper, we aim to give a new and fine-grained analysis of this problem in undirected unweighted graph setting. Specifically, we show some bounds on this problem including the following two 
\begin{itemize}
\item On the negative side, when $k<\frac{1}{3}n-O(1)$, the greedy algorithm is not `universally optimal'.
\item On the positive side, when $k>\frac{2}{3}n+O(1)$, the greedy algorithm is `universally optimal'.
\end{itemize}
We also introduce an appropriate notion for `approximately universal optimality'. An algorithm is $(\alpha,\beta)$-universally optimal iff given any $n$-vertex input graph $G$, it can produce a $k$-spanner $H$ of $G$ with size $|H|\leq n+\alpha(|H^*|-n)+\beta$, where $H^*$ is the smallest $k$-spanner of $G$. We show the following positive bounds.
\begin{itemize}
\item When $k>\frac{4}{7}n+O(1)$, the greedy algorithm is $(2,O(1))$-universally optimal.
\item When $k>\frac{12}{23}n+O(1)$, the greedy algorithm is $(18,O(1))$-universally optimal.
\item When $k>\frac{1}{2}n+O(1)$, the greedy algorithm is $(32,O(1))$-universally optimal.
\end{itemize}
All our proofs are constructive building on new structural analysis on spanners. We give some ideas about how to break small cycles in a spanner to increase the girth. These ideas may help us to understand the relation between girth and spanners. 
\end{abstract}
\setcounter{tocdepth}{2}
\pagenumbering{roman}

\pagenumbering{arabic}
\section{Introduction}
A spanner of a graph is a subgraph that approximately preserves distances between all vertex pairs in the original graph. This notion was introduced by \cite{spanner_intro} and widely used in many areas such as graph algorithms \cite{graphalg1,graphalg2}, distributed networks \cite{dis1,dis2} and chip design \cite{chip1,chip2}.  In most applications, people aim to construct spanners as sparse as possible. The sparsity is measured by the number of edges in the spanner.

In this paper, we will focus on multiplicative spanners for undirected unweighted graphs defined below. Unless otherwise stated, all graphs in this paper are undirected unweighted graphs.
\begin{definition}
Given $k:=k(n)$ and an undirected unweighted graph $G:=(V,E)$ with $n$ vertices, a $k$-spanner of $G$ is a subgraph $H=(V,E'),E'\subseteq E$ of $G$ such that for any vertex pair $(u,v)\in V\times V$, there is $\dist_{H}(u,v)\leq k\cdot\dist_{G}(u,v)$. We call $k$ the stretch of $H$.
\end{definition}
One of the simplest and most widely-used spanner construction algorithms is the greedy algorithm introduced by \cite{add93}. It is adapted from Kruskal's algorithm for computing the minimum spanning tree. In undirected unweighted settings, the greedy algorithm runs like this: Given any input graph $G=(V,E)$ and stretch $k$, we first choose an arbitrary total order $\sigma$ on its edges, denoted by $\sigma(E)=(e_1<e_2<\dots<e_m)$. Then, starting from empty subgraph $H=(V,\emptyset)$, we proceed edges $\{e_i\}_{i\in[m]}$ one by one. For each edge $e_i=(u,v)$, if $\dist_{H}(u,v)>k$, we add $e_i$ to edgeset of $H$. Otherwise, we do nothing. After checking all $m$ edges, we output the latest $H$ as the spanner. \cite{add93} proved the following fact about $H$
\begin{proposition}\label{greedyalg}
The subgraph $H$ outputted by the greedy algorithm with input $\langle G,k\rangle$ is a $k$-spanner of $G$. Moreover, the girth of $H$ is at least $k+2$.
\end{proposition}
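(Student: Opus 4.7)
The plan is to verify the two claims separately, both via short invariants of the greedy procedure, and to isolate upfront the single monotonicity observation that drives both arguments: once an edge is added to $H$, it is never removed, so $\dist_H(\cdot,\cdot)$ can only decrease as the algorithm progresses.

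First I would show that $H$ is a $k$-spanner. It is enough to prove that $\dist_H(u,v)\leq k$ for every edge $(u,v)\in E$, since then, given any vertex pair $(x,y)$, chaining the inequality along a shortest $x$-to-$y$ path in $G$ and applying the triangle inequality yields $\dist_H(x,y)\leq k\cdot\dist_G(x,y)$. Fix $e=(u,v)\in E$. Either the algorithm adds $e$ to $H$, in which case $\dist_H(u,v)\leq 1\leq k$, or it rejects $e$, in which case the rejection rule guarantees $\dist_H(u,v)\leq k$ at the moment $e$ is processed; by monotonicity, the same bound holds in the final graph.

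For the girth bound I would argue by contradiction. Suppose $H$ contains a cycle $C$ of length $\ell\leq k+1$, and let $e=(u,v)$ be the edge of $C$ processed last by the algorithm. Since every other edge of $C$ was processed earlier and lies in the final $H$, each of those edges was already present in $H$ at the moment $e$ was considered. Together they form a $u$-to-$v$ path of length $\ell-1\leq k$ inside the then-current $H$, so $\dist_H(u,v)\leq k$ at that moment. But then the greedy rule would reject $e$, contradicting $e\in H$. Hence every cycle of $H$ has length at least $k+2$.

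There is no real obstacle here beyond bookkeeping; the only point to handle carefully is the monotonicity remark above, which lets us transfer a distance fact established at an intermediate snapshot of $H$ to the final output. Everything else is either the definition of the greedy rejection rule or the triangle inequality.
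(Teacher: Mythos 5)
Your proof is correct and is the standard argument. The paper does not prove \Cref{greedyalg} itself but cites it from the original work of Alth\"ofer et al.\ (the reference \cite{add93}); the two-part argument you give --- monotonicity of $\dist_H$ plus the rejection rule for the spanner property, and the ``last edge of a short cycle would have been rejected'' argument for the girth bound --- is precisely the folklore proof that reference supplies.
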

\Cref{greedyalg} shows that for any graph $G=(V,E)$ with $n$ vertices and $k$, we can use the greedy algorithm to compute one of its $k$-spanners with girth at least $k+2$. A simple observation is that the backward direction also holds. Concretely, we have the following
\begin{proposition}\label{girthequivgreedy}
Given any graph $G=(V,E)$ with $n$ vertices, stretch $k$ and any of its $k$-spanner $H$ with girth at least $k+2$, there exists an edge ordering $\sigma$ of $E$ such that the greedy algorithm outputs $H$ after running on $\sigma$, with input $\langle G,k\rangle$.
\end{proposition}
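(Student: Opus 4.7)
The plan is to exhibit an explicit ordering $\sigma$ under which the greedy algorithm reproduces $H$, namely: list all edges of $E(H)$ first (in any order), then list all edges of $E \setminus E(H)$ (again in any order). I then verify that the algorithm accepts precisely the edges of $H$ across the two phases.

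For the first phase, I would argue by induction on the processing order that every edge of $H$ is added. Suppose we are about to process $e=(u,v) \in E(H)$, and let $H' \subseteq H$ be the current partial subgraph built from the previously processed edges of $H$. If $\dist_{H'}(u,v) \le k$, then a shortest $u$-$v$ path $P$ in $H'$ has length at most $k$, and concatenating $P$ with $e$ yields a closed walk of length at most $k+1$ in $H$; since $e \notin E(H')$, this walk contains the cycle through $e$ in $P+e$, so it actually contains a cycle of length at most $k+1$ in $H$, contradicting the hypothesis that the girth of $H$ is at least $k+2$. Thus $\dist_{H'}(u,v) > k$ and the greedy algorithm adds $e$. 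Inductively, every edge of $H$ is accepted, so after phase one the current subgraph equals $H$.

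For the second phase, every remaining edge $e=(u,v) \in E \setminus E(H)$ satisfies $\dist_G(u,v)=1$ because $(u,v) \in E$. Since $H$ is a $k$-spanner of $G$, $\dist_{H}(u,v) \le k \cdot \dist_G(u,v) = k$, so the greedy rule rejects $e$. The current subgraph therefore remains $H$ until termination, proving the proposition.

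The argument is essentially immediate once one has the girth hypothesis, so there is no real obstacle; the only point to be careful about is the first phase, where one must note that the bound $\dist_{H'}(u,v)>k$ follows from the girth of $H$ (not of $H'$) together with the fact that $H'$ is a subgraph of $H$ not containing $e$, which guarantees that any short $u$-$v$ path in $H'$ completes to a genuine short cycle of $H$.
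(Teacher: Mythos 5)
Your proof is correct and follows the same approach as the paper: order the edges of $H$ first and then the remaining edges, argue that the girth hypothesis forces every edge of $H$ to be accepted, and that the $k$-spanner property of $H$ forces every subsequent edge to be rejected. You simply spell out the phase-one girth argument in more detail than the paper does, which elides it as ``by the girth assumption.''
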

\begin{proof}
Let $H=(V,E'\subseteq E)$. Consider the edge ordering $\sigma$ on $E$ such that all edges in $E'$ appear earlier in $\sigma$ than any edge in $E\backslash E'$. 
Let $H'$ be the spanner outputted by the greedy algorithm running on $\sigma$. Since we first proceed with edges in $E'$, all edges in $E'$ will be in $H'$ by the girth assumption. For any edge $e=(u,v)\in E\backslash E'$, since $H$ is a $k$-spanner of $G$, there must be $\dist_H(u,v)\leq k$, and therefore $e$ won't be added in $H'$ after all edges in $E'$ have been added. We conclude $H'=H$.
\end{proof}
\Cref{girthequivgreedy} shows that: Given $\langle G=(V,E),k\rangle$, the set of subgraphs the greedy algorithm can output (by choosing arbitrary edge ordering) is exactly the set of $k$-spanners with girth at least $k+2$. By the Moore bound \cite{moore}, all graphs $H$ with girth at least $k+2$ and $n$ vertices have at most $m=O(n^{1+2/k})$ edges. Therefore, we can get a size upper bound of $k$-spanners possibly outputted by the greedy algorithm. There are also papers studying this upper bound on weighted graph setting \cite{add93,light2,light3,light4}. Then, a natural question arises: How large is the gap between the minimum (optimal) spanners and spanners outputted by the greedy algorithm? Here, the minimum $k$-spanner of a graph $G$ is its $k$-spanner with the fewest edges. In a seminal work \cite{existopt}, it is shown that the greedy algorithm is \emph{existentially optimal}. Namely, for any $n,k$, they show that there exists an $n$-vertex graph $G$ and its minimum $k$-spanner $H^*$ such that any $k$-spanner $H$ outputted by the greedy algorithm running on any $n$-vertex input graph is no larger than $H^*$. A more classical and difficult notion is \emph{universally optimal}. It requires the greedy algorithm to output a minimum $k$-spanner for any $n$-vertex graph. However, \cite{existopt} also shows that for many parameter settings $(n,k)$, the greedy algorithm for weighted graphs is `far from'  \emph{universally optimal} on stretch $k$ and $n$-vertex graphs. It means that the greedy algorithm cannot be \emph{universally optimal} on general parameters.

On the other hand, there is still hope that the greedy algorithm is \emph{universally optimal} on some specific 
parameters $(n,k)$. In fact, a simple observation is that when we set $k=n-1$, the $k$-spanners of $n$-vertex graphs are actually equivalent to their reachability preservers, and we can assert that the greedy algorithm is \emph{universally optimal} on parameters $(n,k=n-1)$ since reachability preservers are just spanning trees, which can be outputted by the greedy algorithm. It is natural to ask: for what parameters $(n,k)$, the greedy algorithm is \emph{universally optimal}? That is, it outputs a minimum $k$-spanner for any $n$-vertex graphs. In this paper, we will try to partially answer this question.
\subsection{Our Results}

We first define some terms, which is closely related to the performance of the greedy algorithm.
\begin{definition}
For any pair $(n,k)\in\mathbb{N}\times \mathbb{N}$, we define:
\begin{compactitem}
\item[(a)] If for any $n$-vertex graph, all of its minimum $k$-spanners have girth at least $k+2$, we call $(n,k)$ an \emph{extremely good} pair.
\item[(b)] If for any $n$-vertex graph, at least one of its minimum $k$-spanners has girth at least $k+2$, we call $(n,k)$ a \emph{good} pair.
\item[(c)] If for any $n$-vertex graph $G$, all of its minimum $k$-spanners can be outputted by the greedy algorithm on some edge ordering $\sigma$, we say the greedy algorithm is \emph{completely universally optimal} on $(n,k)$.
\item[(d)] If for any $n$-vertex graph $G$, there exists one of its minimum $k$-spanners that can be outputted by the greedy algorithm on some edge ordering $\sigma$, we say the greedy algorithm is \emph{universally optimal} on $(n,k)$.
\end{compactitem}
\end{definition}
From \Cref{girthequivgreedy}, we can connect the above notions
\begin{corollary}\label{gooddef}For any pair $(n,k)\in\mathbb{N}\times \mathbb{N}$, we have:
\item[(a)] If $(n,k)$ is \emph{extremely good}, then the greedy algorithm is \emph{completely universal optimal} on $(n,k)$.
\item[(b)] If $(n,k)$ is \emph{good}, then the greedy algorithm is \emph{universally optimal} on $(n,k)$.
\end{corollary}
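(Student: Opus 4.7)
The plan is to derive both parts of the corollary as essentially direct corollaries of \Cref{girthequivgreedy}, which has already established the key bridge: a $k$-spanner $H$ of $G$ with girth at least $k+2$ can always be produced by the greedy algorithm under some suitable edge ordering $\sigma$. Thus the corollary is purely a matter of translating the definitions of \emph{(extremely) good} (a statement about the girth of minimum spanners) into the definitions of \emph{(completely) universally optimal} (a statement about what the greedy algorithm can output).

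For part (a), I would fix an arbitrary $n$-vertex graph $G$ and let $H^*$ be any of its minimum $k$-spanners. By the assumption that $(n,k)$ is extremely good, every such $H^*$ has girth at least $k+2$. Applying \Cref{girthequivgreedy} to $H^*$ yields an edge ordering $\sigma$ on $E(G)$ under which the greedy algorithm outputs exactly $H^*$. Since $G$ and $H^*$ were arbitrary, the greedy algorithm is completely universally optimal on $(n,k)$.

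Part (b) is the same argument with an existential quantifier swapped for a universal one: given an arbitrary $n$-vertex graph $G$, the definition of \emph{good} guarantees at least one minimum $k$-spanner $H^*$ with girth $\geq k+2$, and invoking \Cref{girthequivgreedy} on this particular $H^*$ again produces an edge ordering $\sigma$ making the greedy algorithm output $H^*$. This is precisely what universal optimality on $(n,k)$ requires.

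There is essentially no obstacle here; the only point worth a sentence of care is to make sure we are using \Cref{girthequivgreedy} in the right direction (from a spanner of large girth to an ordering, not the reverse) and that we respect the quantifier structure in each of the four definitions. Because everything reduces to one application of \Cref{girthequivgreedy} per case, the proof can be kept to just a few lines.
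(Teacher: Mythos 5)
Your proposal is correct and is exactly the argument the paper intends: the corollary is stated without proof precisely because it follows immediately from \Cref{girthequivgreedy} by unwinding the quantifier structure of the four definitions, which is what you do. No gaps.
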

By \Cref{gooddef}, if we want to argue \emph{(completely) universal optimality} on specific $(n,k)$, it suffices to identify the category of $(n,k)$: extremely good, good or `not even good'.
\paragraph{Lower Bound}
First, we give a lower (negative) bound. It asserts that if $k$ is too small compared to $n$, then $(n,k)$ is not even a good pair.  
\begin{theorem}\label{goodlb}
For every sufficiently large $n$ and $O(1)<k<\frac{1}{3}n-O(1)$, $(n,k)$ is not a good pair.
\end{theorem}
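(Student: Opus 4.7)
The proof strategy is to exhibit, for each admissible $(n,k)$, an explicit $n$-vertex graph $G$ whose every minimum $k$-spanner contains a cycle of length at most $k+1$, hence has girth below $k+2$. A natural candidate is a gadget built around a short cycle $C = v_0 v_1 \cdots v_k v_0$ of length $c := k+1$: to each cycle edge $(v_i, v_{i+1 \bmod c})$ I would attach $t \geq 2$ ``witness'' vertices, each joined in $G$ to both endpoints $v_i, v_{i+1}$. The resulting graph uses $(k+1)(1+t)$ vertices, and for larger $n$ one pads with isolated vertices, which do not affect any spanner's structure.

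The key claim I would aim to prove is that every minimum $k$-spanner $H^*$ of $G$ contains the full cycle $C$, so the girth of $H^*$ is at most $c = k+1 < k+2$. The argument compares two natural candidates: (Option A) $H^*$ keeps all of $C$ and one of each witness's two pendant edges, using $c(1+t)$ edges; (Option B) $H^*$ omits some cycle edge $(v_i, v_{i+1})$, in which case each of the $t$ witnesses at position $i$ must retain \emph{both} of its pendant edges, because otherwise the witness's distance to the ``other'' cycle endpoint is $1 + (c-1) = k+1 > k$ via the unavoidable cycle detour, violating $\dist_{H^*} \leq k \cdot \dist_G$. Counting edges gives Option B $\geq c(1+t) + (t-1)$, strictly more than Option A whenever $t \geq 2$. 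For $k = 2$ one can verify directly that no further alternative beats Option A: any connected spanning structure on $\{v_0, v_1, v_2\}$ is essentially a path of diameter~$2$, which together with single-edge witnesses produces a witness distance of $3 > k$ at one of the three positions and forces the extra edges seen in Option B. Padding with isolated vertices then establishes the theorem for $n \geq 9$ and $k = 2$.

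The main obstacle, and the technical heart of the proof, is extending this analysis to larger $k$. When $k \geq 3$, competing spanner structures can match Option A's edge count exactly: for instance, one may replace $C$ by a Hamiltonian path on $\{v_0, \ldots, v_k\}$ of diameter $k$, pair it with a single witness using both pendant edges as a ``shortcut'' across the missing cycle edge, and keep one pendant edge at every other witness; a routine count shows this spanner has exactly $c(1+t)$ edges but girth $k+2$, defeating the goal. To rule out such alternatives the witness gadget must be refined --- one plausible refinement is to enrich each witness into a short path or small tree whose ``compressed'' form (when the adjacent cycle edge is present) is strictly cheaper than any configuration compatible with a missing cycle edge, so that every shortcut-style substitute incurs a genuine edge penalty. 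Establishing this uniformly for all $O(1) < k < n/3 - O(1)$, together with the small-scale padding needed to match an arbitrary $n$ while preserving the girth conclusion, forms the bulk of the remaining work.
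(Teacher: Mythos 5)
Your construction does not work, and the issue is exactly the one you flag yourself in the final paragraph. Take the gadget with cycle $C = v_0\cdots v_k v_0$ of length $c = k+1$ and $t \geq 2$ witnesses per cycle edge. Consider the competitor spanner that drops one cycle edge $(v_i, v_{i+1})$, lets a \emph{single} witness $w_1$ at position $i$ keep both pendant edges (re-closing the cycle through $w_1$, now of length $k+2$), and lets every remaining witness keep exactly one pendant edge. This uses $k$ cycle edges $+\ 2$ edges at $w_1$ $+\ ((k+1)t - 1)$ single pendant edges $= (k+1)(t+1)$ edges, identical to your Option~A, and it is a valid $k$-spanner for all $k \geq 3$ (every dropped witness edge is replaced by a path of length $3$ through $w_1$). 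Your Option~B undercounts because it forces \emph{all} $t$ witnesses at position $i$ to keep both edges, but once one such witness is retained the others reach $v_{i+1}$ in $3$ hops and need not be. Thus for every $k \geq 3$ the gadget admits a minimum $k$-spanner of girth $k+2$, so it proves nothing. You recognize this and gesture at ``enriching each witness into a short path or small tree,'' but that step — the entire content of the lower bound — is left unexecuted, and it is not clear the fix you sketch would succeed: any fat replacement for a dropped cycle edge must itself be cheap enough not to dominate, and it is precisely this balance that makes the argument delicate.

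For comparison, the paper's proof (\Cref{goodlbp}) uses a structurally different gadget: a $k$-cycle partitioned into six roughly equal segments, joined to six external ``arcs'' of length $2\lfloor k/6\rfloor + 9$, each arc spanning two consecutive segments. The arcs are long enough that if a spanner omits a cycle edge and also omits any edge on a covering arc, the resulting detour exceeds $k$; so dropping cycle edges forces whole arcs to be retained. A counting argument (\Cref{cllb}) over the six segments and six arcs shows that dropping edges from $t$ segments forces $t' > t$ complete arcs, giving a strict edge surplus. The hexagonal symmetry and the specific arc length are what make the counting close — the analogous single-cycle-plus-triangles gadget simply does not have enough structural slack to generate a penalty, as your own example demonstrates. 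Your padding with isolated vertices versus the paper's padding with an attached path is a harmless difference; the gap is entirely in the gadget itself.
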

In fact, the proof of \Cref{goodlb} is constructive: We can construct a graph $G$ with $n$ vertices such that all of its minimum $k$-spanners have girth at most $k+1$.  
\paragraph{Upper Bound}

Extremely good $(n,k)$ is the strongest definition, so we first find an upper (positive) bound about it. Our proof is algorithmic. Concretely, if $k$ is large enough compared to $n$, for any $k$-spanner $H$ of $n$-vertex graph $G$ with girth at most $k+1$, we can algorithmically remove some edges of $H$ such that $H$ is still a $k$-spanner after removal.
\begin{theorem}\label{exgoodub}
There is a deterministic polynomial time algorithm $A$ such that for all sufficiently large $n$ and any $k>\frac{3}{4}n+O(1)$, given any $n$-vertex graph $G$ and its $k$-spanner $H\subseteq G$, $A(G,H,k)$ outputs a subgraph $R\subseteq H$ of $H$ such that $R$ is a $k$-spanner of $G$ and $R$ has girth at least $k+2$.  
\end{theorem}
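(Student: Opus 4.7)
The plan is to design an iterative edge-deletion algorithm and establish a structural lemma guaranteeing that a safe deletion always exists. Initialize $R\leftarrow H$; while $R$ contains a cycle of length at most $k+1$, find a shortest such cycle $C=v_0v_1\cdots v_{\ell-1}$ and delete some edge $e_i=v_iv_{i+1}\in C$ for which $R\setminus\{e_i\}$ is still a $k$-spanner of $G$. Removability of a candidate $e_i$ is tested via the local criterion: a subgraph $R'\subseteq G$ is a $k$-spanner of $G$ iff $\dist_{R'}(x,y)\le k$ for every edge $(x,y)\in E(G)$, by concatenating the edgewise guarantees along a shortest $G$-path. Each iteration runs in polynomial time, the edge count strictly decreases, and when the procedure halts the remaining $R$ has girth at least $k+2$.

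\textbf{Key steps.} The substance is a structural claim: if $R\subseteq H$ is a $k$-spanner of $G$ with $k>\tfrac{3}{4}n+O(1)$ and $R$ has a shortest cycle $C$ of length $\ell\le k+1$, then some edge of $C$ is removable. Arguing by contradiction, assume every $e_i$ is \emph{critical}, so each $e_i$ has a blocking edge $(x_i,y_i)\in E(G)$ such that every $R$-path of length $\le k$ from $x_i$ to $y_i$ uses $e_i$. A short case check rules out $(x_i,y_i)\in E(R)$: the single-edge path $(x_i,y_i)$ avoids $e_i$ whenever $(x_i,y_i)\neq e_i$, and for $(x_i,y_i)=e_i$ the length-$(\ell-1)$ detour around $C$ gives a length-$\le k$ alternative. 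Hence $(x_i,y_i)\in E(G)\setminus E(R)$, a shortest $R$-path takes the form $x_i\to v_i\to v_{i+1}\to y_i$, and with $a_i:=\dist_R(x_i,v_i)$ and $b_i:=\dist_R(v_{i+1},y_i)$ one gets $a_i+b_i\le k-1$. The length-$(a_i+b_i+\ell-1)$ walk through the long arc of $C$ must also exceed $k$, yielding $a_i+b_i\ge k-\ell+2$.

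\textbf{Main obstacle.} The crux is turning these $\ell$ per-edge inequalities into a vertex-count contradiction under $k>\tfrac{3}{4}n+O(1)$. The plan is to exploit the high girth of $R$: BFS from any cycle vertex $v_i$ is tree-like up to depth $\lfloor(\ell-1)/2\rfloor$, so the $R$-branches hanging off $C$ that carry different blocker paths should be largely disjoint, with at least $k-\ell+2$ off-cycle vertices per branch. Adding the $\ell$ vertices of $C$ itself, one hopes to exceed $n$. A straightforward disjointness argument already handles roughly $k>n/2$; the gap up to $\tfrac{3}{4}n$ arises because blocker paths from different cycle edges can legitimately share off-cycle vertices without creating a shorter cycle. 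The main technical difficulty will be controlling this sharing: I would need to quantify how many blocker paths can pass through a single off-cycle vertex as a function of $\ell$, and then balance the two extremes---$\ell$ close to $k+1$, where $C$ itself consumes most of $V$ but each blocker path is short, versus $\ell$ small, where blocker paths are long but can overlap heavily in narrow branches. Matching the $\tfrac{3}{4}n$ threshold will come down to making this accounting tight, and I expect this to be the most delicate part of the proof.
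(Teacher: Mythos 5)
Your high-level framing (iterative edge deletion, the edgewise $k$-spanner criterion, and the critical-edge/blocking-pair setup with the two inequalities $a_i+b_i\le k-1$ and $a_i+b_i\ge k-\ell+2$) matches the paper's starting point, but the proposal stops short precisely where the real work begins. You acknowledge this yourself: "a straightforward disjointness argument already handles roughly $k>n/2$; the gap up to $\tfrac{3}{4}n$... will be the most delicate part." That delicate part is the theorem. As written, the proposal does not contain a proof of the $3/4$ threshold, and the obstacle you identify (overlapping blocker branches) is real and not resolved by the counting scheme you sketch.

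Concretely, the paper closes the gap by first splitting on the cycle length $\ell=L_H$. When $2(n-k)\le\ell\le k+1$, it builds a spanning subgraph $K$ out of the cycle plus BFS trees $T_u$ rooted at each cycle vertex, so $|E_K|=n$, and then runs a "coverage" argument: each cycle edge must be covered by a pair of trees whose combined size plus the long arc exceeds $k$, and a trimming lemma (\Cref{trim}) lets one repeatedly discard trees until an averaging inequality (\Cref{average}) gives a contradiction. When $\ell<2(n-k)$, the paper abandons per-edge counting entirely. It fixes a shortest path $p$ for some endangered pair with \emph{maximal} overlap $SC_H[a,b]$ with the cycle, considers only the two specific edges $e_1=(b,b')$ and $e_2=(a',a)$ at the ends of that overlap, and shows at least one is removable by partitioning $V$ into four sets $A,B,C,D$ (defined via connectivity after deleting the two protruding subpaths of $p$) and bounding $\dist_{H\setminus\{e_2\}}(U,V)$ for each pair of parts via careful path surgery (Claims \ref{aabb}--\ref{bdlb}). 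The choice of $e_1,e_2$ at the boundary of the maximal overlap is essential: it is what forces the overlap structure between competing blocker paths (Claims \ref{placee1}--\ref{doublenon}) that makes the four-way partition tractable. None of this appears in your proposal. Your plan of quantifying how many blocker paths share an off-cycle vertex and then balancing $\ell$ small versus $\ell$ large is a plausible-sounding heuristic, but you give no mechanism for bounding the overlap, and in particular your approach deletes an arbitrary removable cycle edge rather than the carefully chosen $e_1$ or $e_2$, which is what the paper's small-$\ell$ argument hinges on. So this is a genuine gap, not a different complete route.

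One smaller technical imprecision: you write that a shortest $R$-path from $x_i$ to $y_i$ "takes the form $x_i\to v_i\to v_{i+1}\to y_i$," but the subpaths to $v_i$ and from $v_{i+1}$ may themselves run along the cycle, so the overlap with $C$ can be a longer segment than the single edge $e_i$. The paper handles this via \Cref{conse}, which shows the cycle overlap of a shortest path through a danger edge is a single consecutive arc containing that edge. This matters because it is exactly this longer-arc structure that the paper's choice of $p$ (maximizing $|\Ccd(s,t,e)|$) exploits.
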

This immediately gives us an upper bound for extremely good $(n,k)$.
\begin{corollary}\label{goodexdef}
    For all sufficiently large $n$ and any $k>\frac{3}{4}n+O(1)$, $(n,k)$ is extremely good.
\end{corollary}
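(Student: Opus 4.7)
The plan is to derive \Cref{goodexdef} as an immediate consequence of \Cref{exgoodub}. The underlying observation is that any cycle of length at most $k+1$ in a $k$-spanner is ``redundant'' in the sense that at least one of its edges can be removed without losing the $k$-spanner property, so no minimum $k$-spanner can contain such a short cycle. The algorithm $A$ from \Cref{exgoodub} formalises precisely this kind of pruning in the regime $k > \frac{3}{4}n + O(1)$, and this is exactly what I will exploit.

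Concretely, I would fix an arbitrary $n$-vertex graph $G$ and an arbitrary minimum $k$-spanner $H^*$ of $G$. By the definition of extremely good, it suffices to show $H^*$ has girth at least $k+2$. I would argue by contradiction: suppose $H^*$ contains some cycle $C$ of length at most $k+1$. Then I would invoke the algorithm $A$ from \Cref{exgoodub} on input $\langle G, H^*, k\rangle$, which returns a subgraph $R \subseteq H^*$ that is still a $k$-spanner of $G$ and has girth at least $k+2$. Because $R$ has girth at least $k+2$ while $C$ has length at most $k+1$, at least one edge of $C$ must be missing from $R$, so $R \subsetneq H^*$ and hence $|R| < |H^*|$. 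Since $R$ is itself a $k$-spanner of $G$, this contradicts the minimality of $H^*$.

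Since $G$ and $H^*$ were arbitrary, the above argument shows that for all sufficiently large $n$ and $k > \frac{3}{4}n + O(1)$, every minimum $k$-spanner of every $n$-vertex graph has girth at least $k+2$, which is exactly the statement that $(n,k)$ is extremely good. There is essentially no obstacle beyond what has already been handled in \Cref{exgoodub}; the only point to verify in the deduction is the strictness $R \subsetneq H^*$, and this is automatic from the girth guarantee on $R$ together with the assumed short cycle in $H^*$. All of the genuinely difficult structural work — showing that one can always prune a short cycle in the high-stretch regime while maintaining the $k$-spanner property — is encapsulated in \Cref{exgoodub} and is not re-opened here.
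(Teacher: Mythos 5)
Your proposal is correct and matches the paper's own argument: both derive the corollary directly from \Cref{exgoodub} by contradiction, observing that if a minimum $k$-spanner had girth at most $k+1$, the algorithm would produce a strictly smaller $k$-spanner subgraph of girth at least $k+2$, contradicting minimality. The only cosmetic difference is that you spell out the strictness $R \subsetneq H^*$ by pointing to an edge of the short cycle missing from $R$, whereas the paper infers it from the girth increase; these are the same observation.
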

\begin{proof}
Suppose by contradiction that for some $n$-vertex $G$, it has some minimum $k$-spanner $H$ with girth at most $k+1$. Then by \Cref{exgoodub} we can construct its subgraph $R\subseteq H$ such that $R$ is also a $k$-spanner of $G$. Since $R$ has strictly larger girth than $H$, $R$ must be a strictly smaller $k$-spanner of $G$ than $H$, which contradicts the assumption that $H$ is the minimum $k$-spanner.
\end{proof}
We can also give an upper bound using a similar but more complicated algorithm for the weaker notion, good pair $(n,k)$.
\begin{theorem}\label{goodub}
    There is a deterministic polynomial time algorithm $A$ such that for all sufficiently large $n$ and any $k>\frac{2}{3}n+O(1)$, given any $n$-vertex graph $G$ and its $k$-spanner $H=(V,E_H)\subseteq G$, $A(G,H,k)$ outputs $R=(V,E_R)\subseteq G$ such that $R$ is a $k$-spanner of $G$ with girth at least $k+2$. Moreover, the size of 
 $R$ satisfies $|E_R|\leq|E_H|$.  
\end{theorem}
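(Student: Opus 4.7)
The plan is an iterative algorithm that, starting from $H$, repeatedly applies a local edge-swap operation that either strictly decreases the edge count or leaves the edge count unchanged while strictly increasing the girth, all the while preserving the $k$-spanner property. Since the edge count is initially at most $n^2$ and the girth is at most $n+1$, this terminates in polynomial time, and the final output has girth at least $k+2$ with $|E_R| \leq |E_H|$.

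The core subroutine goes as follows. Let $H$ have girth $g \leq k+1$, and let $C = (v_0, v_1, \ldots, v_{g-1})$ be a shortest cycle in $H$. For each edge $e_i := (v_i, v_{(i+1) \bmod g})$ of $C$, first test whether $H - e_i$ is still a $k$-spanner; this takes polynomial time by recomputing $\dist_{H - e_i}(x,y)$ for every $(x,y) \in E(G)$. If some $e_i$ passes this test, delete it: the cycle $C$ is destroyed and the edge count strictly drops, so recursing maintains the invariant that $|E_R| \leq |E_H|$.

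If no single deletion succeeds, then for each edge $e_i$ there must exist a \emph{witness} pair $(x_i, y_i) \in E(G)$ such that every $H$-path of length at most $k$ from $x_i$ to $y_i$ passes through $e_i$; in particular, after relabeling endpoints, $\dist_H(x_i, v_i) + \dist_H(v_{(i+1) \bmod g}, y_i) \leq k - 1$. My plan is to swap: delete $e_i$ and insert $(x_i, y_i)$ into $H$ for a carefully chosen index $i$. I would verify that the swap preserves the $k$-spanner property by showing that every affected $H$-path can be rerouted through the newly inserted edge $(x_i, y_i)$, and that the swap does not decrease the girth by showing that $(x_i, y_i)$ does not close a new cycle of length strictly less than $g$.

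The main obstacle --- and the source of the threshold $k > \frac{2}{3}n + O(1)$ --- is proving that some index $i$ always admits such a good swap. I would approach this via a structural counting argument: if every candidate swap failed (either because the result is not a $k$-spanner, or because $(x_i, y_i)$ would create a cycle shorter than $g$), then the cycle $C$ together with the $g$ witness-paths (from $x_i$ through $v_i$ and from $v_{(i+1) \bmod g}$ through $y_i$ for $i = 0, \ldots, g-1$) would collectively require more than $n$ distinct vertices in $V$. Concretely, $C$ uses $g$ vertices; each witness path contributes an additional short branch off the cycle; and because $H$ has girth $g$, these branches cannot overlap too heavily without themselves producing cycles shorter than $g$. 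Combined with the length bound $\dist_H(x_i, v_i) + \dist_H(v_{(i+1) \bmod g}, y_i) \leq k-1$, a careful accounting of the vertex contributions of $C$ and the witness branches produces the contradiction $n+1 \leq |V| = n$ precisely in the regime $k > \frac{2}{3}n + O(1)$. Compared to the $\frac{3}{4}n$ threshold of \Cref{exgoodub}, the relaxation to $\frac{2}{3}n$ arises because the freedom to insert a replacement edge (rather than demanding $R \subseteq H$) weakens the rigidity required of the cycle–witness structure.
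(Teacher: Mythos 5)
The high-level plan (iteratively break the shortest cycle, with pure deletion when possible and an edge swap otherwise, while guaranteeing the inserted edge is not in any small cycle) is indeed the right skeleton, and matches the paper's strategy of combining \Cref{largecycle} with a swap lemma (\Cref{gdsmallcycle}). However, the heart of the argument---the ``structural counting argument'' showing that a good swap always exists---is left entirely as a plan, and the mechanism you sketch is not the one that actually makes this work.

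You propose to derive a contradiction by arguing that the cycle $C$ together with all $g$ witness paths ``would collectively require more than $n$ distinct vertices.'' That packing argument cannot be pushed through: the witness paths for different cycle edges $e_i$ can overlap almost entirely (e.g., many cycle edges can share a single endangered pair $(x,y)$, or share long common prefixes), so there is no lower bound of the form ``total vertices $>n$'' available from a simple union bound. The paper's proof of \Cref{gdsmallcycle} uses a qualitatively different device. It does not try to pack all witnesses at once. Instead it fixes a specific cycle edge $e_2$ (determined by the maximum-cycle-overlap endangered path $p$ built in the common framework of \Cref{sec:common}), builds a shortest-path tree $T$ rooted at the cycle vertex $a$, projects every $e_2$-endangered pair $(u_i,v_i)$ to cycle vertices $(u_i',v_i')$, and then picks the single \emph{extremal} endangered pair $(u_g,v_g)$ (the one whose projection $v_g'$ is closest to $a$, ties broken by $u_g'$ closest to $a'$). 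After inserting only that one edge, it shows by a two-case analysis that every other endangered pair $(u_i,v_i)$ can reroute through $(u_g,v_g)$ with a path of length $\leq k$. The path-length bounds there are still ``counting-flavored'' ($|E_{p'}| < n - |A| - |B| - \cdots$ for disjoint avoided vertex sets), but the disjointness is established relative to the fixed tree $T$ and the fixed extremal pair, not by packing all witnesses simultaneously. Without an analogous extremal-choice-plus-rerouting step, I do not see how your counting argument reaches a contradiction in the claimed regime.

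Two smaller points. First, you need the precondition $L_H < 2(n-k)$ before the swap analysis; the regime $2(n-k) \leq L_H \leq k+1$ requires a separate argument (the paper's \Cref{largecycle}) to guarantee a pure deletion suffices, and it is genuinely nontrivial---your step of ``test whether $H - e_i$ is still a $k$-spanner'' does the algorithmic part, but the proof that some deletion succeeds in that range is itself a substantial lemma you would have to supply. Second, your termination invariant ``each swap strictly increases the girth'' is not correct as stated: $H$ may have several shortest cycles, so breaking one need not increase the girth. The correct invariant (used in the paper) is that the newly inserted edge lies in no cycle of length $\leq k+1$, hence is never deleted in a later iteration, and each deletion removes an original edge of $H$; this bounds the number of iterations without requiring the girth to increase monotonically.
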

Similarly, this immediately gives us an upper bound for good pairs
\begin{corollary}\label{goodupdef}
    For all sufficiently large $n$ and any $k>\frac{2}{3}n+O(1)$, $(n,k)$ is good.
\end{corollary}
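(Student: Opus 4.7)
The plan is to mirror the derivation of Corollary~\ref{goodexdef} from Theorem~\ref{exgoodub}, but using the weaker and more flexible guarantee of Theorem~\ref{goodub} in place of Theorem~\ref{exgoodub}. I start by fixing an arbitrary $n$-vertex graph $G$ and letting $H=(V,E_H)$ be an arbitrary minimum $k$-spanner of $G$. Since the definition of a \emph{good} pair only asks for the existence of a single minimum $k$-spanner with girth at least $k+2$, it suffices to exhibit one such spanner of $G$.

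Next, I feed $(G,H,k)$ into the algorithm $A$ promised by Theorem~\ref{goodub} and obtain a subgraph $R=(V,E_R)\subseteq G$ that is a $k$-spanner of $G$, has girth at least $k+2$, and satisfies $|E_R|\leq|E_H|$. Minimality of $H$ gives the reverse inequality $|E_R|\geq|E_H|$, so in fact $|E_R|=|E_H|$, and therefore $R$ itself is a minimum $k$-spanner of $G$ whose girth is at least $k+2$. This exhibits the required minimum spanner and, by definition, establishes that $(n,k)$ is good.

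There is no substantive obstacle here — essentially all of the technical content lives inside Theorem~\ref{goodub}, and the corollary just reinterprets the algorithmic guarantee in the language of good pairs. The one subtlety worth flagging, and the reason the argument cannot be copied verbatim from Corollary~\ref{goodexdef}, is that Theorem~\ref{goodub} only guarantees $R\subseteq G$ rather than $R\subseteq H$. Consequently we cannot conclude that the \emph{original} $H$ has large girth, only that some equally sparse replacement $R$ does. This is precisely why the conclusion here is ``good'' rather than ``extremely good'': the given minimum spanner $H$ may genuinely contain a short cycle, and we must be willing to rewire it via $A$ instead of merely deleting edges.
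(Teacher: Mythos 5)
Your proof is correct and is exactly the intended derivation: apply the algorithm from Theorem~\ref{goodub} to a minimum $k$-spanner $H$ to obtain a $k$-spanner $R$ with girth $\geq k+2$ and $|E_R|\leq|E_H|$, then use minimality of $H$ to conclude $R$ is itself a minimum $k$-spanner with the required girth. The paper states the corollary follows ``similarly'' without spelling it out, and your observation about why this only yields ``good'' rather than ``extremely good'' (because $R\subseteq G$ rather than $R\subseteq H$) is the right reading of the distinction.
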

\paragraph{Approximately Universally Optimal} By the previous results, we have known whether the greedy algorithm is universally optimal when $k<\frac{1}{3}n-O(1)$ or $k>\frac{2}{3}n+O(1)$. It remains open when $\frac{1}{3}n<k<\frac{2}{3}n$. To understand the gap between the greedy algorithm and minimum spanners in this range, we need to introduce a notion called `approximately universally optimal'.

Since we actually have a polynomial time algorithm in \Cref{goodub}, we can consider slacking the size requirement of outputted spanner $R$. Concretely, rather than requiring $|E_R|\leq |E_H|$, we can output a $k$-spanner $R$ with girth at least $k+2$ that is just `slightly' larger than $H$. In the classical approximation algorithm setting, we use a multiplicative slack factor. Formally, Suppose $H$ is the minimum $k$-spanner, if our outputted large-girth $k$-spanner $R$ has size $|E_R|\leq \alpha |E_H|$, we can call it $\alpha$-optimal.

However, there is a critical issue about the multiplicative approximation factor in our setting. Since we are considering the case when $k>\Omega(n)$, by Moore bound \cite{moore} the classical greedy algorithm will always give a $k$-spanner with girth at least $k+2$ and size at most $n^{1+O(1/n)}=n+O(\log{n})$. Since a trivial lower bound on the size of minimum $k$-spanner $H=(V,E_H)$ of a connected graph is $|E_H|\ge n-1$, any constant multiplicative approximation factor $\alpha=1+\Omega(1)$ will become trivial since the well-known greedy algorithm always outputs a $k$-spanner with size $n+o(n)\leq (1+o(1))|E_H|\leq \alpha |E_H|$. These spanners are usually called ultrasparse spanners \cite{ultrasparse,ultrasparse2}.

Therefore, we must introduce a more refined notion of `approximation'. The first observation is that since different connected components are isolated when considering spanners, we can only focus on connected graphs without loss of generality. 
 For any connected $n$-vertex graph $G=(V,E)$ and its minimum $k$-spanner $H=(V,E_H)$, since $|E_H|\ge n-1$ is a trivial lower bound, we can use $\alpha=\frac{|E_R|-n}{|E_H|-n}$ as a new definition for the approximation factor, where $R=(V,E_R)$ denotes the $k$-spanner of $G$ outputted by some approximation algorithm.
\begin{definition}
For any pair $(n,k)$ and constants $\alpha>1,\beta\ge 0$, if any $n$-vertex connected graph $G=(V,E)$ has a $k$-spanner $H=(V,E_H)$ with girth at least $k+2$ such that $|E_H|-n\leq \alpha(\mathsf{OPT}-n)+\beta$, we call $(n,k)$ an $(\alpha,\beta)$-\emph{approx good} pair. Here $\mathsf{OPT}$ denotes the number of edges in any minimum $k$-spanner of $G$.
\end{definition}
We can observe that the fact $(n,k)$ is $(\alpha,\beta)$-approx good is equivalent to `$(\alpha,\beta)$-approximately universal optimality' of the greedy algorithm. Namely, the greedy algorithm is `$(\alpha,\beta)$-approximately universally optimal' on $(n,k)$ iff for any $n$-vertex graph $G$, the greedy algorithm can output a $(\alpha,\beta)$-approximately minimum $k$-spanner of $G$ on some edge ordering $\sigma$. Using the above definitions, we can try to understand the power of the greedy algorithm when $\frac{1}{3}n\leq k\leq \frac{2}{3}n$. The first bound is the following
\begin{theorem}\label{2approxub}
    There is a deterministic polynomial time algorithm $A$ such that for all sufficiently large $n$ and any $k>\frac{4}{7}n+O(1)$, given any $n$-vertex graph $G$ and its $k$-spanner $H=(V,E_H)\subseteq G$, $A(G,H,k)$ outputs a $k$-spanner $R=(V,E_R)$ of $G$ with girth at least $k+2$. Moreover, the size of 
 $R$ satisfies $|E_R|-n\leq2(E(H)-n)+1$.  
\end{theorem}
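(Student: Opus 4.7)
The plan is to extend the iterative local-modification strategy underlying \Cref{goodub}, but now relax the per-step constraint on edge count so that we may spend a bounded number of extra edges per short cycle that we eliminate. Initialize $R \gets H$ and repeatedly look for a cycle $C$ of length $\ell \leq k+1$ in $R$. For each such cycle, we attempt a local rewrite that (i) preserves the $k$-spanner property, (ii) reduces the number of cycles of length $\le k+1$, and (iii) increases $|E_R|$ by at most $2$. When no such cycle remains, output $R$. The proof of correctness splits into a structural analysis (the rewriting step exists) and an accounting analysis (the total overhead is $2(|E_H|-n)+1$).

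For the structural step, given a short cycle $C$ in $R$, I would distinguish two cases. In the \emph{easy case}, some edge $e \in C$ can be removed while $R$ remains a $k$-spanner; here we mimic \Cref{goodub} and simply delete $e$, strictly decreasing $|E_R|$. In the \emph{hard case}, every edge $e = (u,v) \in C$ is \emph{forced}: there exists a witness edge $(s,t)\in E(G)$ whose shortest $s$--$t$ path in $R$ must traverse $e$ and whose alternative path in $R \setminus \{e\}$ exceeds $k$. The goal is to show that when $k > \tfrac{4}{7}n$, we can always find at most two edges in $E(G)\setminus E(R)$ whose addition provides sufficient shortcuts so that at least one edge of $C$ becomes removable, and so that the net effect strictly decreases the number of short cycles. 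The constant $\tfrac{4}{7}$ should emerge from a pigeonhole-style inequality: roughly, a short cycle of length $\le k+1$, together with two detour paths through candidate shortcut endpoints, must collectively span $\le n$ vertices; solving $3(k+1) \le 2n + O(1)$ (or a similar counting of vertex usages) yields exactly the threshold $k > \tfrac{4}{7}n + O(1)$.

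The main obstacle will be this hard case: showing that the forced witnesses are structured enough to guarantee two shortcut edges in $E(G)$, and that adding them does not merely replace one unbreakable short cycle with another. I would handle the latter subtlety by choosing the shortcut edges so that the new cycles they create are strictly shorter than $\ell$ (or by employing a well-founded monovariant such as the lexicographic multiset of short cycle lengths), guaranteeing termination. A complementary technical point is that the added shortcut edges must not violate the spanner property for any other pair $(s',t')$; this follows because shortcut additions only decrease distances in $R$.

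For the accounting, I would use an amortized argument based on the potential $\Phi(R) = |E_R| - (n-1)$, the number of independent cycles in $R$. The easy case decreases $\Phi$ by $1$ (we remove an edge without adding any). The hard case increases $|E_R|$ by at most $2$ but eliminates at least one small cycle from the small-girth cycle space, so by charging $2$ to each cycle destroyed we obtain a total budget of $2 \cdot (\text{number of short-cycle destructions})$ added edges. Since the initial number of short-cycle-inducing extra edges in $H$ is at most $|E_H|-n+1$, the total count of edges added throughout the algorithm is at most $2(|E_H|-n)+O(1)$, while $R$ still contains a spanning tree, giving $|E_R| - n \le 2(|E_H|-n) + 1$. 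Combining this with the preserved $k$-spanner invariant and the girth-$\ge k+2$ termination condition completes the proof.
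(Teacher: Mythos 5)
Your high-level plan matches the paper's template (iteratively break the smallest short cycle, add at most two compensating edges, amortize against the original excess $|E_H|-n$), but several concrete steps in the proposal are either wrong or hide the real difficulty.

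First, the arithmetic you sketch for the threshold is incorrect. The inequality ``$3(k+1)\le 2n+O(1)$'' gives $k\le \tfrac{2}{3}n+O(1)$, which is the threshold for the \emph{good-pair} regime (\Cref{goodub}), not $\tfrac{4}{7}n$. The actual source of $\tfrac{4}{7}$ in the paper's \Cref{2approxmod} is a sharper disjointness count: the replacement path $p'$ avoids three pairwise-disjoint edge sets (a chunk of the cycle, a chunk of $p[b,t]$, and an ``excess'' set coming from the technical lemma $\maex(\cdot,\cdot)$), leading to a bound of roughly $|E_{p'}|\le n-k+\tfrac{k}{4}+O(1)$, which is $\le k$ precisely when $k>\tfrac{4}{7}n+O(1)$. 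Without the excess lemma (\Cref{routelbplus}) quantifying how much of each witness path is ``new,'' you cannot recover this constant; your counting only reproduces the $\tfrac{2}{3}$ regime.

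Second, your termination and accounting arguments do not close. You propose ensuring that the newly created cycles are ``strictly shorter than $\ell$'' (or using a lexicographic multiset of cycle lengths); that is not obviously a well-founded monovariant (shorter cycles are a priori worse, and the multiset can grow without bound since each step may add two edges). The paper uses a cleaner and essential invariant: the added edges $(u_g,v_g),(u_h,v_h)$ satisfy $\dist_{H\setminus\{e_2\}}(u_g,v_g)>k$, so they lie on \emph{no} cycle of length $\le k+1$ in the updated graph. Consequently, every edge ever deleted is an original edge of $H$, never deleted twice, and the surviving original edges must remain a connected spanning subgraph. This bounds the number of iterations by $|E_H|-n+1$, and since each iteration nets at most $+1$ edge, $|E_R|\le |E_H|+(|E_H|-n+1)$, giving exactly $|E_R|-n\le 2(|E_H|-n)+1$. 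Your potential $\Phi(R)=|E_R|-(n-1)$ increases in the hard case, and you never independently bound the number of hard-case steps, so the $2(|E_H|-n)+1$ bound does not follow from what you've written.

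Third, the main technical content --- that two additional edges suffice when $k>\tfrac{4}{7}n+O(1)$ --- is precisely what you defer with ``the goal is to show...''. The paper handles this via a careful choice of $(u_g,v_g)$ (the $e_2$-endangered pair minimizing $|E_{SC_H[u_g',v_g']}|$), a second greedy choice $(u_h,v_h)$, the ``excess'' lemma \Cref{routelbplus}, and \Cref{threepaths}, culminating in a case analysis on how the tree paths of a remaining endangered pair $(u_i,v_i)$ interact with those of $(u_g,v_g),(u_h,v_h)$. This is the heart of the theorem; the rest is bookkeeping, and a proof that leaves it as a goal has a genuine gap.
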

As \Cref{gooddef,goodexdef}, we can use \Cref{2approxub} to derive the following
\begin{corollary}\label{2approxubcor1}
        For all sufficiently large $n$ and every $k>\frac{4}{7}n+O(1)$, $(n,k)$ is  $(2,O(1))$-approx good.
\end{corollary}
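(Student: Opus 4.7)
The plan is to obtain this corollary as an essentially immediate consequence of \Cref{2approxub}, in the same spirit as \Cref{goodexdef} is derived from \Cref{exgoodub} and \Cref{goodupdef} from \Cref{goodub}. What the corollary asserts is purely existential (the $(\alpha,\beta)$-approx good property), while \Cref{2approxub} already supplies a polynomial-time algorithm that produces a witness; so the work is simply to unpack the definitions and verify that the size bound lines up.

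Concretely, I would fix any sufficiently large $n$ and any $k > \frac{4}{7}n + O(1)$, and let $G=(V,E)$ be an arbitrary $n$-vertex connected graph. Let $H^* = (V, E_{H^*})$ be any minimum $k$-spanner of $G$, so $|E_{H^*}| = \mathsf{OPT}$. In particular $H^*$ is a $k$-spanner of $G$, so it is a valid second input to the algorithm $A$ guaranteed by \Cref{2approxub}. Running $A(G, H^*, k)$ yields a subgraph $R = (V, E_R)$ which is a $k$-spanner of $G$ with girth at least $k+2$ and with
\[
|E_R| - n \;\leq\; 2(|E_{H^*}| - n) + 1 \;=\; 2(\mathsf{OPT} - n) + 1.
\]
This is exactly the $(2, 1)$-approx good condition, hence in particular $(2, O(1))$-approx good, so the corollary follows.

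There is no genuine obstacle in this step: the only thing to be careful about is that the definition of $(\alpha,\beta)$-approx good is framed in terms of the \emph{existence} of a large-girth $k$-spanner satisfying the size bound for an arbitrary input graph $G$, which is exactly what feeding the minimum $k$-spanner into \Cref{2approxub} provides. The restriction to connected graphs is harmless since, as noted in the discussion preceding the definition, distinct connected components contribute independently to both $\mathsf{OPT}$ and the output size, so the bound for connected graphs extends coordinate-wise to the general case. Thus all of the real content of the corollary is already contained in \Cref{2approxub}, and the proof reduces to the single line above.
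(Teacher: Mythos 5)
Your proposal is correct and matches the paper's intended derivation: the paper states the corollary follows from \Cref{2approxub} exactly as \Cref{goodexdef} follows from \Cref{exgoodub}, namely by feeding a minimum $k$-spanner $H^*$ into the algorithm and reading off the size bound. Your unpacking of the definitions and the remark about reducing to the connected case are both accurate, so there is nothing to add.
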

We also generalize \Cref{2approxub} (but use a less fine-grained way to analyze it) to smaller $k$ but with a larger approximation factor as follows:
\begin{theorem}\label{moreapproxub}
    For $t\in\{1,2,3,4\}$, there is a deterministic polynomial time algorithm $A$ such that for all sufficiently large $n$ and any $k>\frac{4t}{9t-4}n+O(1)$, given any $n$-vertex graph $G$ and its $k$-spanner $H=(V,E_H)\subseteq G$, $A(G,H,k)$ outputs a $k$-spanner $R=(V,E_R)$ of $G$ with girth at least $k+2$. Moreover, the size of 
 $R$ satisfies $|E_R|-n\leq2t^2(E(H)-n)+2t^2$.  
\end{theorem}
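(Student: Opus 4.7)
The plan is to generalize the iterative cycle-breaking strategy from Theorem \ref{2approxub}, parameterizing it by $t$ so as to trade a worse approximation factor ($2t^2$) for a weaker stretch threshold (roughly $\frac{4t}{9t-4}n$, which decreases from $\frac{4}{5}n$ down to $\frac{1}{2}n$ as $t$ ranges over $\{1,2,3,4\}$). Starting from $R := H$, the algorithm repeatedly identifies a cycle $C$ of length at most $k+1$ in $R$ and performs a local surgery: it removes one edge of $C$ and adds a bounded number of shortcut edges drawn from $E \setminus E_R$ to restore the $k$-spanner property. In the $t$-parameterized version, the budget of shortcuts added per broken cycle is allowed to grow to roughly $2t^2$, amortized against the excess edge in $H$ that originally created the cycle.

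The core of the proof is a structural lemma: whenever $k > \frac{4t}{9t-4}n + O(1)$ and $R$ contains a cycle of length at most $k+1$, there exists a local surgery that strictly decreases a cycle-based potential while preserving the $k$-spanner property, at a cost of at most $O(t^2)$ new edges per removed edge. The threshold $\frac{4t}{9t-4}n$ should emerge from a detour-length inequality: rearranging, $\frac{k}{n-k-1} > \frac{4t}{5} - O(1/n)$, which would guarantee that when we remove an edge $(u,v)$ of a length-$\ell$ cycle, the cyclic detour of length $\ell - 1$, combined with up to $t$ hops through vertices outside the cycle (supplied by the added shortcuts), still fits within the stretch budget $k \cdot \dist_G(x,y)$ for every problematic pair $(x,y)$ whose original shortest path in $R$ used $(u,v)$.

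After establishing the structural lemma, the rest is a standard potential argument. Each iteration strictly decreases the number of short cycles (or a refined potential weighing them by length), giving termination in polynomial time. The total number of edges added across the run of the algorithm is bounded by $2t^2$ times the number of initial excess edges $|E_H| - (n-1)$, yielding $|E_R| - n \leq 2t^2(|E_H| - n) + 2t^2$ once the additive initialization overhead is accounted for. Correctness of the output (that $R$ is indeed a $k$-spanner with girth at least $k+2$) follows by induction using the invariant maintained by each surgery step.

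The main obstacle is the structural lemma: a case analysis on the types of pairs whose shortest paths are disrupted by the edge removal, and a verification that $O(t^2)$ shortcut edges suffice in every case under the threshold $\frac{4t}{9t-4}n$. Because the approximation factor is the coarser $2t^2$ as opposed to the sharper constant $2$ in Theorem \ref{2approxub}, each case should admit a more wasteful shortcut selection, so the analysis is less delicate --- consistent with the theorem statement's phrasing ``less fine-grained''. I expect the proof to have the same backbone as Theorem \ref{2approxub} but with the threshold tuned so that the cyclic detour plus up to $t$ shortcut hops always dominates the required stretch, with the constant-factor blowup $2t^2$ absorbed into the approximation bound rather than being fought for.
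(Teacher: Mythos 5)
Your proposal correctly reuses the iterative cycle-breaking loop (break the smallest short cycle, add a bounded number of edges, amortize against excess edges of $H$) and correctly identifies that the per-iteration edge budget $2t^2$ is what drives the $2t^2(|E_H|-n)+2t^2$ bound. However, you are missing the key structural idea that makes the per-iteration step go through, and your arithmetic for the threshold is wrong.

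The concrete mechanism in the paper is a \emph{bucket decomposition} of the set of $e_2$-endangered pairs, not ``up to $t$ hops through vertices outside the cycle.'' After fixing the cycle-edge $e_2$ to delete (from the framework of \Cref{sec:common}), every endangered pair $(u_k,v_k)$ has its shortest path hit the cycle at two vertices $u_k',v_k'$; the paper partitions the endangered pairs into $t^2$ classes $R_{i,j}$ according to which length-$B$ interval of the cycle-segment $|E_{T[u_k',a]}|$ and $|E_{T[v_k',a']}|$ fall in, where $B \approx |E_{SC_H[a,b]}|/t$. Within a single bucket, all the hitting points lie within distance $B$ of each other on the cycle, which is precisely the quantitative slack exploited to re-run the \Cref{2approxmod} argument with the improved stretch threshold $k>\frac{4t}{9t-4}n+O(1)$ (the term $(L_H-|E_{SC_H[a,b]}|)$ in the length bound improves to $(L_H - B)$, and similarly for the two-sided version). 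Each bucket contributes at most $2$ edges, giving $2t^2$ total. The paper also processes the $t^2$ buckets \emph{sequentially} (not in parallel) so that each addition is checked against the current graph, which is necessary to ensure no new short cycle is created.

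Your attempt to rederive the threshold via $\frac{k}{n-k-1}>\frac{4t}{5}-O(1/n)$ is also incorrect: $k>\frac{4t}{9t-4}n$ rearranges to $\frac{k}{n-k}>\frac{4t}{5t-4}$, not $\frac{4t}{5}$, and more to the point the threshold does not come from a detour-length inequality of the type you describe; it comes from a path-length upper bound of the form $|E_{p'}| \le n - k + (\frac{1}{t}-\frac{1}{2})|E_{SC_H[a,b]}| + O(1)$ (and its case-(2) analogue), combined with $|E_{SC_H[a,b]}|\le \frac{k+1}{2}$. Without the bucket decomposition there is no obvious reason why only $O(t^2)$ shortcut edges should suffice, and without the correct length estimate the particular constants $\frac{4t}{9t-4}$ (and hence the statement) cannot be recovered. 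So the proposal, as written, has a genuine gap at its core lemma.
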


When $t=2$, the requirement for $k$ is just $k>\frac{4}{7}n+O(1)$ which is the same as \Cref{2approxub}, but the approximation factor $2t^2=8$ is worse. \Cref{moreapproxub} is useful when $t=\{3,4\}$, which gives us $k>\frac{12}{23}n+O(1)$ and $k>\frac{1}{2}n+O(1)$ bounds respectively.

In fact, \Cref{moreapproxub} is a `bucket-decomposition' generalized version of \Cref{2approxub}, and $t$ is a decomposition parameter that can be set to any positive integer. However, setting $t\ge 5$ cannot derive new approximation bounds so larger $t$ is useless in the current technique. We give a detailed explanation in \Cref{cannotbeyond}. 
\begin{corollary}
        For $t\in\{1,2,3,4\}$, for all sufficiently large $n$ and any $k>\frac{4t}{9t-4}n+O(1)$, $(n,k)$ is $(2t^2,O(1))$-approx good.
\end{corollary}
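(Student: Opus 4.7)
The plan is to derive the corollary directly from \Cref{moreapproxub} by instantiating the algorithm on a minimum $k$-spanner. Fix $t\in\{1,2,3,4\}$, a sufficiently large $n$, and $k>\frac{4t}{9t-4}n+O(1)$. Let $G=(V,E)$ be an arbitrary $n$-vertex connected graph and let $H^*$ be any minimum $k$-spanner of $G$, so that $|E(H^*)|=\OPT$. Observe that $H^*$ is in particular a $k$-spanner of $G$, so it is a valid input for algorithm $A$ guaranteed by \Cref{moreapproxub}.

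Running $A(G,H^*,k)$ produces a $k$-spanner $R=(V,E_R)$ of $G$ with girth at least $k+2$ and satisfying
\[
|E_R|-n \leq 2t^2\bigl(|E(H^*)|-n\bigr)+2t^2 \;=\; 2t^2(\OPT-n)+2t^2.
\]
Setting $\alpha=2t^2$ and $\beta=2t^2=O(1)$, this is exactly the bound required by the definition of $(\alpha,\beta)$-\emph{approx good}: for every $n$-vertex connected $G$ we have exhibited a $k$-spanner (namely $R$) of girth at least $k+2$ whose edge count is within the prescribed slack of $\OPT$.

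The only minor subtlety is the restriction to connected graphs in the definition; this is handled by applying the argument to each connected component separately, since spanners decompose across components and the additive constants sum to $O(1)$ for a fixed $t$. Consequently $(n,k)$ is $(2t^2,O(1))$-approx good. The main obstacle lies entirely in \Cref{moreapproxub} itself, which we are allowed to invoke as a black box; once that theorem is available, the corollary is a one-line instantiation with $H=H^*$.
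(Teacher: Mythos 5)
Your proof is correct and takes the same route the paper intends (compare the proof of \Cref{goodexdef}, which applies the analogous theorem's algorithm to a minimum spanner). The instantiation $H=H^*$ in \Cref{moreapproxub} immediately gives a girth-$\geq k+2$ spanner $R$ with $|E_R|-n\leq 2t^2(\OPT-n)+2t^2$, which is exactly the $(2t^2,O(1))$-approx good condition. One small remark: the paragraph about handling disconnected graphs is harmless but unnecessary, since the paper's definition of $(\alpha,\beta)$-approx good already quantifies only over connected $n$-vertex graphs.
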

\begin{remark}
All proofs of our upper (positive) bounds are `partially constructive'. Informally, \Cref{exgoodub,goodub,2approxub,moreapproxub} all provide algorithms to enlarge the girth of an input $k$-spanner $H$ to at least $k+2$, and guarantee the outputted graph is still a $k$-spanner whose size won't become too large compared to the size of $H$. These algorithms can be appended behind some other good algorithms that don't care about girth as `girth enlarger'. For example, suppose there is an algorithm $B$ that constructs very sparse spanners but doesn't guarantee the girth of the outputted spanner, we can feed its output spanner into our algorithms $A$ such that the final spanner outputted by $A$ has large girth, and at the same time it is only `slightly' larger than $B$'s original output.
\end{remark}
\begin{figure}[t!]
\centering 
\includegraphics[width=1.0\textwidth]{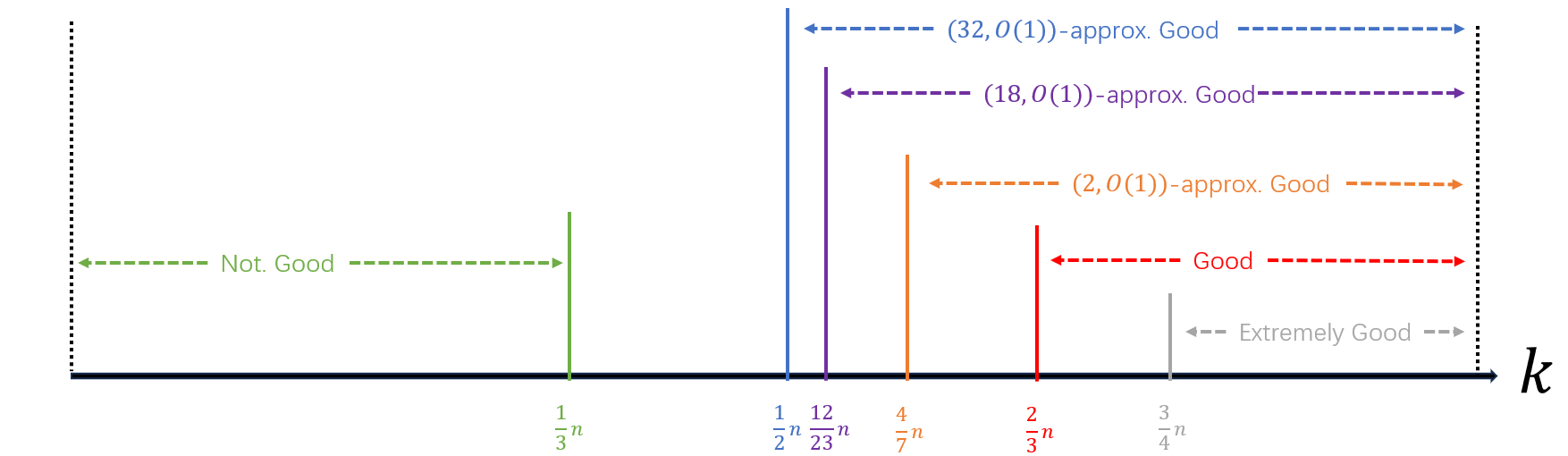} 
\caption{Different Bounds} 
\label{hier} 
\end{figure}
To the best of our knowledge, our proofs are built upon new ideas of analyzing the structure of spanners using the girth information. We hope these ideas could help us to understand the relation between girth and sparse spanners. \Cref{hier} is a summary of our main results.
\subsection{Notations}
For any $n\in\mathbb{N}$, let $[n]=\{1,\dots,n\}$. For any undirected graph $G$, we use $V_G$ and $E_G$ to denote its vertex-set and edge-set unless otherwise stated. For any $(u,v)\in V_G\times V_G$, we use $\dist_{G}(u,v)$ to denote the length of the shortest path between $u,v$ in $G$. For any vertex-subset $U\subseteq V_G$, we use $G[U]$ to denote the induced subgraph of $G$ on $U$. For any two graphs $G,H$, we use $G\backslash H$ to denote the graph $(V_G\backslash V_H,E_G\backslash E_H)$. We can similarly define $G\backslash V_H:=(V_G\backslash V_H,E_G)$ and $G\backslash E_H:=(V_G,E_G\backslash E_H)$. For a simple path $p$ and any two vertices $s,t\in V_p$, we use $p[s,t]$ to denote the sub-path between $s$ and $t$. $p[s,t),p(s,t],p(s,t)$ are similarly defined. We use $SC_G$ to denote an arbitrary smallest cycle of $G$, and $L_G:=|V_{SC_G}|$. 
\subsection{Paper Organization and Overview}
In \Cref{sec:lb}, we give a brief illustration of the negative lower bound regarding the greedy algorithm as stated in \Cref{goodlb}. Then, in \Cref{sec:good} we give detailed proofs of positive upper bounds \Cref{exgoodub} and \Cref{goodub}. Finally, we will prove \Cref{2approxub} and \Cref{moreapproxub} in \Cref{sec:approx}.

Let's give an overview of our proof strategy used in establishing positive upper bounds. Let $H$ be a minimum (in fact arbitrary) $k$-spanner of $G$ and $SC_H$ denote its smallest cycle. Let $L_H=|V_{SC_H}|$. If $L_H\ge k+2$, our girth lower bound has been satisfied and we are done. Otherwise, we want to find a cycle-edge $e\in E_{SC_H}$ and at most $m$ edges $e_1,\dots,e_m\in E_G$, such that we can break the smallest cycle $SC_H$ by removing $e$, and then preserve the property of $k$-spanner by adding these $m$ edges. Formally, let $H'$ denote $\left(H\backslash\{e\}\right)\cup \{e_1,\dots,e_m\}$, we want 
\begin{itemize}
\item $H'$ is a $k$-spanner.
\item $e_1,\dots,e_m$ don't create any new cycle with length at most $k+1$ in $H'$ compared to $H$.
\end{itemize}
If for some pair $(n,k)$, the above condition can be met, then the greedy algorithm is $(m,O(m))$-Approximately universally optimal on $(n,k)$. (See the proof of \Cref{2approxubit} for details.) Therefore, our goal is to prove the above statement in different $m$ and $(n,k)$ settings. First, in \Cref{sec:largecycle} we will show that as long as $L_H>2(n-k)$, we can achieve $m=0$ regardless. It suffices to consider the case when $L_H\leq 2(n-k)$: In \Cref{sec:common}, we give a common structural framework required in all following proofs. Then, we work on different parameter settings as follows
\begin{itemize}
\item In \Cref{sec:ex}, we will prove when $k>\frac{3}{4}n+O(1)$, we can achieve $m=0$, and derive \Cref{exgoodub}.
\item In \Cref{sec:goodpair}, we will prove when $k>\frac{2}{3}n+O(1)$, we can achieve $m=1$, which confirms \Cref{goodub}.
\item In \Cref{sec:2approx}, we will prove when $k>\frac{4}{7}n+O(1)$, we can achieve $m=2$ and derive \Cref{2approxub}. The proof strategy is a generalization of \Cref{sec:goodpair}
\item In \Cref{sec:moreapprox}, we will prove when $k>\frac{12}{23}n+O(1)$, we can achieve $m=18$, and when $k>\frac{1}{2}n+O(1)$, we can achieve $m=32$. These two results follow from a `bucket-decomposition' generalized version of \Cref{2approxub}. 

\end{itemize}
\section{Proof for Lower Bound}\label{sec:lb}
In this section, we give a constructive proof for \Cref{goodlb}. The following theorem implies \Cref{goodlb}.
\begin{theorem}[Restated, \Cref{goodlb}]\label{goodlbp}
For sufficiently large $n$ and $O(1)<k<\frac{1}{3}n-O(1)$, we can construct an $n$-vertex graph $G$ such that all of its minimum $k$-spanners have girth at most $k$.  
\end{theorem}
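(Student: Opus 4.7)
The plan is to construct an explicit $n$-vertex graph $G$ and show via case analysis that every minimum $k$-spanner of $G$ contains a cycle of length at most $k$. Since the hypothesis $k<\tfrac{1}{3}n-O(1)$ gives $n\ge 3k+\Omega(1)$, there is room for three $k$-sized ``forcing'' substructures arranged around a short central cycle; the fact that the threshold is $n/3$ strongly suggests a three-part construction. A natural candidate for $G$ is a graph whose core is a short cycle $C$ of length $\ell\le k$, to which three long internally disjoint attachments (paths, or more elaborate rigid gadgets) are glued, using up the remaining $n-\ell$ vertices. The attachments must be calibrated so that every one of their edges is ``rigid'' (removing it creates an unfixable distance violation), while the core cycle can only be spanned efficiently when all of its edges are retained.

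With the construction in hand, the main analytical step is a dichotomy on an arbitrary minimum $k$-spanner $H$ of $G$. Either (i) $H$ contains every edge of $C$, in which case the girth of $H$ is already at most $\ell\le k$ and we are done, or (ii) $H$ omits at least one edge $e=(u,v)$ of $C$. In Case (ii), the $k$-spanner condition forces $\dist_H(u,v)\le k$, but by the design of the graph the only short $u$-$v$ alternative in $G$ has to route through the attachments, which in turn forces $H$ to contain several compensating edges that are not needed in Case (i). An edge-counting comparison would then show that $|E(H)|$ in Case (ii) strictly exceeds $|E(H)|$ in Case (i), contradicting the minimality of $H$ in Case (ii). Thus every minimum $k$-spanner belongs to Case (i), which yields the desired girth bound.

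The delicate balance to maintain is the following. The arms (or gadgets) must be long enough—roughly $k$ edges each—so that distances along them barely meet the $k$-spanner stretch, making every arm edge irreplaceable and pinning down $H$ on the non-core vertices. At the same time, $C$ must be short enough ($|C|\le k$) that dropping a single edge of $C$ is locally feasible, yet the resulting detour must be expensive enough once the arm structure is accounted for. A symmetric three-arm design makes the compensation count naturally scale with the number of arms, which is where the factor of three enters.

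The main obstacle will be quantifying the ``savings versus compensation'' inequality in Case (ii) precisely enough that it is strict, not an equality: one needs to rule out alternative rearrangements (e.g.\ dropping a different edge of $C$, or swapping which arm absorbs the compensation) that might produce an equally small spanner with girth $\ge k+2$. This is precisely why the threshold is $n/3$: three simultaneously rigid arms of length about $k$ only fit when $n\gtrsim 3k$, so for $k\ge n/3$ at least one arm has to be shortened, the compensation inequality fails, and large-girth minimum spanners become available---matching the positive results in \Cref{goodub,exgoodub}.
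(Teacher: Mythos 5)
Your high-level plan is on the right track: a short central cycle $C$ (length $\le k$) surrounded by calibrated attachments, followed by a dichotomy — if a minimum spanner $H$ contains all of $C$, the girth bound is immediate, and otherwise an edge-counting argument should show $H$ is not actually minimum. The paper's proof follows that shape. But there are two concrete gaps in your plan that the paper's construction is specifically designed to close.

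First, you propose that ``every one of their [the attachments'] edges is rigid.'' If this were literally true, the compensation mechanism you need would vanish: a minimum spanner would include all attachment edges in every case, so dropping a cycle edge would only ever \emph{decrease} the edge count, not increase it. Since $|C|\le k$, deleting a cycle edge $e=(u,v)$ keeps $\dist(u,v)\le |C|-1\le k-1$ via the rest of $C$, so dropping $e$ is feasible unless it breaks the $k$-spanner constraint for some \emph{attachment} pair. That is exactly the missing mechanism: the attachments cannot be fully rigid; each one must have precisely one ``slack'' edge that can be deleted when $C$ is intact, but becomes forced once a covered cycle segment is broken. In the paper's construction each arc has one green edge playing this role. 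Removing a cycle edge then costs you the slack on all arcs covering that segment. That is what makes the compensation strict rather than an equality.

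Second, the ``three arms of length $\approx k$'' shape is not quite right, and not merely a matter of taste. For the compensation count to strictly exceed the savings, the attachments must be arcs (paths between two cycle vertices), and they must \emph{overlap}: if three arcs covered three disjoint cycle segments, then breaking $t$ segments would unlock exactly $t$ arcs, and the savings would equal the compensation, not fall short of it. The paper splits $C$ (length $k$) into six segments and glues on six arcs, each of length $\approx k/3 + O(1)$, so that each segment is covered by two arcs. Breaking edges from $t\in[5]$ segments forces $t'>t$ arcs to become fully rigid, which is where the strict inequality $|E_{H'}|\ge n'+1$ comes from. The resulting vertex count is $n' = k + 6\bigl(\lfloor k/6\rfloor + O(1)\bigr) = 3k + \Theta(1)$; the factor $3$ arises from $1 + 6\cdot\tfrac{1}{3}$ under this hexagonal overlap, not from three long arms. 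Without the overlap structure and the slack edges, the ``savings vs.\ compensation'' inequality you aim for will not be strict, and the argument does not close.
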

\begin{figure}[t!]
\centering 
\includegraphics[width=0.4\textwidth]{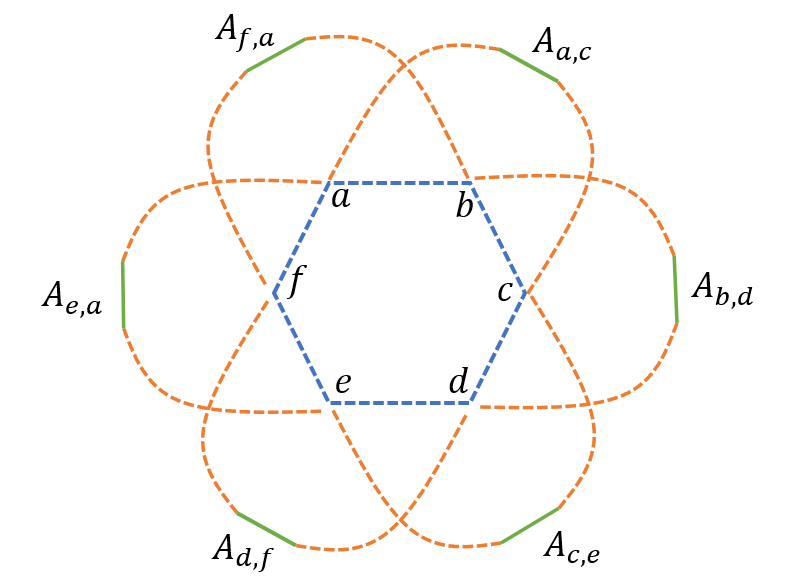} 
\caption{Construction of $G_k$. The dotted blue hexagon represents the $k$-cycle. The dotted orange curves are paths, and the green solid lines are edges. Two orange paths connected by a green edge form an arc. The specific positions of green edges on the arcs don't matter.} 
\label{gk} 
\end{figure}
\begin{proof}
For every sufficiently large $k\ge O(1)$, we can construct a graph $G_k$ as described in \Cref{gk}. The blue $k$-cycle $C=(V,E_C)$ is splited into six segments $C_{a,b},\dots,C_{f,a}$, such that each segment is a path with length $\lfloor \frac{1}{6}k\rfloor$ or $\lfloor \frac{1}{6}k\rfloor+1$. There are also paths outside the cycle composed of green and orange edges. These paths are denoted by $\mathcal{A}=\{A_{a,c},A_{c,e},A_{e,a},A_{b,d},A_{d,f},A_{f,b}\}$ and have lengths $p=2\lfloor\frac{1}{6}k\rfloor+9$ each. We call these paths `arcs'. By simple calculation, $G_k$ has $n'=3k+\Theta(1)$ vertices.

Consider the subgraph $H$ of $G_k$ that consists of only blue and orange edges. Obviously $H$ is a $k$-spanner of $G_k$ with only $n'$ edges. $H$ contains the blue cycle so it has girth $k$. Then we will argue that any $k$-spanner losing any blue edge has than $n'$ edges, which proves that any minimum $k$-spanner of $G_k$ must have girth at most $k$.

We say $A_{a,c}$ covers two cycle-segments $C_{a,b}$ and $C_{b,c}$ and defines the notion `cover' similarly for other arcs and cycle-segments. For any minimum $k$-spanner $H'=(V,E_{H'})$ of $G_k$, we want to prove $H'$ has girth at most $k$. Suppose by contradiction that $H'$ has girth at least $k+1$, then $E_{H'}$ doesn't have some cycle-edge $e\in E_C$, we first identify which cycle-segment $e$ comes from. Here without loss of generality, we suppose $e\in E_{C_{a,b}}$. Then, for any edge $e'=(u,v)\in E_{A_{a,c}}$, if $e'$ is also not in $E_{H'}$, we have
\begin{align}
\dist_{H'}(u,v)&\ge \dist_{G\backslash\{e,e'\}}(u,v)\ge |E_{A_{a,c}}|-1+|E_{C_{c,d}}|+|E_{C_{d,e}}|+|E_{C_{e,f}}|+|E_{C_{f,a}}|\\
&\ge (2\lfloor\frac{1}{6}k\rfloor+9)-1+4\lfloor\frac{1}{6}k\rfloor\ge k+1\label{eq1}
\end{align}
The second inequality above can be derived from a case analysis of possible routes. We omit details here. 

\Cref{eq1} shows that, if $A_{a,c}\nsubseteq H'$, then $H'$ is not a $k$-spanner, which is a contradiction. Therefore, there must be $A_{a,c}\subseteq H'$, the similar analysis also derives $A_{f,b}\subseteq H'$. 

\begin{claim}\label{cllb}
For any $k$-spanner $H'$ of $G_k$ such that $C\nsubseteq H'$, there is $|E_{H'}|\ge n'+1$
\end{claim}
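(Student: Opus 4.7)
The plan is to track which edges of $G_k$ must be present (or absent) in $H'$, then bound $|E_{H'}|$ by direct edge counting. Let $S$ be the set of cycle segments $C_{x,y}$ from which at least one edge is missing in $H'$; since $C\nsubseteq H'$ we have $|S|\ge 1$. First I would establish two structural observations about $S$. Every interior vertex of a cycle segment has degree two in $G_k$ (its two cycle neighbors), since arcs attach only at the special vertices $\{a,b,c,d,e,f\}$. Consequently, if some segment contained two missing edges, the middle sub-path would contain regular vertices with no path to anything else in $H'$, violating spannerhood; so at most one cycle edge is missing per segment, and the number of missing cycle edges equals $|S|$. Second, if $|S|=6$ then in $H'$ the special vertices split into two components $\{a,c,e\}$ and $\{b,d,f\}$, since every segment linking the two triads is cut and the arcs themselves only join vertices within one triad or the other; hence $|S|\le 5$.

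Next I would identify the \emph{forced arcs}. By the rotational symmetry of the calculation in \Cref{eq1}, for each $C_{x,y}\in S$ and any edge $e'$ belonging to one of the two arcs covering $C_{x,y}$, removing both the missing cycle edge in $C_{x,y}$ and $e'$ forces the $H'$-distance between the endpoints of $e'$ to exceed $k$. Hence each of these two arcs (including its green edge) must lie entirely in $H'$. Let $j$ denote the total number of arcs that are forced in this way.

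The combinatorial heart of the argument is the inequality $j\ge |S|+1$. The plan is to view the six segments as the vertex set of an auxiliary 6-cycle $\Gamma$, in which two segments are joined by an edge iff they share a common covering arc; the six arcs then correspond bijectively to the six edges of $\Gamma$. Under this correspondence $j$ equals the number of edges of $\Gamma$ incident to $S$, which is $2|S|-e_{\Gamma}(S)$, where $e_{\Gamma}(S)$ counts the edges of $\Gamma$ inside $S$. Because $|S|\le 5$, the induced subgraph $\Gamma[S]$ cannot be all of $\Gamma$ and is therefore a disjoint union of paths, so $e_{\Gamma}(S)\le |S|-1$. This yields $j\ge 2|S|-(|S|-1)=|S|+1$.

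Finally I would count edges. Each forced arc contributes all $p$ of its edges (every orange edge plus the green edge), and each non-forced arc contributes at least $p-1$ edges, since an arc cannot have both its green edge absent and an interior orange edge absent without disconnecting an interior arc vertex by the same degree argument used for cycle segments. Therefore
\[
|E_{H'}| \;\ge\; (k-|S|) + j\cdot p + (6-j)(p-1) \;=\; \bigl(k+6(p-1)\bigr) + (j-|S|) \;=\; n' + (j-|S|) \;\ge\; n'+1.
\]
The main obstacle is pinning down the cover structure cleanly and verifying $j\ge |S|+1$; once that is done, the rest is bookkeeping of which arc and segment edges can be simultaneously absent without destroying connectedness or the stretch bound.
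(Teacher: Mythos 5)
Your proposal is correct and follows essentially the same decomposition and counting as the paper: at most one missing edge per cycle-segment and per arc (by the degree-two argument on interior vertices), full inclusion of every arc covering a cut segment, and the final count reducing to $t' > t$, which in your notation is $j \geq |S|+1$. The one place where you genuinely add content is in making that inequality rigorous via the auxiliary $6$-cycle $\Gamma$ — observing that for $|S| \leq 5$ the induced subgraph $\Gamma[S]$ is a disjoint union of paths so $e_{\Gamma}(S) \leq |S|-1$, hence $j = 2|S| - e_{\Gamma}(S) \geq |S|+1$ — a step the paper dispatches with only ``by some case analysis''.
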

\begin{proof}
Consider the six cycle-segments $C_{a,b},C_{b,c},C_{c,d},C_{d,e},C_{e,f},C_{f,a}$. Suppose $H'$ loses cycle-edges from $t\ge 1$ cycle-segments. First, since $H'$ is a $k$-spanner, it has to preserve connectness, so it loses at most one edge from each cycle-segments. Therefore, $E_{H'}$ contains $k-t$ cycle-edges in $E_C$. Then consider each arc $A\in\mathcal{A}$, if $A$ doesn't cover any cycle-segment that $H'$ loses, by connectedness consideration $H'$ contains at least $|E_A|-1$ edges from $A$. Otherwise, by the previous discussion, $H'$ must contain the whole arc $A$ with $|E_A|$ edges. Let $t'$ denote the number of arcs that cover at least one cycle-segment lost by $H'$, we have $|E_{H'}|\ge k-t+6p-(6-t')$. If $t=6$, $H'$ will be disconnected so there must be $1\leq t\leq 5$. In these cases, by some case analysis we have $t'>t$. Since $n'=k+6p-6$, it follows that $|E_{H'}|\ge n'-t+t'\ge n'+1$.
\end{proof}
By \Cref{cllb}, any $k$-spanner $H'$ of $G_k$ with girth at least $k+1$ will have more edges than the $k$-spanner $H$ with $n'$ edges. Therefore, $H'$ cannot be minimum $k$-spanner of $G_k$. Let $v_0\in V_{G_k}$ be an arbitrary vertex in $G_k$. For any $n> n'=3k+\Theta(1)$, we can construct $G=(V,E)$ as follows
\begin{compactitem}
\item $V:=V_{G_k}\cup\{v_{1},\dots,v_{n-n'}\}$, where $v_{1},\dots,v_{n-n'}$ are brand new vertices.
\item $E:=E_{G_k}\cup\{(v_i,v_{i+1})\colon i\in\{0\}\cup [n-n'-1]\}$
\end{compactitem}
We can observe that $G$ has $n$ vertices and doesn't have any minimum $k$-spanner with girth larger than $k$, which proves this theorem.
\end{proof}
\section{Proofs for Upper Bounds}\label{sec:good}
\begin{figure}[t!]
\centering 
\includegraphics[width=0.3\textwidth]{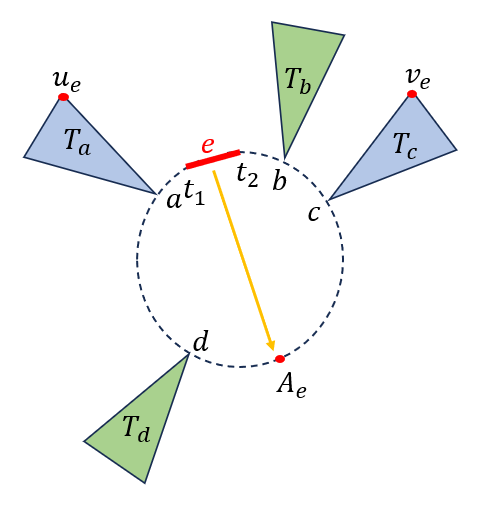} 
\caption{The whole graph denotes $K$ and the middle circle denotes $SC_H$.} 
\label{largecyclepic} 
\end{figure}
In this section, we give proofs for \Cref{exgoodub} and \Cref{goodub}. When $k$ is large enough, we will argue that given any $k$-spanner whose smallest cycle has length at most $k+1$, we can algorithmically break the cycle, and then add zero/one edge such that after modification we don't create new cycles smaller than $k+2$, and it is still a $k$-spanner. We will discuss two cases when the smallest cycle is larger than $2(n-k)$ or smaller than that.

For any graph $G$, recall $SC_G$ denotes its smallest cycle (If there are multiple smallest cycles, we break the tie arbitrarily). We will also assign an orientation for $SC_G$ for ease of analysis. For any cycle $C$ with orientation and two vertices $a,b\in V_C$, we use $C[a,b]$ to denote the cycle-segment(path) on $C$ from $a$ to $b$ through the assigned orientation.

\subsection{When the Smallest Cycle Has Length $2(n-k)\leq L_H\leq k+1$ }\label{sec:largecycle} First we aim to show the following lemma that argues for the case when the smallest cycle isn't too small.
\begin{lemma}\label{largecycle}
Given any $n$-vertex graph $G$ and its $k$-spanner $H\subseteq G$. Let $L_H=|V_{SC_H}|$. If $2(n-k)\leq L_H\leq k+1$, we can always find an edge $e\in E_{SC_H}$, such that $H\backslash\{e\}$ is still a $k$-spanner.
\end{lemma}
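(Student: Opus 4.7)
The plan is to find an edge $e \in E_{SC_H}$ whose removal preserves the $k$-spanner property. First, I apply the standard reduction: $H \backslash \{e\}$ is a $k$-spanner of $G$ iff $\dist_{H\backslash\{e\}}(u, v) \leq k$ for every $(u, v) \in E_G$, by the triangle inequality along shortest $G$-paths. So it suffices to find $e$ that doesn't push any $E_G$-edge beyond distance $k$ in $H \backslash \{e\}$.

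For each $(u, v) \in E_G$, fix a shortest $u$-$v$ path $P_{uv}$ in $H$, of length $d_{uv} \leq k$. Using the minimality of $SC_H$, I would argue that $P_{uv}$ meets the cycle in a single contiguous arc $S_{uv}$ of length $\ell_{uv} \leq \lfloor L_H/2 \rfloor$: a longer or disconnected intersection yields either a strictly shorter $u$-$v$ path (contradicting optimality of $P_{uv}$) or a smaller cycle (contradicting the girth $L_H$). For any $e \in E_{SC_H} \setminus S_{uv}$, the distance is unaffected. For $e \in S_{uv}$, rerouting around the cycle's other side gives a walk of length $d_{uv} + L_H - 2\ell_{uv}$ in $H \backslash \{e\}$; the spanner condition for $(u, v)$ fails only when $\ell_{uv} < (d_{uv} + L_H - k)/2 \leq L_H/2$, strictly. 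Call this cycle arc the \emph{bad set} $B_{uv}$ (empty when rerouting succeeds), and note $|B_{uv}| < L_H/2$.

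The core task is to show $\bigcup_{(u,v) \in E_G} B_{uv} \subsetneq E_{SC_H}$, so that some edge is safe to remove. Here I would crucially use $L_H \geq 2(n-k)$, which bounds the off-cycle vertex count by $n - L_H \leq 2k - n$ and is equivalent to $k + L_H/2 \geq n$. Combined with the fact that by the girth $L_H$, shortest paths in $H$ from any fixed vertex up to depth $\lfloor (L_H-1)/2 \rfloor$ are unique, the entry/exit points of each $S_{uv}$ on the cycle are essentially pinned down by the cycle-projections of $u$ and $v$. The main obstacle is making this covering bound rigorous: a naive union bound does not suffice, since many short bad arcs could in principle tile the cycle. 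I expect to combine the structural restrictions on bad-arc endpoints with the slack provided by $L_H \geq 2(n-k)$ — pushing the inequality $\ell_{uv} < (d_{uv} + L_H - k)/2$ below a threshold that forces bad arcs to be shorter than the available cycle perimeter — to guarantee at least one uncovered cycle edge.
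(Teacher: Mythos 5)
Your setup is sound and matches the paper's starting point: you correctly reduce to $(u,v)\in E_G$, correctly note that any shortest $u$--$v$ path in $H$ meets $SC_H$ in a single contiguous arc of length $\le \lfloor L_H/2\rfloor$ (this is essentially the paper's \Cref{conse}), and your ``bad set'' $B_{uv}$ is a conservative superset of the set of cycle edges $e$ that are genuinely dangerous for $(u,v)$. But the proposal stops exactly at the hard part. You explicitly acknowledge that you cannot show $\bigcup_{(u,v)} B_{uv} \subsetneq E_{SC_H}$ and only sketch an intention to ``combine the structural restrictions \dots with the slack provided by $L_H\ge 2(n-k)$.'' That is not an argument; it is a restatement of the goal. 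In particular, the worry you raise yourself --- that many short bad arcs could tile the cycle --- is precisely the obstruction, and your bound $|B_{uv}|< L_H/2$ together with the off-cycle vertex count $n-L_H$ does not by itself rule it out, because the bad arcs are not independent and a naive union bound is vacuous.

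What the paper actually does is substantially more structured and none of it appears in your proposal. It builds a spanning subgraph $K\subseteq H$ consisting of $SC_H$ together with BFS trees $T_u$ rooted at each cycle vertex $u$ over the vertices whose nearest cycle vertex is $u$; thus $|E_K|=n$. The hypothesis $L_H\ge 2(n-k)$ is converted into the quantitative statement (\Cref{half}) that any pair $(a,b)$ of cycle vertices at arc distance $\le L_H/2$ cannot ``cover'' any edge, because $|E_{T_a}|+|E_{T_b}|+|E_{SC_H[a,b]}|\le n-L_H/2\le k$. Then a trimming lemma (\Cref{trim}) iteratively deletes intermediate trees that do not participate in any cover, so that after trimming every cover pair is between two consecutive non-trivial tree roots (\Cref{finalstruc}). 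Finally, an averaging argument over the arcs $b_i$ and tree masses $a_i$ between consecutive non-trivial roots, using $\sum a_i = 2(n'-L_H)$ and $\sum b_i=L_H$, produces some segment $i$ with $a_i + L_H - b_i \le n' - L_H/2 \le k$, contradicting the assumption that every cycle edge is covered. Your proposal contains neither the tree decomposition, nor the trimming step, nor the averaging argument; these are the load-bearing ideas, and without them the proof does not go through. Your correct local observations (unique short shortest paths, the arc-length bound, the reroute estimate) are preliminaries the paper also uses, but they are not where the difficulty lies.
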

\begin{proof}
We need several steps to prove \Cref{largecycle}. Without loss of generality, we can assume $H$ is a connected graph. First, by a single multi-source BFS on $H$, we can compute for each vertex $v\in V_H$, its closest cycle-vertex $b_v\in V_{SC_H}$ such that $d_v:=\dist_{H}(v,b_v)$ is minimized over all $b_v\in V_{SC_H}$. If there are multiple choices achieving minimum, we break the tie arbitrarily. Then, for any $u\in V_{SC_H}$, let $B_u:=\{v\in V_H\colon b_v=u\}$, we run a BFS on $H[B_u]$ from source $u$, and use $T_u=(B_u,E_{T_u})$ to denote the generated BFS tree. We can construct a subgraph $K:=\left(\bigcup_{u\in V_{SC_H}}T_u\right)\cup SC_H$ of $H$ such that $K$ contains exactly these BFS trees and $SC_H$. \Cref{largecyclepic} is an example. We know $|E_K|=n=|V_G|$. Then, we have the following observation

\begin{fact}\label{half}
For any two vertices $a,b\in V_{SC_H}$ such that $|E_{SC_H[a,b]}|\leq \frac{L_H}{2}$, we have $|E_{T_a}|+|E_{T_b}|+|E_{SC_H[a,b]}|\leq k$
\end{fact}
\begin{proof}
We recall the notation that $SC_H[a,b]$ denotes the cycle-segment(path) from $a$ to $b$ on the cycle $SC_H$. Here we assign an arbitrary orientation to $SC_H$. We derive \Cref{half} from the following
\begin{align*}
|E_{T_a}|+|E_{T_b}|+|E_{SC_H[a,b]}|&\leq|E_K|-|E_{SC_H}|+|E_{SC_H[a,b]}|\leq n-L_H+L_H/2\\
&=n-L_H/2\leq n-(n-k)=k
\end{align*}
\end{proof}
Suppose by contradiction that for any $e\in E_{SC_H}$, $H\backslash\{e\}$ isn't $k$-spanner. Then, we know for any $e\in E_{SC_H}$, there must be $u_e,v_e,a,c$ such that $a\neq c\in V_{SC_H}, e\in E_{SC_H[a,c]}, u_e\in V_{T_a}, v_e\in V_{T_c}$ and the following inequality holds
\begin{equation}\label{eq:contra}
|E_{T_a}|+|E_{T_c}|+|E_{SC_H[c,a]}|\ge \dist_{K\backslash \{e\}}(u_e,v_e)\ge \dist_{H\backslash\{e\}}(u_e,v_e)>k.
\end{equation}
\Cref{largecyclepic} is an example. From \Cref{half}, we know $|E_{SC_H[c,a]}|>L_H/2$ and $|E_{SC_H[a,c]}|=L_H-|E_{SC_H[c,a]}|< L_H/2$.

Let $e=(t_1,t_2)$ where $t_2$ is next to $t_1$ according to the orientation of $SC_H$. Let $A_e\in V_{SC_H}$ denote the antipode of $e$ on the cycle $SC_H$. That is, $A_e$ is the foremost vertex on $SC_H[t_2,t_1]$ that has distance at least $\lfloor\frac{L_H-1}{2}\rfloor$ from $t_2$. Then we meet the second crucial observation.
\begin{fact}\label{split}
Neither $V_{SC_H[t_2,A_e]}$ nor $V_{SC_H[A_e,t_1]}$ contains the two vertices $a,c$ simultaneously.
\end{fact}
\begin{proof}
It's obvious that both $|E_{SC_H[t_2,A_e]}|$ and $|E_{SC_H[A_e,t_1]}|$ are at most $\frac{L_H}{2}$. Suppose by contradiction that $a,c$ are both from one of them, say $a,c\in V_{SC_H[t_2,A_e]}$ without loss of generality. Since we know $e\notin E_{SC_H[c,a]}$, we must have $SC_H[c,a]\subseteq SC_H[t_2,A_e]$ and $|E_{SC_H[c,a]}|\leq |E_{SC_H[t_2,A_e]}|\leq\frac{L_H}{2}$. Then by \Cref{half} we know $|E_{T_a}|+|E_{T_c}|+|E_{SC_H[c,a]}|\leq k$, which violates \Cref{eq:contra}.
\end{proof}
Let $m(K)$ denote the number of non-trivial trees in $K$ that originated from the cycle. That is, $m(K):=\#\{a:a\in V_{SC_H}\wedge |E_{T_a}|>0\}$. For example, the graph $K$ in \Cref{largecyclepic} has $m(K)=4$. Let $Tr(K)$ denote the set $\{a:a\in V_{SC_H}\wedge |E_{T_a}|>0\}$.
\begin{fact}\label{treenumber}
$m(K)\ge 2$
\end{fact}
\begin{proof}
Since $L_H\leq k+1\leq n-1$, there must be some vertex $a\notin V_{SC_H}$, so $m(K)=|Tr(K)|>0$.

Suppose by contradiction that $m(K)=|Tr(K)|=1$. Let $b\in Tr(K)$, we can always find an edge $e\in E_{SC_H}$ such that $A_e=b$. By previous discussion, we can find $u_e,v_e,a,c$ satisfying certain conditions stated above. By \Cref{split}, since $b=A_e$, we know neither $a$ nor $c$ equals to $b$, so $a,c\notin Tr(K)$ and $|E_{T_a}|=|E_{T_c}|=0$. This implies $|E_{T_a}|+|E_{T_c}|+|E_{SC_H[c,a]}|\leq L_H-1\leq k$ since $a\neq c$, which violates \Cref{eq:contra}. Therefore, we must have $m(K)=|Tr(K)|\ge 2$.
\end{proof}
For any two vertices $a,c\in V_{SC_H}$ on the cycle and $e\in E_{SC_H[a,c]}$, if the inequality $|E_{T_a}|+|E_{T_c}|+|E_{SC_H[c,a]}|>k$ holds, we say the pair $(a,c)$ covers $e$. From \Cref{eq:contra}, we know each cycle-edge $e\in E_{SC_H}$ is covered by some pair. Our next step is to analyze the coverage relation after removing some tree. For any $b\in V_{SC_H}$, we use $V'_{T_b}:=V_{T_b}\backslash\{b\}$ to denote the set of all vertices of tree $T_b$ except for its root $b$. We use the following lemma to clarify the structure of $K$.
\begin{lemma}\label{trim}
For any cycle-edge $e\in E_{SC_H}$ covered by $(a,c)$. If there exists $b\in V_{SC_H[a,c]}\backslash\{a,c\}$ such that (1). $|E_{T_b}|>0$. (2). neither $(b,c)$ nor $(a,b)$ covers $e$. Then, $K'=K[V_K\backslash V'_{T_b}]$ is still a graph such that all cycle-edges are covered by some pair.
\end{lemma}
\begin{proof}
An example is shown in \Cref{largecyclepic}. Let $e=(t_1,t_2)$ where $t_2$ is next to $t_1$ by the orientation of $SC_H$. Without loss of generality, suppose $b\in V_{SC_H[t_2,c]}$. Let $K'=K[V_K\backslash V'_{T_b}]$ after removing tree $T_b$ from $K$. Since $e$ is covered by $(a,c)$ but not by $(a,b)$, we know the following:
\begin{align}
(1)\ &|E_{T_a}|+|E_{T_c}|+|E_{SC_H[c,a]}|>k\\
(2)\ &|E_{T_a}|+|E_{T_b}|+|E_{SC_H[b,a]}|\leq k\\
(1)-(2)\ \Rightarrow & |E_{T_b}|+|E_{SC_H[b,c]}|<|E_{T_c}|\label{eqreduce}
\end{align}

Moreover, by \Cref{half} we know $|E_{SC_H[c,a]}|>\frac{L_H}{2}$. Since all cycle-edges $e'\in E_{SC_H[a,c]}$ are covered by $(a,c)$ in $K'$. we only need to show that all other cycle-edges $e'\in E_{SC_H[c,a]}$ are still covered by some pair in $K'$. For any  $e'\in E_{SC_H[c,a]}$, suppose by contradiction it isn't covered by any pair in $K'$. Since we know $e'$ is covered by some pair in $K$, there are two cases:
\begin{itemize}
\item[(1)] For some $d\in V_{SC_H}$, there is $(d,b)$ covers $e'$ in $K$. If so, we know $|E_{T_d}|+|E_{T_b}|+|E_{SC_H[b,d]}|>k$ and $e'\in E_{SC_H[d,b]}$. By \Cref{half}, we know $|E_{SC_H[b,d]}|>\frac{L_H}{2}$. Since $|E_{SC_H[b,c]}|\leq |E_{SC_H[a,c]}|=L_H-|E_{SC_H[c,a]}|\leq \frac{L_H}{2}$, we have $d\notin V_{SC_H[b,c]}$. It follows that
\begin{align*}
&|E_{T_c}|+|E_{T_d}|+|E_{SC_H[c,d]}|\\
>&|E_{T_b}|+|E_{SC_H[b,c]}|+|E_{T_d}|+|E_{SC_H[c,d]}|\\
\ge& |E_{T_d}|+|E_{T_b}|+|E_{SC_H[b,d]}|>k
\end{align*}
The second inequality is given by \Cref{eqreduce}. Moreover, $e'\in E_{SC_H[d,b]}\subseteq E_{SC_H[d,c]}$. We can conclude $e'$ is covered by $(d,c)$ in $K'$, which is a contradiction if we assume $e'$ is not covered by any pair in $K'$.

\item[(2)] For some $d\in V_{SC_H}$, there is $(b,d)$ covers $e'$ in $K$. If so, we know $|E_{T_d}|+|E_{T_b}|+|E_{SC_H[d,b]}|>k$ and $e'\in \left(E_{SC_H[b,d]}\cap E_{SC_H[c,a]}\right)$, then there must be $d\in V_{SC_H[c,b]}$. We can derive that
\begin{align*}
&|E_{T_c}|+|E_{T_d}|+|E_{SC_H[d,c]}|\\
>&|E_{T_b}|+|E_{SC_H[b,c]}|+|E_{T_d}|+|E_{SC_H[d,c]}|\\
\ge& |E_{T_d}|+|E_{T_b}|+|E_{SC_H[d,b]}|>k
\end{align*}
The second inequality is given by \Cref{eqreduce}. Moreover, $e'\in \left(E_{SC_H[b,d]}\cap E_{SC_H[c,a]}\right)\subseteq E_{SC_H[c,d]}$. We can conclude $e'$ is covered by $(c,d)$ in $K'$, which is a contradiction again.
\end{itemize}
Therefore, $K'=K[V_K\backslash V'_{T_b}]$ is still a graph in that all cycle-edges are covered by some pair.
\end{proof}
We repeatedly invoke \Cref{trim} to reduce the size of $K$. Since each time if we can find such an $e$ satisfying conditions of \Cref{trim} in $K$, the trimmed graph $K'=K[V_K\backslash V'_{T_b}]$ will lose a non-trivial tree $T_b$ where $|E_{T_b}|>0$, we have $m(K')=m(K)-1$. By \Cref{treenumber}, after at most $O(n)$ iterations, we will get a $K'$ such that $n'=|V_{K'}|<n$, all of its cycle-edges are covered by some pair in $K'$, and we cannot use \Cref{trim} to make it smaller. It means that for each cycle-edge $e\in E_{SC_H}$, we can find $(a,b)$ that covers $e$ in $K'$ and there isn't any other cycle-vertex $c\in \left(V_{SC_H[a,b]}\cap Tr(K')\right)$ with non-trivial tree $T_c$ except for $a$ and $b$. The remaining concern is that either $T_a$ or $T_b$ may be a trivial tree, which is bad for our analysis. However, we have the following structural argument on $K'$ to help us exclude this case.
\begin{figure}[t!]
\centering 
\includegraphics[width=0.3\textwidth]{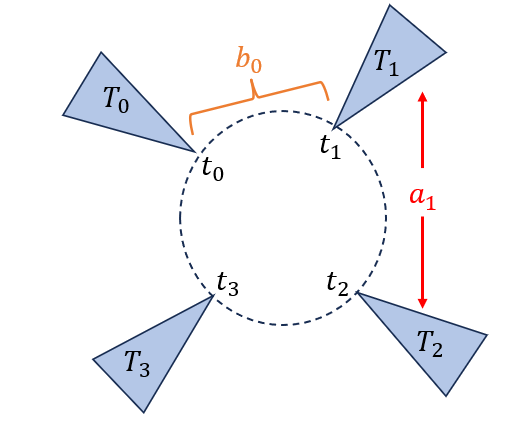} 
\caption{Definitions of $a_i,b_i$} 
\label{def} 
\end{figure}
\begin{claim}\label{finalstruc}
For each $e\in E_{SC_H}$, there exists $(a,b)$ covering $e$ in $K'$ such that (1). $a,b\in Tr(K')$. (2). There isn't any $c\in \left(V_{SC_H[a,b]}\cap Tr(K')\right)$ except for $a$ or $b$.
\end{claim}
\begin{proof}
Fix any $e=(u,v)\in E_{SC_H}$, where $v$ is next to $u$ according to the orientation. Let $s$ denote the first vertex in $Tr(K')$ after $v$ according to the orientation, and $t$ denote the first vertex in $Tr(K')$ before $u$. From a similar argument as in \Cref{treenumber}, we know $m(K')\ge 2$ and $s\neq t$. There are two possibilities:
\begin{itemize}
\item[(1)] If $|E_{T_s}|+|E_{T_t}|\ge|E_{SC_H[t,s]}|$, then there exists an edge $e'=(t_1,t_2)\in E_{SC_H[t,s]}$ where $t_2$ is next to $t_1$ in the orientation, such that for any $a'\in V_{SC_H[t,t_1]},b'\in V_{SC_H[t_2,s]}$, both $|E_{T_{a'}}|+|E_{SC_H[t,a']}|\leq |E_{T_{t}}|$ and $|E_{T_{b'}}|+|E_{SC_H[b',s]}|\leq |E_{T_{s}}|$ hold.  From property of $K'$ discussed above and \Cref{trim}, there exists such a pair of $(a',b')$ such that $SC_H[a',b']\subseteq SC_H[t,s]$ and $(a',b')$ covers $e'$. Therefore, $|E_{T_{a'}}|+|E_{T_{b'}}|+|E_{SC_H[b',a']}|>k$. Moreover, we have
\begin{align*}
&|E_{T_t}|+|E_{T_s}|+|E_{SC_H[s,t]}|\\
\ge&|E_{T_{a'}}|+|E_{SC_H[t,a']}|+|E_{T_{b'}}|+|E_{SC_H[b',s]}|+|E_{SC_H[s,t]}|\\
\ge&|E_{T_{a'}}|+|E_{T_{b'}}|+|E_{SC_H[b',a']}|>k
\end{align*}
Since $e\in E_{SC_H[t,s]}$, $e$ must be covered by $(t,s)$ and both conditions in the statement hold. We are done.

\item[(2)]\label{item:second} If $|E_{T_s}|+|E_{T_t}|<|E_{SC_H[t,s]}|$, then there exists an edge $e'=(t_1,t_2)\in E_{SC_H[t,s]}$ where $t_2$ is next to $t_1$ in the orientation, such that for any $a'\in V_{SC_H[t,t_1]},b'\in v_{SC_H[t_2,s]}$, we get $|E_{T_{a'}}|+|E_{SC_H[t,a']}|\leq |E_{SC_H[t,t_1]}|$ and $|E_{T_{b'}}|+|E_{SC_H[b',s]}|\leq |E_{SC_H[t_2,s]}|$.
From property of $K'$ discussed above and \Cref{trim}, there exists a pair $(a',b')$ such that $SC_H[a',b']\subseteq SC_H[t,s]$ and $(a',b')$ covers $e'$. Therefore, $|E_{T_{a'}}|+|E_{T_{b'}}|+|E_{SC_H[b',a']}|>k$. However, we have
\begin{align*}
&|E_{T_{a'}}|+|E_{T_{b'}}|+|E_{SC_H[b',a']}|\\
=&|E_{T_{a'}}|+|E_{SC_H[t,a']}|+|E_{T_{b'}}|+|E_{SC_H[b',s]}|+|E_{SC_H[s,t]}|\\
\leq&|E_{SC_H[t,t_1]}|+|E_{SC_H[t_2,s]}|+|E_{SC_H[s,t]}|\leq L_H-1\leq k
\end{align*}
This is a contradiction, so the second case is in fact impossible.
\end{itemize}
\end{proof}
Finally, For each cycle-edge $e=(u,v)\in E_{SC_H}$ where $v$ is next to $u$ in the orientation, we use $s_e,t_e\in Tr(K')$ to denote the first vertices in $Tr(K')$ after $v$ and before 
$u$ respectively. By \Cref{finalstruc} we know for every $e\in E_{SC_H}$, $e$ is covered by $(t_e,s_e)$ in $K'$. Let $Tr(K')=\{t_0,\dots,t_{m(K')-1}\}$ denote all elements in $Tr(K')$ by the oriented order. We will split the cycle $SC_H$ into $m(K')$ segments. Consider any $i\in\{0\}\cup[m(K')-1]$, let $nxt(i):=i+1\mod{m(K')}$, we define $b_i:=|E_{SC_H[t_i,t_{nxt(i)}]}|$ and $a_i:=|E_{T_{t_i}}|+|E_{T_{t_{nxt(i)}}}|$. \Cref{def} is an example of these definitions.

Since $2\leq m(K')\leq L_H$, we have the following three basic equations:
\begin{align}
(1)\ &\sum_{i=0}^{m(K')-1}b_i=L_H\\
(2)\ &\sum_{i=0}^{m(K')-1}a_i=2(n'-L_H)\\
\frac{(2)-(1)}{m(K')}\Rightarrow &\ \frac{\sum_{i=0}^{m(K')-1}a_i-b_i}{m(K')}\leq\max\left(-\frac{1}{L_H},\frac{2n'-3L_H}{2}\right)=\max\left(-\frac{1}{L_H},n'-\frac{3}{2}L_H\right) \label{average}
\end{align}
By an average argument from \Cref{average}, there exists $i\in\{0\}\cup[m(K')-1]$ such that $a_i-b_i\leq \max\left(-1,n'-\frac{3}{2}L_H\right)$. If $a_i-b_i\leq -1$, we fall into the second case of \Cref{finalstruc}, which was shown to be impossible. Therefore, there must be $a_i-b_i\leq n'-\frac{3}{2}L_H$, we can derive
\begin{align*}
&|E_{T_{t_i}}|+|E_{T_{nxt(i)}}|+|E_{SC_H[t_{nxt(i)},t_i]}|\\
=&a_i+L_H-b_i\leq n'-\frac{1}{2}L_H\leq n'-(n-k)\leq k
\end{align*}

Therefore, any cycle-edge $e\in E_{SC_H[t_i,t_{nxt(i)}]}$ is not covered by $(t_e,s_e)=(t_i,nxt(i))$, which contradicts the property of $K'$ discussed before and \Cref{finalstruc}. As a result, it implies the initial assumption that `for any cycle-edge $e\in E_{SC_H}$, $H\backslash\{e\}$ isn't a $k$-spanner' doesn't hold. We conclude that we can find a cycle-edge $e\in E_{SC_H}$
 such that $H\backslash\{e\}$ is still a $k$-spanner.
\end{proof}

By \Cref{largecycle}, given any $k$-spanner $H$ of $G$, if $H$ has girth at least $\min\left(k+2,2(n-k)\right)$, we can algorithmically get a $k$-spanner $H'\subseteq H$ with girth at least $k+2$.  Now it suffices to only consider the case when the input $k$-spanner $H$ of $G$ has the smallest cycle $SC_H$ of length smaller than $2(n-k)$.
\subsection{Some Common Proof Techniques in Upper Bounds}\label{sec:common} In this paragraph we will introduce several common notions that will be used in our proofs of \Cref{exgoodub,goodub,2approxub,moreapproxub}.
\begin{definition}\label{dangerdef}
For any $n$-vertex graph $G$ and its $k$-spanner $H$, if for an edge $b=(s,t)\in E_G$ and a cycle-edge $e\in E_{SC_H}$, we have $\dist_{H\backslash\{e\}}(s,t)>k$, we say $(s,t)$ is an endangered pair and call $e$ a \emph{danger} for $(s,t)$.
\end{definition}

Given any $k$-spanner $H$ of $G$, we aim to transform $H$ into a $k$-spanner with girth at least $k+2$. From the previous paragraph, we can assume $|E_{SC_H}|<2(n-k)$. If there exists a cycle-edge $e\in E_{SC_H}$ that isn't a danger for any vertex-pair, we just remove $e$ and $H\backslash\{e\}$ is still a $k$-spanner of $G$ with fewer edges and its girth won't become smaller. By iteratively doing so, we finally achieve a $k$-spanner $H$ whose any cycle-edges $e\in E_{SC_H}$ is a danger for some pair.

For any $s,t\in V_{G}$, let $P_{s,t}$ denote the set of the shortest paths between $s$ and $t$ on $H$. For any path $p\in P_{s,t}$, we denote its oriented edge set as $E_p:=(e_0=(v_0=s,v_1),\dots,e_{m-1}=(v_{m-1},v_m=t))$ and its sub-path from 
$v_i$ to $v_j$ as $p[v_i,v_j],0\leq i\leq j\leq m$, where $m=\dist_H(s,t)$. Moreover, let $C_p=\{e_i\in E_p:e_i\in E_{SC_H}\}$ denote the set of those cycle-edges in $p$, we observe the following fact.
\begin{fact}\label{depen}
If $e$ is a danger for $(s,t)$, then for any $p\in P_{s,t}$, $e\in p$.
\end{fact}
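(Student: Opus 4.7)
The plan is a short proof by contradiction using the definition of a $k$-spanner together with the definition of endangered pair. Suppose, toward contradiction, that $e$ is a danger for $(s,t)$ but there exists some shortest path $p \in P_{s,t}$ in $H$ with $e \notin E_p$.

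First I would observe that since $(s,t) \in E_G$ is an edge of $G$, we have $\dist_G(s,t) = 1$, so the $k$-spanner property of $H$ gives $\dist_H(s,t) \le k \cdot 1 = k$. Hence every path $p \in P_{s,t}$ has length at most $k$. Next, since the hypothesized $p$ does not use $e$, $p$ is entirely contained in $H \setminus \{e\}$, and therefore $\dist_{H \setminus \{e\}}(s,t) \le |E_p| = \dist_H(s,t) \le k$.

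This directly contradicts the assumption that $e$ is a danger for $(s,t)$, which by \Cref{dangerdef} requires $\dist_{H \setminus \{e\}}(s,t) > k$. Hence no such path $p$ avoiding $e$ exists, i.e., every $p \in P_{s,t}$ must contain $e$.

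There is no real obstacle here; the only subtlety to make explicit is the use of $(s,t)\in E_G$ to bound $\dist_G(s,t) = 1$, which is what forces $\dist_H(s,t) \le k$ rather than some larger multiple. The fact is essentially a definitional consequence and I expect the written proof to be only a couple of lines.
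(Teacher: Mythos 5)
Your proof is correct and follows essentially the same contradiction argument as the paper: assume some shortest path $p$ avoids $e$, note $p \subseteq H\setminus\{e\}$ forces $\dist_{H\setminus\{e\}}(s,t) \le \dist_H(s,t) \le k$, contradicting \Cref{dangerdef}. You merely make explicit the step (implicit in the paper) that $(s,t)\in E_G$ gives $\dist_G(s,t)=1$ and hence $\dist_H(s,t)\le k$, which is a reasonable clarification but not a different approach.
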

\begin{proof}
Suppose it's not the case, we can find $p\in P_{s,t}$ such that $p\subseteq H\backslash\{e\}$. Since $H$ is a $k$-spanner we have $|E_p|=\dist_H(s,t)\leq k$, which means $\dist_{H\backslash\{e\}}(s,t)\leq |E_{p}|\leq k$, which contradicts the fact that $e$ is a danger for $(s,t)$ by \Cref{dangerdef}.
\end{proof}

For any $(s,t)$ and its danger $e$, let $\mathsf{cd}(s,t,e):=\mathsf{argmax}_{p\in P_{s,t}}\{|E_{C_p}|\}$ denote one of the shortest path between $s$ and $t$ that has the longest common parts with the cycle $SC_H$ (We can break the tie arbitrarily). Let $\mathsf{Ccd}(s,t,e):=C_{cd(s,t,e)}$ denote its cycle parts, From \Cref{depen}, we know $|\mathsf{Ccd}(s,t,e)|>0$. We give the next observation.

\begin{figure}[t!]
\centering 
\includegraphics[width=0.4\textwidth]{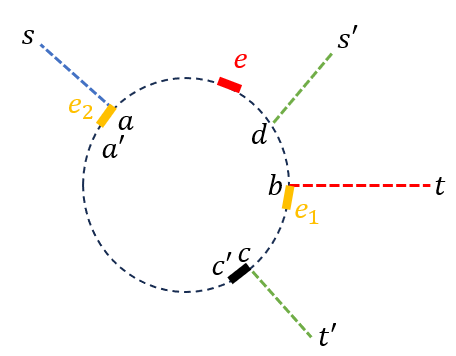} 
\caption{An illustration for our analysis. On this subgraph of $H$, the path between $s$ and $t$ using $SC_H[a,b]$ is $p$, and the path between $s'$ and $t'$ using $SC_H[d,c]$ is $p'$. The orientation of the cycle $SC_H$ is clockwise.} 
\label{upg} 
\end{figure}

\begin{claim}\label{conse}
If $e$ is a danger for $(s,t)$, then $\mathsf{Ccd}(s,t,e)$ forms a consecutive segment both in $E_{\mathsf{cd}(s,t,e)}$ and $E_{SC_H}$. Moreover, $e\in \Ccd(s,t,e)$.
\end{claim}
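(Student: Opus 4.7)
The plan is to prove the two consecutivity assertions and the membership $e \in \Ccd(s,t,e)$ in turn. The last is immediate: by \Cref{depen} the danger $e$ lies on every shortest $s$-$t$ path in $H$, hence on $p := \mathsf{cd}(s,t,e)$ in particular; since $e \in E_{SC_H}$, it belongs to $C_p = \Ccd(s,t,e)$.

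For consecutivity in $E_{\mathsf{cd}(s,t,e)}$, I argue by contradiction. Write $p = (v_0, v_1, \ldots, v_m)$ and suppose $C_p$ is not a contiguous block of edges of $p$: then there exist indices $j < l$ with $e_j, e_l \in C_p$, $l > j+1$, and $e_{j+1}, \ldots, e_{l-1}$ all off $SC_H$. Let $q := p[v_{j+1}, v_l]$, which has length $d = l - j - 1 \geq 1$ and uses no cycle-edges. For either arc $A_i$ ($i = 1, 2$) of $SC_H$ joining $v_{j+1}$ and $v_l$, the union $q \cup A_i$ is an edge-disjoint closed walk in $H$ (since $q$ has no cycle-edges while $A_i$ has only cycle-edges), so it contains a cycle of length at most $d + |A_i|$. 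Minimality of $SC_H$ then gives $L_H \leq d + |A_i|$ for $i = 1, 2$, and summing with $|A_1| + |A_2| = L_H$ yields $L_H \leq 2d$.

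The contradiction comes from looking one vertex further on each end. Consider the sub-path $p[v_j, v_{l+1}]$, which has length $d + 2$; its endpoints $v_j, v_{l+1}$ lie on $SC_H$ (as the other endpoints of $e_j$ and $e_l$) and are distinct by simplicity of $p$. Since $p$ is a shortest path in $H$, $\dist_H(v_j, v_{l+1}) = d + 2$. On the other hand, within $SC_H$ these two vertices are joined by an arc of length at most $\lfloor L_H/2 \rfloor \leq d$, so $\dist_H(v_j, v_{l+1}) \leq d$. The inequality $d + 2 \leq d$ is the desired contradiction, forcing $C_p$ to be contiguous in $p$.

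Consecutivity in $E_{SC_H}$ then follows for free: the contiguous block of cycle-edges in $p$ forms a simple sub-walk of $p$ inside $SC_H$ using at most $L_H - 1$ edges, and any such simple walk on a cycle must itself be a contiguous arc. The main subtlety, and the place I would be careful, is the closed-walk-contains-a-cycle step, which relies on the genuine edge-disjointness of $q$ and each $A_i$; this is exactly what the gap structure of $C_p$ enforces, since $q$ is assembled purely from non-cycle edges while $A_i$ is purely cycle-edges.
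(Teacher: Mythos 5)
Your proof is correct, and it takes a genuinely different route from the paper's. The paper's argument for contiguity in $E_{\cd(s,t,e)}$ leans directly on the extremal property built into $\cd$: it shows the cycle-arc $SC_H[v_i,v_{j+1}]$ is also a shortest $v_i$-$v_{j+1}$ path, then splices it into $p$ to produce another shortest $s$-$t$ path whose cycle-overlap strictly exceeds $|C_p|$, contradicting the maximality that defines $\cd(s,t,e)$. Your argument never invokes that maximality. Instead, you combine the gap $q$ with each arc $A_1,A_2$ of $SC_H$ between its endpoints, use that each edge-disjoint closed walk $q\cup A_i$ contains a cycle, and extract the bound $L_H\leq 2d$ from girth-minimality of $SC_H$; you then get the contradiction $d+2\leq d$ by comparing the length of $p[v_j,v_{l+1}]$ to the shorter $SC_H$-arc between $v_j$ and $v_{l+1}$, which is valid because $p$ is a shortest path. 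The upshot is that your argument proves the slightly stronger fact that \emph{every} shortest $s$-$t$ path has contiguous cycle-overlap with $SC_H$, whereas the paper's proof only applies to $\cd(s,t,e)$ (and needs the maximality to close the contradiction). Your final reduction of $E_{SC_H}$-contiguity to ``a simple path contained in a cycle graph is an arc'' is also clean and matches what the paper glosses over implicitly. One small point worth being explicit about for readers: the step that the edge-disjoint closed walk $q\cup A_i$ contains a cycle of length at most $d+|A_i|$ deserves a sentence, e.g.\ because every edge of $q\cup A_i$ is used exactly once, so deleting any one edge leaves a path between its endpoints, which closes into a cycle of length at most $d+|A_i|$.
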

\begin{proof}
If $|\Ccd(s,t,e)|=1$, by \Cref{depen} we are done. Otherwise according to the order appeared in $E_{\cd(s,t,e)}=(e_0=(v_0=s,v_1),\dots,e_{m-1}=(v_{m-1},v_m=t))$, we can choose $b_1=(v_i,v_{i+1}),b_2=(v_j,v_{j+1})\in \Ccd(s,t,e), i<j$ to denote the first and last edges in $\Ccd(s,t,e)$. Suppose by contradiction that $\Ccd(s,t,e)$ doesn't form a consecutive segment on $E_{\cd(s,t,e)}$, we know $|\Ccd(s,t,e)|<|E_{p[i,j+1]}|=j-i+1$ and there exists $i<k<j$ such that $e_k\notin E_{SC_H}$. Since $p[i,j+1]\in P_{v_i,v_{j+1}}$, we know $\dist_H(v_i,v_{j+1})=|E_{p[i,j+1]}|=j-i+1$. Now consider the cycle-segment $SC_H[v_{j+1},v_{i}]$, there is a cycle $C=SC_H[v_{j+1},v_{i}]\cup p[i,j+1]$ that doesn't equal to $SC_H$ since $e_k\in E_C\backslash E_{SC_H}$. However, we know $SC_H=SC_H[v_{j+1},v_{i}]\cup SC_H[v_i,v_{j+1}]$ is one of the smallest cycle in $H$. Therefore, we must have $|E_{SC_H[v_i,v_{j+1}]}|\leq |E_{p[i,j+1]}|$. Since they are both paths from $v_i$ to $v_{j+1}$ and $p[i,j+1]\in P_{v_i,v_{j+1}}$, we actually have $SC_H[v_i,v_{j+1}]\in P_{v_i,v_{j+1}}$. Therefore, $p'=p[0,i]\cup SC_H[v_i,v_{j+1}]\cup p[j+1,m]$ is also a shortest path between $s$ and $t$ and $|C_{p'}|=|E_{SC_H[v_i,v_{j+1}]}|=\dist_H(v_i,v_{j+1})=j-i+1>|\Ccd(s,t,e)|$. This contradicts our definition of $\Ccd(s,t,e)$ since it implies $|C_{p'}|>|\Ccd(s,t,e)|=|C_{\cd(s,t,e)}|$ but $\cd(s,t,e)$ should have had longer cycle part than $p'$. Therefore, our assumption doesn't hold and $\Ccd(s,t,e)$ must form a consecutive segment in both $E_{\cd(s,t,e)}$ and $E_{SC_H}$. 
\end{proof}

Next, we choose the path $p$ which is the shortest path between some arbitrary endangered pair, and the choice should maximize $|C_{p}|$. Namely, we define $(s,t,e):=\mathsf{argmax}_{s,t,e}\{|\Ccd(s,t,e)|\}$ and $p=\cd(s,t,e)$ (We can break ties arbitrarily). Then, we fix $p$ and define the subgraph $K:=p\cup SC_H\subseteq H$. See \Cref{upg} for an example. Let $a,b$ be the first and last cycle-vertices on $p$ and use the direction from $a$ to $b$ as the orientation of $SC_H$. From \Cref{conse}, we know $e\in E_{SC_H[a,b]}$ and $p=p[s,a]\cup SC_H[a,b]\cup p[b,t]$. Let $a'$ denote the previous vertex of $a$ on $SC_H$ and $b'$ the next vertex of $b$ on $SC_H$, we focus on two cycle-edges $e_1=(b,b')$ and $e_2=(a',a)$ on the cycle. Without loss of generality, we can assume $|E_{p[b,t]}|\ge |E_{p[s,a]}|$. 

In all the following proofs, we assume the orientation of the cycle $SC_H$ is the same as path $p$ oriented from $a$ to $b$. In all following figures, this orientation is described as clockwise as in \Cref{upg}.
\subsection{Upper Bound for Extremely Good Pairs}\label{sec:ex}
In this paragraph, we aim to prove the upper bound for extremely good pairs \Cref{exgoodub}. The following lemma is the central part when $|E_{SC_H}|=L_H<2(n-k)$.
\begin{lemma}\label{exsmallcycle}
If $k>\frac{3}{4}n+O(1)$, and $H$ is the $k$-spanner discussed in \Cref{sec:common} with $|E_{SC_H}|< 2(n-k)$. Then at least one of the $H\backslash\{e_1\}$ and $H\backslash\{e_2\}$ is still a $k$-spanner, where $e_1,e_2$ is also defined in \Cref{sec:common}.
\end{lemma}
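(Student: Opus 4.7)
The plan is to proceed by contradiction. Assume that neither $H\setminus\{e_1\}$ nor $H\setminus\{e_2\}$ is a $k$-spanner. Then each of $e_1,e_2$ is a danger for some endangered pair; call them $(s_1,t_1)$ and $(s_2,t_2)$, with associated paths $p_1=\cd(s_1,t_1,e_1)$ and $p_2=\cd(s_2,t_2,e_2)$. Write $c_1=|\Ccd(s_1,t_1,e_1)|$, $c_2=|\Ccd(s_2,t_2,e_2)|$, and $\ell=|C_p|=|E_{SC_H[a,b]}|$. By \Cref{conse}, $C_1:=\Ccd(s_1,t_1,e_1)$ and $C_2:=\Ccd(s_2,t_2,e_2)$ are consecutive arcs of $SC_H$ containing $e_1,e_2$ respectively; by the maximality in the definition of $(s,t,e)$ from \Cref{sec:common} one has $c_1,c_2\leq\ell$.

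The next step is a \emph{dual path} calculation. For each of the three endangered triples, replacing the cycle segment (of length $\ell$, $c_1$, or $c_2$) with the opposite arc of $SC_H$ yields a walk between the same endpoints that avoids the danger edge and has length $|E_{\text{path}}|+L_H-2|C|$. Because the danger condition forces this to exceed $k$, I obtain
\[
|E_p|\geq k+2\ell-L_H+1,\qquad |E_{p_1}|\geq k+2c_1-L_H+1,\qquad |E_{p_2}|\geq k+2c_2-L_H+1.
\]

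The heart of the argument is a vertex count inside $K\cup p_1\cup p_2$. I would establish two structural claims by rerouting arguments that exploit the maximality of $\cd$: (i) for each $q\in\{p,p_1,p_2\}$ the cycle vertices on $q$ form exactly $V_{C_q}$, otherwise one could splice a cycle detour into $q$ to produce a shortest path with strictly more cycle edges; and (ii) the three non-cycle vertex sets $V_p\setminus V_{SC_H}$, $V_{p_1}\setminus V_{SC_H}$, $V_{p_2}\setminus V_{SC_H}$ are pairwise disjoint, since a shared non-cycle vertex would enable an analogous swap that produces a shortest path with larger cycle overlap. Together these give
\[
n\geq|V_K\cup V_{p_1}\cup V_{p_2}|=L_H+(|E_p|-\ell)+(|E_{p_1}|-c_1)+(|E_{p_2}|-c_2).
\]

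Finally I would plug in the three dual-path lower bounds and use $\ell,c_1,c_2\geq 1$ to reduce this to $n\geq 3k+6-2L_H$; combining with $L_H<2(n-k)$ yields $7k+6<5n$, i.e., $k<(5n-6)/7$, which contradicts $k>\tfrac{3}{4}n+O(1)$ for $n$ sufficiently large since $\tfrac{3}{4}>\tfrac{5}{7}$. The hard part will be the rerouting arguments behind the two structural claims: one must carefully handle possible ``chord'' shortcuts of $SC_H$ that live in $H$ but are absent from $K$, and verify that each rerouted path remains both a shortest path and a simple path, so that it genuinely contradicts the maximality of the cycle-overlap in the definition of $\cd$ (or that of $(s,t,e)$).
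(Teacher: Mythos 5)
Your approach is genuinely different from the paper's. The paper fixes the asymmetry: it assumes $H\setminus\{e_1\}$ fails, then shows $H\setminus\{e_2\}$ \emph{is} a $k$-spanner via a four-way vertex partition $A,B,C,D$ and a distance case analysis over all block pairs (\Cref{aabb}--\Cref{bdlb}). You instead assume both fail and count vertices in $p\cup p_1\cup p_2$. Your dual-path bounds are correct (they are exactly the paper's \Cref{routelb}), and if the vertex count held the arithmetic would indeed yield $k<\tfrac{5n-6}{7}$, contradicting $k>\tfrac34 n+O(1)$; your claim (i) is also fine, since it follows from the definitions of the first/last cycle vertices on each path together with \Cref{conse}.

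The gap is structural claim (ii), which you flag as ``hard'' but which is in fact \emph{false}: the non-cycle portions of $p$, $p_1$, $p_2$ need not be pairwise disjoint. The paper explicitly records (right after \Cref{innernon} and \Cref{doublenon}) that among $p[s,a],\,p[b,t],\,p_1[s',d],\,p_1[c,t']$, ``only $p_1[s',d]$ and $p[b,t]$ may intersect'' --- that is, the exception genuinely occurs, and by symmetry $p_2$ contributes additional possible overlaps. Your intended rerouting does not exclude it: if $u'\in V_{p[b,t]}\cap V_{p_1[s',d]}$, splicing $p_1[d,u']$ into $p$ at $d\in V_{SC_H[a,b]}$ produces a path from $s$ to $t$ that departs the cycle at $d$, strictly \emph{before} $b$, so its cycle overlap is at most $|E_{SC_H[a,d]}|<\ell$ --- smaller, not larger --- and it is not even guaranteed to be a shortest path (equality of $|E_{p_1[u',d]}|$ with $|E_{p[u',b]}|+|E_{SC_H[d,b]}|$ is not forced). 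The paper handles exactly this overlap in \Cref{bdlb} cases (2) and (3). Since the overlap can contain $\Omega(k)$ vertices, the inequality $n\ge L_H+(|E_p|-\ell)+(|E_{p_1}|-c_1)+(|E_{p_2}|-c_2)$ fails, and with it the contradiction. To repair the argument you would need either to bound the overlaps and fold them into the count, or to fall back on an $A,B,C,D$-style partition as the paper does.
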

\begin{proof}
If $H\backslash\{e_1\}$ is a $k$-spanner, we are done. In the following, let's assume $H\backslash\{e_1\}$ isn't a $k$-spanner. Then $e_1$ must be a danger for some pair $(s',t')$. Let $p_1=\cd(s',t',e_1)$. From \Cref{depen}, we know $e_1=(b,b')\in E_{p_1}$. Without loss of generality, we can assume $b$ appears earlier than $b'$ on $p_1$. Let $d,c$ denote the first and last cycle-vertices in $\Ccd(s',t',e_1)$, by \Cref{conse}, we know $p_1=p_1[s',d]\cup SC_H[d,c]\cup p_1[c,t']$. Then let's consider the relationship between $(s,t)$ and $(s',t')$.

Since $SC_H[a,b]\subseteq p$, $SC_H[a,b]$ is a shortest path between two cycle-vertices $a,b\in V_{SC_H}$ and $|E_{SC_H[a,b]}|\leq \frac{L_H}{2}$.

\begin{claim}\label{placee1}
$d\in V_{SC_H[a,b]}, c\in V_{SC_H[b',a]}$.
\end{claim}
\begin{proof}
Let's eliminate all other potential possibilities. There are two of them:
\begin{itemize}
\item[(1)] Suppose by contradiction $d\in V_{SC_H[b',a']}$, then $SC_H[a,b]\subseteq SC_H[d,b]$. Since $e_1=(b,b')\in SC_H[d,c]$, we have $SC_H[d,c]=SC_H[d,b]\cup SC_H[b,c]$. Since $c$ appears after $b'$ in $SC_H[d,c]$, we have $e_1\in E_{SC_H[b,c]}$ and $SC_H[a,b']\subsetneq SC_H[d,b]\cup SC_H[b,c]=SC_H[d,c]$, which implies $\Ccd(s,t,e)\subsetneq \Ccd(s',t',e_1)$. This contradicts the definition that $|\Ccd(s,t,e)|$ is one of the longest common cycle-segment among the choices of endangered pairs $(s,t)$ and its dangers $e$. Therefore this possibility is impossible and $d\in V_{SC_H[a,b]}$.

\item[(2)] Suppose by contradiction $c\in V_{SC_H[a,b]}$, Since we've shown $d\in V_{SC_H[a,b]}$ and $b'\in V_{SC_H[d,c]}$, we must have $SC_H[b,a]\subseteq SC_H[d,c]$. It implies $SC_H[c,d]\subseteq SC_H[a,b]$ and $|E_{SC_H[c,d]}|\leq |E_{SC_H[a,b]}|\leq \frac{L_H}{2}\leq |E_{SC_H[b,a]}|\leq |E_{SC_H[d,c]}|$. At the same time, $SC_H[d,c]\subseteq p_1\in P_{s',t'}$, which means $SC_H[d,c]\in P_{d,c}$. Since $|E_{SC_H[c,d]}|\leq |E_{SC_H[d,c]}|$, we derive $SC_H[c,d]\in P_{d,c}$ also. However, in this case $p':=p_1[s',d]\cup SC_H[c,d]\cup p_1[c,t']\in P_{s',t'}$ will be a shortest path between $s'$ and $t'$ in $H$ but $e_1\notin p'$, which contradicts \Cref{depen} since $e_1$ is a danger for $(s',t')$.
\end{itemize}
\end{proof}

By \Cref{placee1}, we can confirm our graph drawing as shown in \Cref{upg}. However, $p[s,a],p_1[s',d],p[b,t]$ and $p_1[c,t']$ may overlap. In the next step, we should separate them by a more fine-grained analysis.

\begin{claim}\label{innernon}
$V_{p[s,a]}$ doesn't overlap with $V_{p_1[s',d]}$. 
\end{claim}
\begin{proof}
Suppose by contradiction, there is some vertex $u$ both in $V_{p[s,a]}$ and $V_{p_1[s',d]}$. Then, $p[u,d]\in P_{u,d}$ must be the shortest path between $u$ and $d$. Then, we know $p':=p_1[s',u]\cup p[u,d]\cup p_1[d,c]\cup p_1[c,t']\in P_{s',t'}$ is a shortest path between $s'$ and $t'$. On the other hand, since $d\in V_{SC_H[a,b]}=V_{p[a,b]}$ and $u$ appears no later than $a$ on $p$ in the orientation from $s$ to $t$, we have $SC_H[a,d]\subseteq p[u,d]$ and so $|C_{p'}|\ge |E_{SC_H[a,d]}|+|E_{SC_H[d,c]}|$. Since $b\in V_{SC_H[d,c]}$ and $b\neq c$ by \Cref{placee1}, we get $|C_{p'}|>|E_{SC_H[a,b]}|=|C_{p}|$, which contradicts the definition of $p$ whose cycle-overlap should be the longest.
\end{proof}
\begin{claim}\label{doublenon}
$V_{p_1[c,t']}$ doesn't overlap with $V_{p[s,a]}\cup V_{p[b,t]}$. 
\end{claim}
\begin{proof}
The proof is split into two parts:
\begin{itemize}
\item[(1)] $V_{p_1[c,t']}$ doesn't overlap with $V_{p[s,a]}$: Suppose by contradiction that there exists a vertex $u\in V_{p[s,a]}\cap V_{p_1[c,t']}$, then $p_1[b,u]\in P_{u,b}$ should be a shortest path between $u$ and $b$. It follows that $p_1[b,u]=p_1[b,c]\cup p_1[c,u]$, $p_1[c,u]$ is not on the cycle and $p_1[b,c]=SC_H[b,c]\subseteq SC_H[b,a]$. Therefore $e\notin p_1[b,u]$. However in this case $p'=p[s,u]\cup p_1[u,b]\cup p[b,t]\in P_{s,t}$ will be a shortest path between $s$ and $t$ avoiding $e$, which contradicts \Cref{depen} since $e$ is a danger for $(s,t)$.

\item[(2)] $V_{p_1[c,t']}$ doesn't overlap with $V_{p[b,t]}$: We suppose by contradiction that there exists a vertex $u\in V_{p[b,t]}\cap V_{p_1[c,t']}$, then $p_1[b,u]=SC_H[b,c]\cup p_1[c,u]\in P_{b,u}$ should be a shortest path between $b$ and $u$. However, this implies $p'=p[s,a]\cup SC_H[a,b]\cup SC_H[b,c]\cup p_1[c,u]\cup p[u,t]\in P_{s,t}$ is a shortest path between $s$ and $t$, and we can derive $|C_{p'}|\ge|E_{SC_H[a,c]}|>|E_{SC_H[a,b]}|=|C_{p}|$ since $c\in V_{SC_H[b',a]}$ by \Cref{placee1}. This contradicts the definition of $p$ whose cycle-segment should be one of the longest among those of $P_{s,t}$.
\end{itemize}
\end{proof}
By property of shortest paths, there is no intersection between $p[s,a]$ and $p[b,t]$, or $p_1[s',d]$ and $p_1[c,t']$. Therefore, from \Cref{innernon} and \Cref{doublenon}, we know that among the four paths $p[s,a],p[b,t],p_1[s',d],p_1[c,t']$ outside the cycle, only $p_1[s',d]$ and $p[b,t]$ may intersect. This is useful in the following proof.
\begin{figure}[t!]
\centering 
\includegraphics[width=0.5\textwidth]{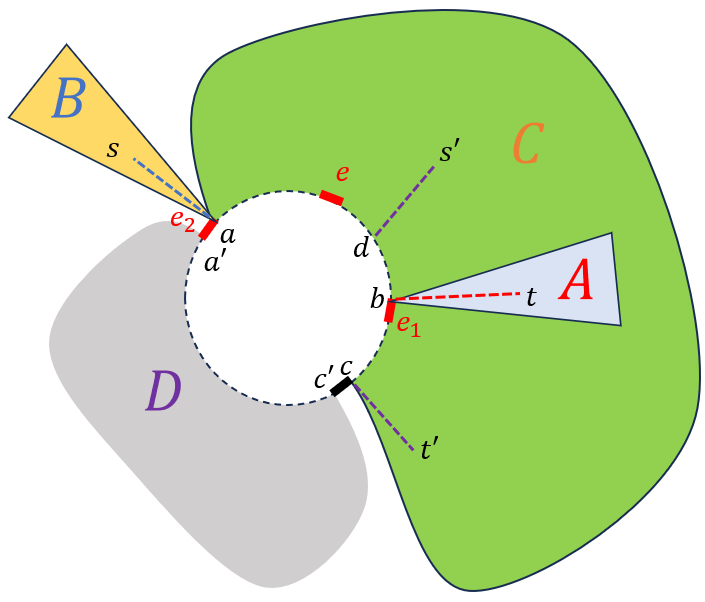} 
\caption{An example of the vertex-partition we use.} 
\label{fig:partition} 
\end{figure}
Now we will try to understand the structure of $H\backslash\{e_2\}$. First, we need to separate vertices in $V_H$ into four categories. These categories are defined as follows. Readers can refer to \Cref{fig:partition} as an example.
\begin{itemize}
\item The set $A\subseteq V_H$ contains $A_0:=V_{p[b,t]}\backslash\{b\}$ and all vertices that are disconnected from the cycle $SC_H$ in $H\backslash A_0$. Let $H_1:=H\backslash A$.

\item The set $B\subseteq V_H$ contains $B_0:=V_{p[s,a]}\backslash\{a\}$ and all vertices that are disconnected from the cycle $SC_H$ in $H_1\backslash B_0$. Let $H_2:=H_1\backslash B$.

\item The set $C\subseteq V_H$ contains $C_0:=V_{SC_H[a,c]}$ and all vertices that are connected with $C_0$ in $H_2\backslash \left(V_{SC_H}\backslash C_0\right)$. Let $H_3:=H_2\backslash C$

\item The set $D:=V_{H_3}\subseteq V_H$ contains all remaining vertices.
\end{itemize}
It's obvious that $H_1,H_2,H_3\subseteq H$ are all connected subgraphs, and $A,B,C,D$ form a partition of $V_H$. For any two vertex-set $U,V\subseteq V_G$ and original edge $ (u,v)\in E_G\cap \left(U\times V\right)$, we want to check whether $\dist_{H\backslash\{e_2\}}(u,v)\leq k$ holds. Given any undirected edge-set $R\subseteq V_G\times V_G$, let $\dist_{H\backslash\{e_2\}}(U,V,R):=\max_{u\in U, v\in V, (u,v)\in R}\left(\dist_{H\backslash\{e_2\}}(u,v)\right)$, we know $H\backslash\{e_2\}$ is a $k$-spanner if and only if
\begin{equation}\label{condition}
\dist_{H\backslash\{e_2\}}(V_G,V_G,E_G)=\max_{U,V\in\{A,B,C,D\}}\dist_{H\backslash\{e_2\}}(U,V,E_G)\leq k
\end{equation}
Next, we will upper bound these terms in groups by case analysis.
\begin{claim}\label{aabb}
$\dist_{H\backslash\{e_2\}}(A,A,E_G)\leq k $ and $\dist_{H\backslash\{e_2\}}(B,B,E_G)\leq k$
\end{claim}
\begin{proof}
We just give a proof for $\dist_{H\backslash\{e_2\}}(A,A,E_G)\leq k$. Another part follows from a similar approach.

Given any two vertices $(u,v)\in \left(A\times A\right)\cap E_G$, we aim to show $\dist_{H\backslash\{e_2\}}(u,v)\leq k$. By definition of $A$, we know $u,v$ is disconnected from the cycle after removing $V_{p[b,t]}$. Therefore, for any shortest path $p'\in P_{u,v}$, there are two possibilities:
\begin{itemize}
\item[(1)] $p'$ doesn't touch any vertex in $V_{p[b,t]}$. In this case, $p'$ doesn't touch any cycle-vertex in $V_{SC_H}$ either. This implies $p'\subseteq H\backslash\{e_2\}$. Since $|E_{p'}|=\dist_H(u,v)\leq k$, we conclude that $\dist_{H\backslash\{e_2\}}(u,v)\leq |E_{p'}|\leq k$.

\item[(2)] $p'$ touches some vertex in $V_{p[b,t]}$. Let $u',v'\in V_{p[b,t]}$ denote the first and last vertices in $V_{p[b,t]}$ appeared on $p'$ in the orientation from $u$ to $v$, we know that there are no cycle-edges on $p'[u,u']$ or $p'[v',v]$. Without loss of generality let's assume $u'$ appears early than $v'$ on $p[b,t]$ in the orientation from $b$ to $t$. Since $p$ is a shortest path and $u',v'\in V_{p[b,t]}$, we know $p[u',v']\in P_{u',v'}$ is certainly a shortest path between $u'$ and $v'$. Therefore, $p'':=p'[u,u']\cup p[u',v']\cup p'[v',v]\in P_{u,v}$ is also a shortest path between $u$ and $v$. Since $p''$ doesn't contain any cycle-edge in $E_{SC_H}$, $\dist_{H\backslash\{e_2\}}(u,v)\leq |E_{p''}|\leq \dist_H(u,v)\leq k$, which is exactly what we want.
\end{itemize}
\end{proof}
\begin{figure}[t!]
\centering 
\includegraphics[width=0.5\textwidth]{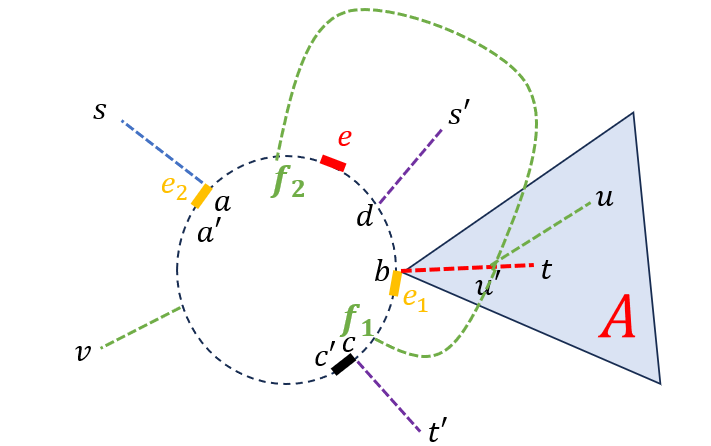} 
\caption{Two possible positions of $f$ in the proof of \Cref{au}. $f_1$ and $f_2$ denote the first and second case respectively.} 
\label{twopos} 
\end{figure}
\begin{claim}\label{au}
$\dist_{H\backslash\{e_2\}}(A,V_G\backslash A,E_G)\leq k$
\end{claim}
\begin{proof}
Suppose by contradiction we have $u\in A$ and $v\in V_G\backslash A$ such that $e_2$ is a danger for $(u,v)\in E_G$. Let $p'=\cd(u,v,e_2)\in P_{u,v}$, from \Cref{depen}, we know $e_2=(a,a')\in E_{p'}$. Moreover, from definition of $A$ we know there exists some vertex $u'\in \left(V_{p[b,t]}\cap V_{p'}\right)$ and $u'$ appears earlier than $a$ on $p'$ in the orientation from $u$ to $v$. There are two possibilities:
\begin{itemize}
\item[(1)] If $a'$ appears later than $a$ in $p'$ oriented from $u$ to $v$, it means $e_2\notin E_{p'[u',a]}$. However, since $p[a,u']=SC_H[a,b]\cup p[b,u']\in P_{a,u'}$ is a shortest path between $u'$ and $a$, we know $p''=p'[u,u']\cup p[u',a]\cup\{e_2\}\cup p'[a',v]\in P_{u,v}$ is a shortest path betwen $u$ and $v$. We can observe that $E_{SC_H[a',b]}\subseteq C_{p''}$ so $|C_{p''}|>|C_{p}|=|E_{SC_H[a,b]}|$, which contradicts the definition of $p$ whose cycle-segment should be the longest.

\item[(2)] If $a'$ appears earlier than $a$ in $p'$ oriented from $u$ to $v$. By \Cref{conse}, we know $e_2\in E_{SC_H[f,a]}\subseteq C_{p'}$ where $f\in V_{SC_H}$ denotes the first cycle-vertex in $p'$ oriented from $u$ to $v$. There are two possible positions for $f$: (See \Cref{twopos})
\begin{itemize}
\item[(a)] $e\notin E_{SC_H[f,a]}$. We know $p'[u',a]=p'[u',f]\cup SC_H[f,a]\in P_{u',a}$ is a shortest path between $u'$ and $a$. Therefore, $p''=p[s,a]\cup p'[a,u']\cup p[u',t]\in P_{s,t}$ is a shortest path between $s$ and $t$. However, since $C_{p''}=C_{p'[a,u']}\subseteq E_{SC_H[f,a]}$, it follows that $e\notin E_{p''}$. This contradicts \Cref{depen} since $e$ is a danger for $(s,t)$. 

\item[(b)] $e\in E_{SC_H[f,a]}$. In this case, $e\notin E_{SC_H[a,f]}$ and $f\in V_{SC_H[a,b]}$. By \Cref{conse} and definition of $f$, $p'[f,u']$ doesn't contain any cycle-edge. Since $f,u'\in V_p$ and $p'[f,u']\in P_{f,u'}$ is a shortest path between $f$ and $u'$, $p''=p[s,a]\cup SC_H[a,f]\cup p''[f,u']\cup p[u',t]\in P_{s,t}$ is a shortest path between $s$ and $t$. However, $e\notin E_{p''}$, which contradicts \Cref{depen}.
\end{itemize}
\end{itemize}
From the discussions above, all possible cases derive a contradiction, so the original assumption mustn't hold. We are done.
\end{proof}
Then, before we dive into other cases, the following simple observation is required
\begin{fact}\label{routelb}
$|A\cup B|\ge k+|E_{SC_H[a,b]}|-L_H$
\end{fact}
\begin{proof}
From \Cref{upg} we can see $|A\cup B|\ge |E_{p[s,a]}|+|E_{p[b,t]}|$. Since $e$ is a danger for $(s,t)$. By \Cref{dangerdef} and previous definitions we have $|E_{p}|=|E_{p[s,a]}|+|E_{SC_H[b,a]}|+|E_{p[b,t]}|> k$, which implies $|A\cup B|\ge |E_{p[s,a]}|+|E_{p[b,t]}|> k-|E_{SC_H[b,a]}|=k+|E_{SC_H[a,b]}|-L_H$.
\end{proof}
The next goal is to prove the following case.
\begin{claim}\label{cdin}
$\dist_{H\backslash\{e_2\}}(C\cup D,C\cup D,E_{G})\leq k$
\end{claim}
\begin{proof}
Since $H$ is connected and $e_2$ is a cycle-edge, $H\backslash \{e_2\}$ must be connected. Also, by the definitions of $C$ and $D$, we know the vertex-set $C\cup D$ contains all cycle-vertices $V_{SC_H}$, and all vertices of $C\cup D$ are from the same connected component after removing $A\cup B$ from $H$. This implies $C\cup D$ is connected on the subgraph $H_2'=H_2\backslash\{e_2\}\subseteq H\backslash\{e_2\}$. By \Cref{routelb}, we have
\begin{equation}\label{eq:allcd}
|V_{H_2'}|=|V_{H}|-|A\cup B|\leq n-k+L_H<3(n-k)\leq k
\end{equation}
The last inequality is from $k>\frac{3}{4}n+O(1)$. From \Cref{eq:allcd} and connectedness of $C\cup D$ in $H'_2$ discussed before, for any $u,v\in (C\cup D)\times (C\cup D)$, we know there is a path from $u$ to $v$ in $H_2'$ with length at most $|V_{H_2}|\leq k$, which implies $\dist_{H\backslash\{e_2\}}(u,v)\leq \dist_{H_2'}(u,v)\leq k$. We are done.
\end{proof}
The only remaining case is to show $\dist_{H\backslash \{e_2\}
}(B,C\cup D,E_G)\leq k$. The following simple observation is required.
\begin{fact}\label{cyclepartup}
$|E_{SC_H[b,c]}|\leq |E_{SC_H[a,b]}|$
\end{fact}
\begin{proof}
We know $E_{SC_H[b,c]}\subseteq C_{p_1}$ where $p_1\in P_{s',t'}$ and $(s',t')$ is endangered by $e_1$. We also know $E_{SC_H[a,b]}=C_{p}$. Therefore, by definition of $p$ whose cycle-segment should be the largest among all shortest paths between endangered pairs, we must have $|C_p|\ge |C_{p_1}|$ so the proof is done.
\end{proof}
\begin{claim}\label{bclb}
$\dist_{H\backslash\{e_2\}}(B,C,E_G)\leq k$
\end{claim}
\begin{proof}
By definitions of $B,C$, vertices in $B\cup C$ are connected in $H_1$. Moreover, for any original edge $(u,v)\in (B\times C)\cap E_G$, by simple observation there is a path $p_a$ between $u$ and $a$ that only touches vertices in $B\cup \{a\}$ and another path $p_b$ between $a$ and $v$ only using vertices in $C$. Therefore, there is a path $p'=p_a\cup p_b$ between $u$ and $v$ that only touches vertices in $B\cup C$. Since $a'\notin B\cup C$, we know $e_2\notin E_{p'}$, $p'\subseteq H\backslash \{e_2\}$ and $|E_{p'}|<|B\cup C|=|V_G|-|A|-|D|$. By definition since we've assumed $|E_{p[s,a]}|\leq |E_{p[b,t]}|$, by \Cref{routelb} we have 
\begin{equation*}
|A|\ge |E_{p[b,t]}|\ge \frac{1}{2}\left(|E_{p[s,a]}|+|E_{p[b,t]}|\right)\ge \frac{1}{2}\left(k+|E_{SC_H[a,b]}|-L_H\right).
\end{equation*}
On the other hand, we can also derive a simple lower bound on $|D|$: Since $D\cap V_{SC_H}=V_{SC_H}-V_{SC_H[a,c]}$, we have
\begin{equation*}
|D|\ge |V_{SC_H}|-|V_{SC_H[a,c]}|=L_H-|E_{SC_H[a,b]}|-|E_{SC_H[b,c]}|-1\ge L_H-2|E_{SC_H[a,b]}|-1.
\end{equation*}
The last inequality is from \Cref{cyclepartup}. Finally, we get
\begin{align*}
|E_{p'}|&<|V_G|-|A|-|D|\\
&\leq n-\frac{1}{2}\left(k+|E_{SC_H[a,b]}|-L_H\right)-\left(L_H-2|E_{SC_H[a,b]}|-1\right)\\
&= n-\frac{k}{2}-\frac{L_H}{2}+\frac{3|E_{SC_H[a,b]}|}{2}+1\\
&\leq n-\frac{k}{2}+\frac{L_H}{4}+1\\
&\leq \frac{3}{2}n-k+1
\end{align*}
The last two inequalities above are from the following two facts: 
\begin{itemize}
\item[(1)]$SC_H[a,b]=p[a,b]\in P_{a,b}$ is a shortest path between $a$ and $b$, so there must be $|E_{SC_H[a,b]}|\leq |E_{SC_H[b,a]}|$ and $|E_{SC_H[a,b]}|\leq \frac{L_H}{2}$.
\item[(2)]$L_H<2(n-k)$
\end{itemize}
When $k>\frac{3}{4}n+O(1)$, we have $|E_{p'}|\leq \frac{3}{2}n-k\leq k$. Since $p'\in H\backslash\{e_2\}$, we conclude that for any edge $(u,v)\in (B\times C)\cap E_G$, $\dist_{H\backslash\{e_2\}}(u,v)\leq k$, which proves our claim.
\end{proof}
The last task is to show 
$\dist_{H\backslash\{e_2\}}(B,D,E_G)\leq k$.
\begin{claim}\label{bdlb}
$\dist_{H\backslash\{e_2\}}(B,D,E_G)\leq k$
\end{claim}
\begin{proof}
For any original edge $(u,v)\in (B\times D)\cap E_G$, we know $\dist_H(u,v)\leq k$. We will prove $\dist_{H\backslash\{e_2\}}(u,v)\leq k$. 
\begin{figure}[t!]
\centering 
\includegraphics[width=1.0\textwidth]{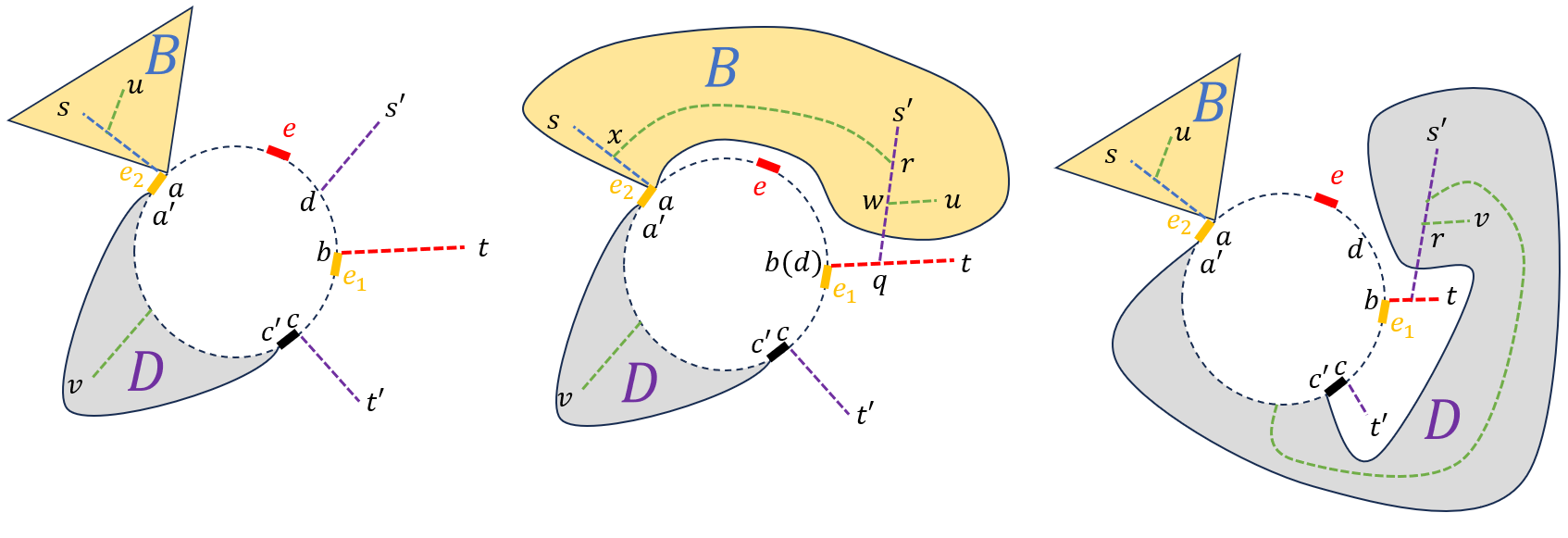} 
\caption{The left/middle/right graph illustrates the Case (1)/Case (2)/Case (3) in the proof of \Cref{bdlb}} 
\label{fig:excases} 
\end{figure}
By \Cref{doublenon}, $E_{p_1[c,t']}$ doesn't overlap with any edge in $E_p$, and it is connected to the cycle-vertex $c$. Therefore, by definition of the four vertex-classes, $V_{p_1[c,t']}\subseteq C$. Since $H[B]$ is connected, there must be a path $p_a$ between $u$ and $a$ in $H[B]$. On the other hand, since $H[D]$ is connected, there must be a path $p_b$ between $v$ and $c'\in V_{SC_H}$ in $H[D]$ where $(c,c')\in E_{SC_H}$ is a cycle-edge. Consider the path $p'=p_a\cup SC_H[a,c]\cup\{(c,c')\}\cup p_b$ between $u$ and $v$ in $H[B]\cup H[D]\cup SC_H[a,a']\subseteq H\backslash\{e_2\}$, then there are three possibilities. \Cref{fig:excases} gives drawing examples for these cases.
\begin{itemize}
\item[(1)] When neither $p_a$ nor $p_b$ touches vertices in $V_{p_1(d,s']}$. In this case, from the above discussion we've known $p'$ doesn't touch any vertex in $V_{p_1(d,s']}\cup V_{p_1(c,t']}$. Therefore, from a similar argument on $V_{p_1(d,s']}\cup V_{p_1(c,t']}$ as in \Cref{routelb}, we can get an upper bound on $|E_{p'}|$
\begin{equation*}
|E_{p'}|\leq |V_G|-|E_{p_1[d,s']}|-|E_{p_1[c,t']}|\leq n-(k+E_{SC_H}[a,b]-L_H)\leq 3(n-k)
\end{equation*}
When $k>\frac{3}{4}n+O(1)$, we derive $\dist_{H\backslash\{e_2\}}\leq |E_{p'}|\leq k$.
\item[(2)] When $p_a$ touches vertices in $V_{p_1(d,s']}$. In this case, there must be $b=d$ and $V_{p_1(d,s']}\in A\cup B$. As in \Cref{fig:excases}, there must be a `separation' vertex $q$ distinguishing $p[b,t]$ and $p_1[d,s']$. Without loss of generality, we can choose $p_a$ such that edges in $E_{p_a}\cap E_{p(q,s']}$ are continuous in $p_a$. Using this fact, we can assume there is a vertex $r\in V_{p_1[s',q]}\cap V_{p_a}$ such that $V_{p_a[a,r)}$ doesn't touch any vertex in $V_{p_1[s',q]}$. Let $w\in V_{p_a[u,r]}\cap V_{p_1[r,q]}$ denote the vertex that  $V_{p_a(w,u]}\cap V_{p_a[r,q]}=\emptyset$ holds. Similarly, we can assume edges in $E_{p_a}\cap E_{p[s,a]}$ are continuous in $p_a$ and let $x\in V_{p_a[r,a]}\cap V_{p[s,a]}$ denote the vertex such that $V_{p[s,a]}\cap V_{p_a(x,u]}=\emptyset$ and $p_a[x,a]=p[x,a]$. We can upper bound the size $|E_{p'}|$ as
\begin{align}
|E_{p'}|&\leq |V_G|-1-|A|-|V_{p_1(w,q)}|-|V_{p(x,s]}|\\ \label{firstpathsize}
&=n-(|E_{p_1[q,w]}|+|E_{p[b,q]}|+|E_{p[q,t]}|+|E_{p[x,s]}|)
\end{align}
On the other hand, we consider another path $p''=p_a[u,w]\cup p_1[w,b]\cup SC_H[b,c']\cup p_b\subseteq H\backslash\{e_2\}$ between $u$ and $v$. In this case, it doesn't touch any vertices in $V_{p_a(x,w)}\cup V_{p(a,s]}\cup V_{p(q,t]}$. These vertex-sets are pairwise disjoint by \Cref{innernon}, \Cref{doublenon}, and other previous discussions. Therefore, we can upper bound $|E_{p''}|$ as
\begin{align}
|E_{p''}|&\leq |V_G|-1-|V_{p_a(x,w)}|-|V_{p(a,s]}|-|V_{p(q,t]}|\\\label{secondpathsize}
&=n-(|E_{p_a[r,x]}|+|E_{p_1[r,w]}|+|E_{p[q,t]}|+|E_{p[a,x]}|+|E_{p[x,s]}|)
\end{align}
Consider the path $p_0=p[q,t]\cup p_1[q,w]\cup p_1[w,r]\cup p_a[r,x]\cup p[x,s]$ between $s$ and $t$. Since $e\notin E_{p_0}$ and $(s,t)$ is endangered by $e$, by \Cref{depen} we know
\begin{equation}\label{interpart}
|E_{p_0}|=|E_{p[q,t]}|+|E_{p_1[q,w]}|+|E_{p_1[w,r]}|+|E_{p_a[r,x]}|+|E_{p[x,s]}|>k
\end{equation}
From \Cref{interpart} and \Cref{routelb}, it follows that
\begin{align}
&\max\left(|E_{p_1[q,w]}|+|E_{p[b,q]}|,|E_{p_a[r,x]}|+|E_{p_1[r,w]}|+|E_{p[a,x]}|\right)+|E_{p[q,t]}|+|E_{p[x,s]}|\\
\ge&\frac{1}{2}\left(|E_{p[q,t]}|+|E_{p[b,q]}|+|E_{p[a,x]}|+|E_{p[x,s]}|+|E_{p[x,s]}|+|E_{p_a[r,x]}|+ |E_{p_1[r,q]}|+|E_{p[q,t]}|\right)\\
\ge&\frac{1}{2}\left(|E_{p[a,s]}|+|E_{p[b,t]}|+|E_{p_0}|\right)\\
\label{tworoutelb}\ge&\frac{2k-L_H}{2}\ge 2k-n
\end{align}
We combine \Cref{firstpathsize,secondpathsize,tworoutelb}, it gives an upper bound on $\min\left(|E_{p'}|,|E_{p''}|\right)$
\begin{equation*}
\min\left(|E_{p'}|,|E_{p''}|\right)\leq n-(2k-n)=2n-2k
\end{equation*}
When $k>\frac{3}{4}n+O(1)$, the above equation gives us $\dist_{H\backslash\{e_2\}}(u,v)\leq \min\left(|E_{p'}|,|E_{p''}|\right)\leq k$.
\item[(3)] When $p_b$ touches vertices in $V_{p_1(d,s']}$. In this case, we can observe $V_{p_1(d,s']}\in A\cup D$ and so $p_a$ doesn't touch $p_1(d,s']$. Without loss of generality, we can assume there exists a vertex $r\in V_{p_1(d,s']}\cap p_b$ such that $p_b[r,v]$ doesn't contain any cycle-vertex in $V_{SC_H}$. Let $p''=p_a\cup SC_H[a,d]\cup p_1[d,r]\cup p_b[r,v]\subseteq H\backslash\{e_2\}$ be a path between $u$ and $v$. Since $p''$ doesn't touch any vertex in $SC_H[b',a']$ we know $|E_{p''}|\leq n-|E_{SC_H[b,a]}|=n+|E_{SC_H[a,b]}|-L_H$. When $L_H-|E_{SC_H[a,b]}|\ge n-k$, we've got $\dist_{H\backslash\{e_2\}}(u,v)\leq |E_{p''}|\leq k$ and we are done. The only remaining task is to argue the case when $L_H-|E_{SC_H[a,b]}|<n-k$. In this case, since $p'$ doesn't contain any vertex in $A$. By \Cref{routelb} and assumption that $|E_{p[s,a]}|\leq |E_{p[b,t]}|$, we derive $|E_{p'}|\leq n-|A|\leq n-\frac{k+|E_{SC_H[a,b]}|-L_H}{2}\leq \frac{3}{2}n-k\leq k$ when $k>\frac{3}{4}n$. Therefore we can conclude $\dist_{H\backslash\{e_2\}}(u,v)\leq |E_{p'}|\leq k$ in this last case.
\end{itemize}
\end{proof}
Combine all results \Cref{aabb}, \Cref{au}, \Cref{cdin}, \Cref{bclb} and \Cref{bdlb}, we can prove \Cref{condition}, and our lemma holds
\end{proof}.

Now we are ready to prove \Cref{exgoodub}.

\begin{theorem}[Restated, \Cref{exgoodub}]
There is a deterministic polynomial time algorithm $A$ such that for all sufficiently large $n$ and any $k>\frac{3}{4}n+O(1)$, given any $n$-vertex graph $G$ and its $k$-spanner $H\subseteq G$, $A(G,H,k)$ outputs a subgraph $R\subseteq H$ of $H$ such that $R$ is a $k$-spanner of $G$ and $R$ has girth at least $k+2$.
\end{theorem}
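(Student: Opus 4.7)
The plan is to realize the theorem as a simple iterative edge-deletion algorithm built on top of Lemmas~\ref{largecycle} and \ref{exsmallcycle}. Given the input $k$-spanner $H$, I would repeatedly execute the following loop: compute the smallest cycle $SC_H$ of the current graph; if $|V_{SC_H}| \geq k+2$ then the girth requirement is met and the algorithm halts outputting the current subgraph; otherwise identify a single cycle-edge whose removal preserves the $k$-spanner property, delete it, and iterate.

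To choose the edge to delete I would split on $L_H := |V_{SC_H}|$. Since $k > \tfrac{3}{4}n + O(1)$ forces $2(n-k) < k+1$, whenever $L_H \leq k+1$ one of two regimes holds. If $2(n-k) \leq L_H \leq k+1$, Lemma~\ref{largecycle} directly produces an edge $e \in E_{SC_H}$ whose removal leaves a $k$-spanner; the proof of that lemma is constructive (build the tree decomposition $K$, trim iteratively via Lemma~\ref{trim}, then run the averaging argument on the $a_i, b_i$), so $e$ is computable in polynomial time. If $L_H < 2(n-k)$, I would first carry out the preprocessing of Section~\ref{sec:common}, repeatedly discarding any cycle-edge that is not a danger for any pair; every such deletion strictly decreases $|E_H|$, preserves the $k$-spanner property, and cannot shrink the girth. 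Once every edge of the current $SC_H$ is a danger, Lemma~\ref{exsmallcycle} guarantees that one of the two canonical cycle-edges $e_1$ or $e_2$ can be removed while remaining a $k$-spanner, and I simply test both candidates by a BFS distance check and delete whichever one works.

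For the polynomial-time and correctness claims I would argue as follows. Each pass of the outer loop strictly decreases the edge count of the working subgraph, so the loop halts after at most $|E_H|$ iterations. Each iteration performs only polynomially many shortest-path and BFS computations plus the constructive procedures inside Lemmas~\ref{largecycle} and \ref{exsmallcycle}, giving polynomial total runtime. The output is a subgraph of $H$ because every step is an edge deletion, it is a $k$-spanner of $G$ because every deletion preserves that invariant (by hypothesis of the invoked lemma, or trivially in the non-danger preprocessing), and the halting condition guarantees girth at least $k+2$.

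The conceptual obstacle has already been absorbed into Lemmas~\ref{largecycle} and \ref{exsmallcycle}, so once those are available the theorem is essentially an assembly result. The one thing that requires a brief sanity check is case exhaustiveness: I want $L_H \in \{L_H \geq k+2\} \cup \{2(n-k) \leq L_H \leq k+1\} \cup \{L_H < 2(n-k)\}$ to cover every possibility, and this reduces to the elementary inequality $2(n-k) \leq k+1$ which is implied by $k > \tfrac{3}{4}n + O(1)$ for sufficiently large $n$.
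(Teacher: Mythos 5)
Your proposal matches the paper's proof: both assemble Lemmas~\ref{largecycle} and \ref{exsmallcycle} into an iterative edge-deletion loop that splits on whether $L_H \ge k+2$, $2(n-k)\le L_H\le k+1$, or $L_H<2(n-k)$, with the preprocessing of \Cref{sec:common} invoked in the last case. You spell out the case exhaustiveness and termination arguments a bit more explicitly than the paper's terse writeup, but the approach is the same.
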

\begin{proof}
Given any spanner $H$ of $G$, we first find one of its smallest cycles and see whether $L_H\ge k+2$. If so, we are done. Otherwise, if $2(n-k)\leq L_H\leq k+1$, we apply \Cref{largecycle} to break the smallest cycle while preserving the transformed subgraph is still a $k$-spanner. By repeatedly running the above procedure, the only remaining case is $L_H<2(n-k)$. In this case, we apply \Cref{exsmallcycle} to break the smallest cycle. By repeatedly doing the above procedure, we can finally find a subgraph $R\subseteq H$ with girth at least $k+2$ and is still a $k$-spanner of $G$.
\end{proof}
\subsection{Upper Bound for Good Pairs}\label{sec:goodpair}
The next part is to prove \Cref{goodub}. It's kind of similar to the proof of \Cref{exsmallcycle}. In the algorithm given in \Cref{exgoodub}, each time when we find $L_H<2(n-k)$, we are guaranteed to find a cycle-edge, remove it, and the remaining subgraph is still a $k$-spanner. However, in \Cref{goodub}, each time we will remove a cycle-edge from the smallest cycle, and then (possibly) add an original edge in $E_G$ to $H$ in order to maintain the requirement of $k$-spanner. Our newly added edge won't create any new cycle shorter than $k+2$, so each time the deleted cycle-edge must be from a cycle with length at most $k+1$ in the original $k$-spanner. which means an edge won't be deleted twice among iterations. Therefore, the algorithm can terminate within polynomial time and will finally give us a $k$-spanner with girth at least $k+2$ and its size won't become larger.

\begin{figure}[t!]
\centering 
\includegraphics[width=0.4\textwidth]{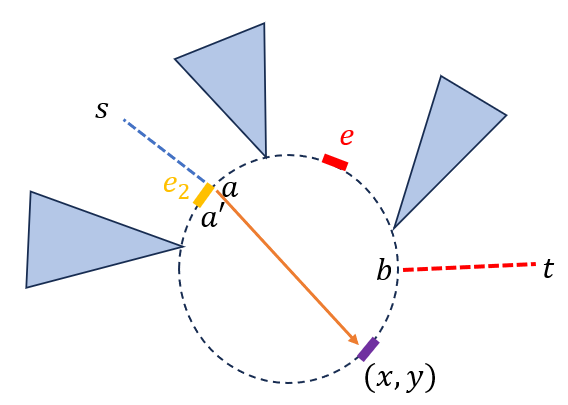} 
\caption{This graph illustrates the $a$-originated shortest path tree $T$. Note that all above edges are in $T$ except for the purple edge $(x,y)$} 
\label{tree} 
\end{figure}

We follow from the framework built in \Cref{sec:common}. Let $E_0$ denote the set of edges in $p[s,a]$, $p[a,t]$ and $SC_H$ 
except for the antipode cycle-edge of $a$ on $SC_H$ (The one cycle-edge $(x,y)$ such that exactly one of $|E_{SC_H[x,a]}|,|E_{SC_H[y,a]}|$ is larger than $\frac{L_H}{2}$), we construct a shortest path tree $T$ of $H$ from origin $a$ to every other vertices. In fact, we can construct such a shortest path tree $T$ satisfying $E_0\subseteq E_T$. Please see \Cref{tree} for an example. For any two vertices $u,w$, let $T[u,w]$ denote the tree-path from $u$ to $w$ on $T$. The following lemma is the crucial building block of \Cref{goodub}.

\begin{lemma}\label{gdsmallcycle}
If $k>\frac{2}{3}n+O(1)$, and $H\backslash\{e_2\}$ is not a $k$-spanner of $G$ where $H$ is defined as in  \Cref{sec:common} with $L_H<2(n-k)$. We can find at most one original edge $(u,v)\in E_G$ such that $H':=\left(H\backslash\{e_2\}\right)\cup \{(u,v)\}$ is still a $k$-spanner, and $(u,v)$ is not in any cycle with length at most $k+1$ in $H'$.
\end{lemma}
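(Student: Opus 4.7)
The plan is to leverage the structural framework from \Cref{sec:common} and mimic the case analysis of \Cref{exsmallcycle}, but now with one extra degree of freedom---inserting a single new edge. First, if $H \setminus \{e_2\}$ is itself a $k$-spanner, we add zero edges and are done. Otherwise $e_2$ must be a danger for some pair, and among all endangered pairs $(u', v') \in E_G$ with respect to $e_2$ I would pick one $(u^*, v^*)$ that maximizes $|\Ccd(u^*, v^*, e_2)|$, breaking ties arbitrarily, and let $p_2 := \cd(u^*, v^*, e_2)$. By \Cref{conse}, writing $d, c \in V_{SC_H}$ for the first and last cycle-vertices along $p_2$, we have $p_2 = p_2[u^*, d] \cup SC_H[d, c] \cup p_2[c, v^*]$ with $e_2 \in SC_H[d, c]$. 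Because $(s, t, e)$ was chosen in \Cref{sec:common} to globally maximize $|\Ccd(\cdot,\cdot,\cdot)|$ across all endangered-pair/danger combinations, we must have $|E_{SC_H[d, c]}| \leq |E_{SC_H[a, b]}|$. Arguments analogous to \Cref{placee1} then pin down $d \in V_{SC_H[b, a']}$ and $c \in V_{SC_H[a, b]}$, while arguments analogous to \Cref{innernon} and \Cref{doublenon} control the overlaps between $\{p[s,a], p[b,t]\}$ and $\{p_2[u^*, d], p_2[c, v^*]\}$. Note also that $(u^*, v^*) \notin E_H$: otherwise $\dist_{H \setminus \{e_2\}}(u^*, v^*) \leq 1$, contradicting endangerment.

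I would then take $H' := (H \setminus \{e_2\}) \cup \{(u^*, v^*)\}$. The girth condition is immediate: because $(u^*, v^*)$ is endangered, $\dist_{H \setminus \{e_2\}}(u^*, v^*) \geq k + 1$, so every cycle of $H'$ through the new edge has length at least $k + 2$; in particular $(u^*,v^*)$ sits in no cycle of length $\leq k+1$. For the $k$-spanner property, I would partition $V_H$ along the combined structure of $p$, $p_2$, and the shortest-path tree $T$ rooted at $a$ (whose edge-set contains $E_0$, so that $T$-distances from $a$ align with $H$-distances along $p$ and all of $SC_H$ except the antipode of $a$). For any other endangered pair $(u, v)$ with respect to $e_2$, I would route in $H'$ via two alternatives: (i) around the cycle the long way using $T$, whose length is controlled by $L_H < 2(n - k)$ combined with the bounded sizes of the regions in the partition; or (ii) through the inserted edge $(u^*, v^*)$, travelling $u \to u^*$ and $v^* \to v$ via $T$. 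The maximality of $|\Ccd(u^*, v^*, e_2)|$ forces every other endangered pair's cycle segment to fit inside $SC_H[d, c]$, so the inserted edge sits ``near'' every endangered pair's cycle entry/exit vertices, making the shortcut universal.

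The main obstacle is this final routing step: making every other endangered pair's distance fit within $k$ once the single shortcut is available. This is where the threshold $k > \tfrac{2}{3}n + O(1)$ enters, as the tighter analog of the pinch-point in \Cref{bdlb}: summing lower bounds on the regions analogous to $A, B, C, D$ using $L_H < 2(n - k)$ and verifying that the better of the two routing options above always stays within $k$ edges. The delicate part is the universality---extracting a single $(u^*, v^*)$ whose shortcut repairs \emph{every} endangered pair simultaneously rather than needing one edge per pair---which is precisely what the maximality of $|\Ccd|$ buys us and what drives the region-size bookkeeping.
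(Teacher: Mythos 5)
Your overall plan (break the cycle at $e_2$, insert one repair edge, bound all other endangered pairs' distances in $H'$) matches the paper's, and your girth observation is correct. But there is a genuine gap in how you select the repair edge and why that selection is ``universal.''

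You pick $(u^*, v^*)$ to \emph{maximize} $|\Ccd(u^*, v^*, e_2)|$ and claim this ``forces every other endangered pair's cycle segment to fit inside $SC_H[d,c]$.'' This is false. Each endangered pair's cycle segment is an arc of $SC_H$ containing $e_2 = (a',a)$, but two arcs through a common point need not be nested just because one is longer: an arc reaching farther to one side of $e_2$ can still be shorter overall, and hence not contained in the maximal one. Concretely, if $d^*,c^*$ denote the endpoints of the maximal arc and some other pair has $d_i \in V_{SC_H[d^*,a']}$ (so its arc extends past $d^*$ toward $a'$), containment fails even though $|E_{SC_H[d_i,c_i]}| \le |E_{SC_H[d^*,c^*]}|$. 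So the single shortcut is not automatically ``near'' every endangered pair's entry/exit, and the universal-routing step you flag as delicate does not go through as argued.

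The paper handles exactly this by making the \emph{opposite} selection and structuring it lexicographically: it picks $(u_g, v_g)$ so that $v_g'$ is closest to $a$ (not farthest), breaking ties by $u_g'$ closest to $a'$. This guarantees $v_i' \in V_{SC_H[v_g', b]}$ for every endangered pair $i$, which is the one-sided nesting that powers Case~1 of its analysis. Crucially, even that lexicographic choice does \emph{not} give full nesting: Case~2 explicitly treats $u_i' \in V_{SC_H[u_g', a']}$, and the proof there needs a second candidate path $p''$ and a trade-off inequality (\Cref{tradeoff}) to cover it. Your proposal has neither the selection rule that yields the one-sided nesting nor the two-path trade-off that rescues the non-nested case, so the ``region-size bookkeeping'' you gesture at cannot close the argument as written. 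To repair the proposal you would need to (a) switch to a selection rule that controls the position of $v_g'$ (and, on ties, $u_g'$) rather than the arc's total length, and (b) add a case split with a second routing option for the endangered pairs whose $u_i'$ lies on the far side of $u_g'$.
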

\begin{proof}
Since $H\backslash\{e_2\}$ isn't a $k$-spanner of $G$, there must be a bunch of original edges $(u_1,v_1),\dots,(u_m,v_m)\in E_G$ that are endangered by $e_2$. By \Cref{depen}, for any $(u_i,v_i)$, $e_2$ must be in exactly one of $T[u_i,a]$ and $T[a,v_i]$. Without loss generality, let's say $e_2\in T[u_i,a]$ and $e_2\notin T[a,v_i]$ for any pair $(u_i,v_i)$. We use $u'_i,v'_i\in V_{SC_H}$ to denote the first vertices on cycle $SC_H$ appeared on $T[u,a]$ and $T[v,a]$ in orientations from $u$ to $a$ and $v$ to $a$ respectively. By a similar argument from \Cref{cyclepartup}, we know $|E_{SC_H[u_i',v_i']}|\leq |E_{SC_H[a,b]}|$. Also, as in \Cref{placee1}, we can assume $u_i'\in V_{SC_H[b,a']}$ and $v_i'\in V_{SC_H[a,b]}$.

Let $(u_g,v_g)$ denote the one $e_2$-endangered pair such that $v'_g$ is closest to $a$ among $g\in[m]$, and if there are ties, $u'_g$ should be as close to $a'$ as possible. we just add $(u_g,v_g)\in E_G$ to $H$. Since $(u_g,v_g)$ is endangered by $e_2$,  we have $\dist_{H\backslash\{e_2\}}(u_g,v_g)>k$ so $(u_g,v_g)$ won't be in any cycle with length at most $k+1$ in $H'=\left(H\backslash\{e_2\}\right)\cup\{(u_g,v_g)\}$. The only remaining part is to show that for any other endangered demand pairs $(u_i,v_i)$, $\dist_{H'}(u_i,v_i)\leq k$. Fix any $(u_i,v_i),i\in[m]$, we will do case analysis on two possibilities:

\begin{figure}[t!]
\centering 
\includegraphics[width=0.8\textwidth]{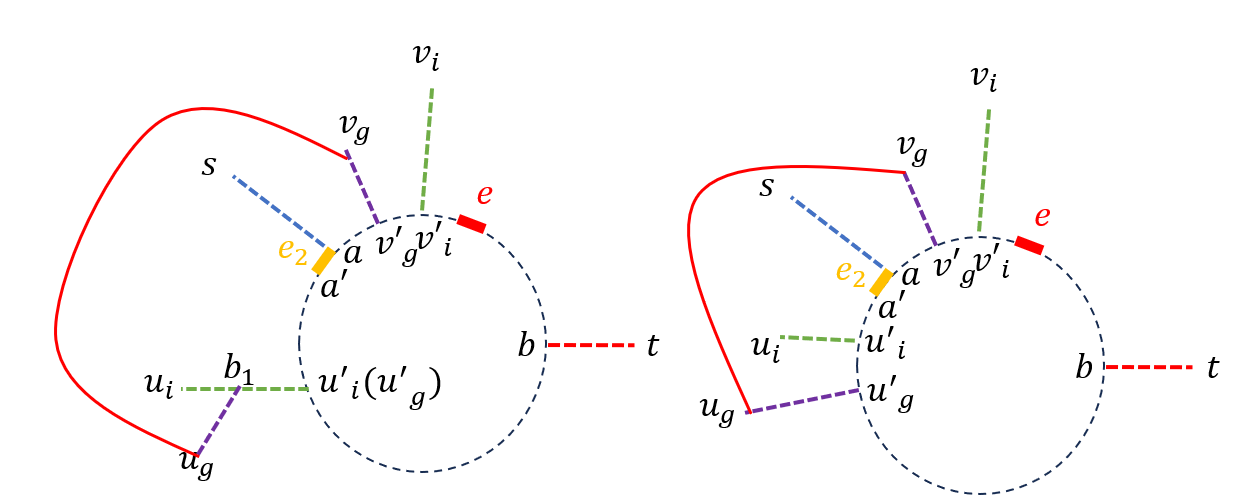} 
\caption{Two cases of $(u_i,v_i)$ in \Cref{gdsmallcycle}} 
\label{cases} 
\end{figure}
\begin{itemize}
\item[(1)] If $u'_i=u_g'$ or $v'_i=v_g'$: Without loss of generality we assume $u'_i=u'_g$, and the other case follows similarly. By definition of $u_g,v_g$ we have $u_i'\in V_{SC_H[b,u_g']}$ and $v_i'\in V_{SC_H[v_g',b]}$.
This case is shown in the first picture of \Cref{cases}. By a similar argument with \Cref{innernon} and \Cref{doublenon}, we know neither $V_{T[u_i,u_i']}$ nor $V_{T[v_i
,v_i]}$ overlaps with $V_{p[b,t]}$. Therefore, consider the path $p'=T[u_i,u_g']\cup T[u_g',u_g]\cup\{(u_g,v_g)\}\cup T[v_g,v'_g]\cup T[v_g',v_i']\cup T[v_i',v_i]$ is a path between $u_i$ and $v_i$ that avoids all vertices in $V_{p[b,t]}\cup V_{SC_H(v_i',u_i')}\cup V_{SC_H(u_g',v_g')}$. This gives us the following:
\begin{align*}
|E_{p'}|&< |V_{G}|-1-|V_{p(b,t]}|-|V_{SC_H(u_g',v_g')}|-|V_{SC_H(v_i',u_i')}|\\
&\leq n-\frac{k+|E_{SC_H[a,b]}|-L_H}{2}-|E_{SC_H[u_g',v_g']}|-(L_H-|E_{SC_H[u_i',v_i']}|)+2\\
&\leq n-\frac{k}{2}-\frac{L_H-|E_{SC_H[a,b]}|}{2}+2\\
&\leq n-\frac{k}{2}+2
\end{align*}
We use the fact $|E_{SC_H[u_i',v_i']}|\leq |E_{SC_H[a,b]}|$ discussed earlier in the above evaluation. When $k>\frac{2}{3}n+O(1)$, it's easy to see that $|E_{p'}|\leq k$. Since $p'$ doesn't touch $e_2\in E_{SC_H[u_g',v_g']}$, we get $\dist_{H'}(u_i,v_i)\leq |E_{p'}|\leq k$.

\item[(2)] If $u_i'\neq u_g'$ and $v_i'\neq v_g'$: By definition of $(u_g,v_g)$, there must be $v_i'\in V_{SC_H[v_g',b]}$. If $u_i'\in V_{SC_H[b,u_g']}$, we can just use the same argument as the first case to conclude $\dist_{H'}(u_i,v_i)\leq k$. Therefore, the only remaining case is when $u_i'\in V_{SC_H[u_g',a']}$ as the second picture of \Cref{cases} shows. In this case, consider the path $p'=T[u_i,u_i']\cup SC_H[u_g',u_i']\cup T[u_g',u_g]\cup\{(u_g,v_g)\}\cup T[v_g,v_g']\cup T[v_g',v_i]$. We can see $p'$ doesn't touch any vertex in $V_{p(b,t]}$, $V_{SC_H(u_i',v_g')}$ or $V_{SC_H(v_i',u_g')}$. We can bound the size of $E_{p'}$ as:
\begin{align*}
|E_{p'}|&<|V_G|-|V_{p(b,t]}|-|V_{SC_H(v_i',u_g')}|-|V_{SC_H(u_i',v_g')}|\\
&\leq n-\frac{k+|E_{SC_H[a,b]}|-L_H}{2}-(L_H-|E_{SC_H[u_g',v_g']}|-|E_{SC_H[u_i',v_i']}|)+3\\
&\leq n-\frac{k}{2}-\frac{L_H}{2}+|E_{SC_H[u_g',v_g']}|+\frac{|E_{SC_H[a,b]}|}{2}+3
\end{align*}
Since $p'$ doesn't contain $e_2$, if the above is equal or smaller than $k$, we get $\dist_{H'}(u_i,v_i)\leq |E_{p'}|\leq k$ and done. Otherwise, it must be the case that
\begin{equation}\label{tradeoff}
|E_{SC_H[u_g',v_g']}|\ge \frac{3}{2}k-n+\frac{L_H-|E_{SC_H[a,b]}|}{2}
\end{equation}
Now, consider the path $p''=T[u_i,u_i']\cup SC_H[v_i',u_i']\cup T[v_i',v_i]$ between $u_i$ and $v_i$. Since $u_i',v_i',u_g',v_g'$ are four distinct cycle-vertices, we can observe that neither $T[u_i,u_i']$ nor $T[v_i',v_i]$ has vertex on $T[v_g,v_g')$ or $T[u_g,u_g']$. we know that $p''$ doesn't touch any vertex on $p(b,t]$, $T[u_g,u_g')$ or $T[v_g,v_g')$. By a similar argument to \Cref{routelb}, we have $|V_{T[u_g,u_g')}|+|V_{T[v_g,v_g')}|>k+|E_{SC_H[u_g',v_g']}|-L_H$. Therefore, we can bound the size of $p''$ as:
\begin{align*}
|E_{p''}|&<|V_G|-|V_{p(b,t]}|-|V_{T[u_g,u_g')}|-|V_{T[v_g,v_g')}|\\
&\leq n-\frac{k+|E_{SC_H[a,b]}|-L_H}{2}-(k+|E_{SC_H[u_g',v_g']}|-L_H)\\
&\leq n-\frac{3}{2}k+\frac{L_H-|E_{SC_H[a,b]}|}{2}-|E_{SC_H[u'_g,v'_g]}|+L_H\\
&\leq 2n-3k+L_H\\
&\leq 4n-5k
\end{align*}
The above calculation uses \Cref{tradeoff} and the fact $L_H\leq 2(n-k)$. When $k>\frac{2}{3}n+O(1)$, we have $|E_{p''}|\leq k$. Since $p''$ also doesn't touch $e_2$, we get $\dist_{H'}(u_i,v_i)\leq|E_{p''}|\leq k$.
\end{itemize}
Since for any $(u_i,v_i),i\in[m]$ endagnered by $e_2$, we've shown $\dist_{H'}(u_i,v_i)\leq k$, $H'=\left(H\backslash\{e_2\}\right)\cup\{(u_g,v_g)\}$ must be a $k$-spanner.
\end{proof}
Now we are ready to prove \Cref{goodub}.
\begin{theorem}[Restated, \Cref{goodub}]\label{goodubit}
There is a deterministic polynomial time algorithm $A$ such that for all sufficiently large $n$ and any $k>\frac{2}{3}n+O(1)$, given any $n$-vertex graph $G$ and its $k$-spanner $H=(V,E_H)\subseteq G$, $A(G,H,k)$ outputs $R=(V,E_R)\subseteq G$ such that $R$ is a $k$-spanner of $G$ with girth at least $k+2$. Moreover, the size of 
 $R$ satisfies $|E_R|\leq|E_H|$.   
\end{theorem}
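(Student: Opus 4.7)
The plan is to iterate: find a smallest cycle of the current spanner and, if it is too short, break it using the tools already developed. Concretely, the algorithm initializes $R := H$ and loops while the smallest cycle of $R$ has length $L_R \leq k+1$. Inside the loop, I first strip off any danger-free cycle-edge, as in the preamble of \Cref{sec:common}; such a removal is free. Otherwise every cycle-edge of $SC_R$ is a danger. If $L_R \geq 2(n-k)$, I invoke \Cref{largecycle} to obtain an edge $e \in E_{SC_R}$ whose deletion still leaves a $k$-spanner, and set $R := R \setminus \{e\}$. If instead $L_R < 2(n-k)$, I set up the framework of \Cref{sec:common} to produce the cycle-edge $e_2 \in E_{SC_R}$: if $R \setminus \{e_2\}$ is already a $k$-spanner, I remove $e_2$; otherwise I apply \Cref{gdsmallcycle} to find an edge $(u_g, v_g) \in E_G$ such that $R' := (R \setminus \{e_2\}) \cup \{(u_g, v_g)\}$ is a $k$-spanner and $(u_g, v_g)$ lies on no cycle of length at most $k+1$ in $R'$, and replace $R$ by $R'$.

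Correctness of the $k$-spanner property at each iteration is immediate from the specific lemma applied; the loop exit condition guarantees that the final $R$ has girth at least $k+2$. The remaining claims are $|E_R| \leq |E_H|$ and polynomial running time. Every iteration deletes exactly one edge and inserts at most one, so monotonicity of $|E_R|$ is clear. Termination rests on the following invariant: at all times, every edge ever inserted by the procedure lies on no cycle of length at most $k+1$ in the current $R$. At the moment of insertion this is exactly the clause provided by \Cref{gdsmallcycle}; subsequent deletions can only lengthen surviving cycles, and a subsequent insertion $(u_g', v_g')$ creates new small cycles only if they contain $(u_g', v_g')$ itself, which is again forbidden by \Cref{gdsmallcycle} for that later iteration. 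Consequently no inserted edge can ever belong to the smallest cycle chosen by a later iteration, and inserted edges are never removed. Each iteration therefore strictly depletes the supply of original $E_H$-edges, bounding the iteration count (and hence the running time, since each step is polynomial) by $|E_H|$; combined with the ``delete one, insert at most one'' accounting, this also yields $|E_R| \leq |E_H|$.

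The main obstacle is precisely this bookkeeping invariant. Without it one could fear that a newly inserted edge is later excised by the algorithm, or that the edge count creeps up through repeated insertions on edges whose protection is lost over time; either failure mode would defeat both termination and the budget $|E_R|\leq |E_H|$. The verification of the invariant is short but hinges crucially on the ``not in any cycle of length $\leq k+1$'' clause of \Cref{gdsmallcycle} together with the elementary fact that deleting edges cannot shorten a surviving cycle. Once the invariant is in place the theorem reduces to a clean aggregation of \Cref{largecycle} and \Cref{gdsmallcycle}.
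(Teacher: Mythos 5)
Your proposal is correct and follows essentially the same route as the paper: iterate, breaking the smallest cycle of the current spanner via \Cref{largecycle} when its length is at least $2(n-k)$ and via \Cref{gdsmallcycle} when it is shorter, with the ``no newly created cycle of length $\leq k+1$'' clause of \Cref{gdsmallcycle} doing the work that prevents size blow-up and guarantees termination. The paper's proof of \Cref{goodub} is terse on this last point (the detailed termination bookkeeping appears only in the preamble to \Cref{sec:goodpair} and, more carefully, in the proof of \Cref{2approxubit}), whereas you make the key invariant---that every inserted edge is permanently off any cycle of length $\leq k+1$, hence never belongs to a later smallest cycle and is never removed---fully explicit. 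That invariant argument is sound (edge deletions cannot create a new short cycle through a surviving edge, and any short cycle created by a later insertion would have to contain the later inserted edge, which \Cref{gdsmallcycle} forbids), so your write-up fills in exactly the detail the paper leaves implicit without changing the underlying idea.
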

\begin{proof}
Given any spanner $H$, we first find its smallest cycle and see whether $L_H>k+1$. If so, we are done. Otherwise, if $2(n-k)\leq L_H\leq k+1$, we apply \Cref{largecycle} to break the smallest cycle while preserving the property of $k$-spanner. If $L_H<2(n-k)$, we apply \Cref{gdsmallcycle} to break the smallest cycle while preserving the modified graph is still a $k$-spanner, without size blow-up, and no newly added cycle has length smaller than $k+2$. By repeatedly doing the above procedure, we can finally find a subgraph $R\subseteq G$ with girth at least $k+2$ and is still a $k$-spanner of $G$. Moreover, $|E_R|\leq |E_H|$.
\end{proof}
\section{Proofs for Approximately-Good Upper Bounds}\label{sec:approx}
\subsection{Upper Bound for $(2,O(1))$-approx Good}\label{sec:2approx}
In this paragraph, we aim to prove \Cref{2approxub}. This proof is based on the framework built in \Cref{sec:goodpair}, so the notations will follow from \Cref{sec:goodpair}. The following is the central lemma we need.
\begin{lemma}\label{2approxmod}
If $k>\frac{4}{7}n+O(1)$, and $H$ is a $k$-spanner of $G$ with smallest cycle of length at most $k+1$, we can find at most two $e_2$-endangered pairs $(u_g,v_g),(u_h,v_h)\in E_G$ such that $H'=\left(H\backslash\{e_2\}\right)\cup\{(u_g,v_g),(u_h,v_h)\}$ is still a $k$-spanner, and the newly added edges in $H'$ are not in any cycle with length at most $k+1$.
\end{lemma}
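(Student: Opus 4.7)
The plan is to extend the single-edge argument of \Cref{gdsmallcycle} by selecting two $e_2$-endangered pairs rather than one. As in \Cref{gdsmallcycle}, I would first reduce to the case $L_H<2(n-k)$ (using \Cref{largecycle} to handle $L_H\geq 2(n-k)$ without adding any edge at all), and adopt the framework of \Cref{sec:common,sec:goodpair}: the smallest cycle $SC_H$ with vertices $a,b,a',b'$, the path $p$ with endpoints $s,t$, the cycle-edges $e_1,e_2$, and the shortest-path tree $T$ rooted at $a$ containing $E_0$. For each $e_2$-endangered pair $(u_i,v_i)$, recall the first cycle-vertices $u'_i\in V_{SC_H[b,a']}$ and $v'_i\in V_{SC_H[a,b]}$ on the tree paths $T[u_i,a]$ and $T[v_i,a]$.

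The central idea is to choose two extremal endangered pairs in complementary directions: let $(u_g,v_g)$ be the endangered pair whose $v'_g$ is closest to $a$ on the cycle, with ties broken by taking $u'_g$ closest to $b$, and let $(u_h,v_h)$ be the endangered pair whose $u'_h$ is closest to $a'$, with ties broken by taking $v'_h$ closest to $b$. By these choices, for every other endangered pair $(u_i,v_i)$ one has $v'_i\in V_{SC_H[v'_g,b]}$ and $u'_i\in V_{SC_H[b,u'_h]}$; in other words, $(u'_i,v'_i)$ is sandwiched between $(u'_g,v'_g)$ and $(u'_h,v'_h)$ along the cycle. I would then route $u_i$ to $v_i$ in $H'=(H\setminus\{e_2\})\cup\{(u_g,v_g),(u_h,v_h)\}$ through the added edge that is \emph{closer} in cycle distance, using a concatenation of tree segments and cycle arcs analogous to the path $p'$ constructed in Case~(1) of \Cref{gdsmallcycle}.

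Quantitatively, the plan is to upper-bound both candidate routes by $|V_G|$ minus the sizes of excised vertex sets: $V_{p(b,t]}$ together with two cycle arcs (one between the chosen pair's cycle witnesses, and one between $(u'_i,v'_i)$ and the appropriate chosen pair). The sandwich property forces the two excluded cycle arcs for the two candidate routes to cover, up to constants, the entire cycle $SC_H$, so taking the shorter of the two routes gives a bound of the form $n-\tfrac12|V_{p(b,t]}|-\tfrac12 L_H+O(|E_{SC_H[a,b]}|)+O(1)$. Combining this with \Cref{routelb}, the bound $|E_{SC_H[a,b]}|\leq L_H/2\leq n-k$, and the standing assumption $|E_{p[s,a]}|\leq |E_{p[b,t]}|$, the resulting inequality simplifies to a threshold of the form $k>\tfrac47 n+O(1)$, strictly weaker than the $\tfrac23$ threshold obtained with a single added edge.

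I expect the main obstacle to be the `middle' case where $(u'_i,v'_i)$ lies near the center of the sandwich and neither of the two added pairs is a priori closer; this requires a careful bookkeeping of overlaps between the tree segments $T[u_i,u'_i],T[v_i,v'_i],T[u_\star,u'_\star],T[v_\star,v'_\star]$ for $\star\in\{g,h\}$, analogous to but more intricate than the disjointness arguments \Cref{innernon,doublenon}. The no-short-cycle property of the two added edges is immediate: since both are $e_2$-endangered, $\dist_{H\setminus\{e_2\}}(u_g,v_g)>k$ and $\dist_{H\setminus\{e_2\}}(u_h,v_h)>k$, so any cycle of $H'$ containing either added edge must have length $>k+1$.
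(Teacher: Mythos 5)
Your high-level framing is sound — break the smallest cycle, add at most two endangered pairs, and argue the deletion of $e_2$ is compensated — and the observation that the added edges cannot lie on a short cycle is correct and matches the paper. However, the core quantitative content is missing or would not go through as sketched, for two concrete reasons.

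First, the selection rule is genuinely different from the paper's and does not set up the argument that is actually needed. You pick two ``extremal'' pairs (one with $v'_g$ closest to $a$, one with $u'_h$ closest to $a'$) simultaneously. The paper instead picks $(u_g,v_g)$ minimizing the cycle arc $|E_{SC_H[u'_g,v'_g]}|$, and then — only if needed — picks $(u_h,v_h)$ minimizing the same quantity among pairs that remain endangered \emph{after} adding $(u_g,v_g)$. That sequential choice injects the key inequality $\dist_{(H\setminus\{e_2\})\cup\{(u_g,v_g)\}}(u_h,v_h)>k$ (the paper's inequality \eqref{combine}, \eqref{combine2}), which is the engine behind \Cref{routelbplus}. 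Your simultaneous, extremal choice does not yield any such constraint linking the two chosen pairs, so there is no analogue of that inequality to work with.

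Second, and more importantly, your arithmetic does not reach $\tfrac47$ because the dominant term is missing. You propose to bound $|E_{p'}|$ by $n$ minus the sizes of: the path tail $V_{p(b,t]}$ and two cycle arcs. Even in the most favorable accounting, that is essentially the same set of avoided vertices used in case~(1) of \Cref{gdsmallcycle}, and it tops out around the $\tfrac23$ (or at best a $\tfrac34$-type) threshold; taking the shorter of two routes only halves the cycle-arc terms, which contribute $O(L_H)=O(n-k)$ and cannot move the threshold to $\tfrac47$. What the paper actually subtracts in addition — and what your plan omits entirely — is the ``excess'' contribution of the \emph{other} added pair's tree legs, namely $E_{T[u_q,u'_q]}\cup E_{T[v_q,v'_q]}$ minus any overlap with the legs already on the route, quantified by the $\mathsf{Ex}(\cdot,\cdot)$ lower bound of \Cref{routelbplus} (roughly $\tfrac{k+2|E_{SC_H[u'_g,v'_g]}|-L_H}{2}\approx \tfrac{k}{2}$ extra avoided edges). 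That single term is what converts a $\tfrac23/\tfrac34$ threshold into $\tfrac47$. Since your selection rule and your route bookkeeping have no mechanism that produces this term, the computation will stall well above $\tfrac47$. You correctly flag that ``careful bookkeeping of overlaps'' is the main obstacle, but that bookkeeping is not an incremental tightening of \Cref{innernon,doublenon}: it is a new quantitative lemma (the Excess lemma) that is only available because of the paper's minimizing, sequential selection.
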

\begin{proof}
If $H\backslash\{e_2\}$ is a $k$-spanner, we are done. Suppose not, there are a bunch of original edges $(u_1,v_1),\dots,(u_m,v_m)\in E_G$ endangered by $e_2$ as discussed in \Cref{gdsmallcycle}. Let $(u_g,v_g)$ be the endangered pair such that $|E_{SC_H[u_g',v_g']}|$ is minimized (You can break ties arbitrarily). If $H''=\left(H\backslash\{e_2\}\right)\cup(u_g,v_g)$ is a $k$-spanner, we are done. If it's still not, we choose $(u_h,v_h),h\in[m]$ to be the $e_2$-endangered pairs in $H''$ such that $|E_{SC_H[u_h',v_h']}|$ is minimized. 
Let $H'=\left(H\backslash\{e_2\}\right)\cup\{(u_g,v_g),(u_h,v_h)\}$, we now argue that for any  $(u_i,v_i),i\in[m]$, $\dist_{H'}(u_i,v_i)\leq k$. Let $R=\{i:\dist_{H'}(u_i,v_i)>k\}$. Suppose by contradiction $R$ is non-empty, our goal is to derive some contradiction. Next, we give a technical lemma required
\begin{figure}[t!]
\centering 
\includegraphics[width=1.0\textwidth]{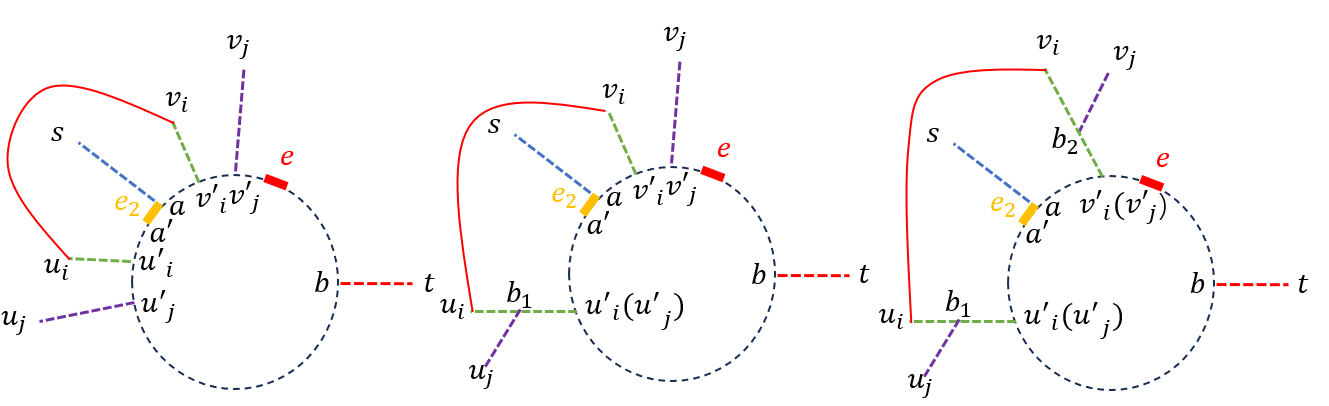} 
\caption{Three cases of intersection patterns of tree paths in the proof of \Cref{routelbplus}} 
\label{casesover} 
\end{figure}
\begin{claim}\label{routelbplus}
For any distinct $i,j\in R\cup \{g,h\}$ such that at least one of them is $g$ or $h$, we define $\mathsf{Ex}(i,j):=|\left(E_{T[u_i,u_i']}\cup E_{T[v_i,v_i']}\right)\backslash\left(E_{T[u_j,u_j']}\cup E_{T[v_j,v_j']}\right)|$ and $\mathsf{Ex}(j,i)$ similarly. Then, both $\mathsf{Ex}(i,j)$ and $\mathsf{Ex}(j,i)$ are at least $\frac{k+1+2|E_{SC_H[u_g',v_g']}|-L_H}{2}$
\end{claim}
\begin{proof}
$T[u_i,u_i'], T[v_i,v_i'], T[u_j,u_j']$ and $T[v_j,v_j']$ are all segments on some shortest path to the root $a$, so they are all tree paths from a vertex to its ancestor. By \Cref{depen}, since $u_i,u_j$ are on the subtree initiated by edge $e_2$ and $v_i,v_j$ are on other subtrees, we know $V_{T[u_i,u_i']}\cup V_{T[u_j,u_j']}$ is disjoint from $V_{T[v_i,v_i']}\cup V_{T[v_j,v_j']}$. There are three cases of the intersection pattern of these tree paths. Without loss of generality, we can assume $i\in \{g,h\}$, $|E_{SC_H[u_i',v_i']}|\leq |E_{SC_H[u_j',v_j']}|$ (By the definition of $g,h$) and $\dist_{\left(H\backslash\{e_2\}\right)\cup\{(u_i,v_i)\}}(u_j,v_j)>k$. We have to argue both of $\maex(i,j),\maex(j,i)\ge \frac{k+1+2|E_{SC_H[u_g',v_g']}|-L_H}{2}$. There are three cases about intersection patterns of these tree paths as shown in \Cref{casesover}. Let's analyze them one by one.
\begin{itemize}
\item[(1)] If $V_{T[u_i,u_i']}$ doesn't intersect with $V_{T[u_j,u_j']}$ and neither does $V_{T[v_i,v_i']}$ with $V_{T[v_j,v_j']}$, it's the case as shown in the first picture of \Cref{casesover}. In this case by a similar argument as in \Cref{routelb}, we get $\maex(i,j)=|E_{T[u_i,u_i']}\cup E_{T[v_i,v_i']}|\ge k+1+|E_{SC_H[u_i',v_i']}|-L_H\ge \frac{k+1+2|E_{SC_H[u_g',v_g']}|-L_H}{2}$. The last inequality is from the definition of $g$ that $|E_{SC_H[u_g',v_g']}|$ is minimized. $\maex(j,i)\ge \frac{k+1+2|E_{SC_H[u_g',v_g']}|-L_H}{2}$ follows from the same argument.

\item[(2)] If $V_{T[v_i,v_i']}$ doesn't intersect with $V_{T[v_j,v_j']}$ but $V_{T[u_i,u_i']}$ and $V_{T[u_j,u_j']}$ does (Or vice versa symmetrically), this case is shown in the second graph of \Cref{casesover}. Let $b_1$ be the lowest common ancestor of $u_i$ and $u_j$, we know $b_1$ is on the path $T[u_i,u_i']$ and $T[u_j,u_j']$ and $u_i'=u_j'$. Since $|E_{SC_H[u_i',v_i']}|\leq |E_{SC_H[u_j',v_j']}|$, we must have $v_j'\in V_{SC_H[v_i',b]}$. Let $x:=|E_{SC_H[u_i',v_i']}|,y:=|E_{SC_H[v_i',v_j']}|$, we give a series of inequalities:
\begin{align}
\label{iless}|E_{T[u_i,b_1]}|+|E_{T[b_1,u_i']}|+|E_{T[v_i,v_i']}|&\leq k-x\\
\label{jless}|E_{T[u_j,b_1]}|+|E_{T[b_1,u_i']}|+|E_{T[v_j,v_j']}|&\leq k-x-y\\
\label{imore}|E_{T[u_i,b_1]}|+|E_{T[b_1,u_i']}|+|E_{T[v_i,v_i']}|&\ge k+1+x-L_H\\
\label{jmore}|E_{T[u_j,b_1]}|+|E_{T[b_1,u_i']}|+|E_{T[v_j,v_j']}|&\ge k+1+x+y-L_H\\
\label{combine}|E_{T[u_j,b_1]}|+|E_{T[b_1,u_i]}|+|E_{T[v_i,v_i']}|+|E_{T[v_j',v_j]}|&\ge k-y
\end{align}
\Cref{iless} and \Cref{jless} are from the fact that $(u_i,v_i),(u_j,v_j)$ are endangered pairs of $e_2$, so by \Cref{depen} $T[u_i,v_i]\in P_{u_i,v_i},T[u_j,v_j]\in P_{u_j,v_j}$ are corresponding shortest paths which has length at most $k$ by property of $k$-spanner $H$. \Cref{imore} and \Cref{jmore} are from basically the same argument as \Cref{routelb} when considering $(u_i,v_i),(u_j,v_j)$ as $e_2$-endangered pairs and thus has distance larger than $k$ in $H\backslash\{e_2\}$. \Cref{combine} can be derived as follows: Consider path $p'=T[u_j,b_1]\cup T[b_1,u_i]\cup\{(u_i,v_i)\}\cup T[v_i,v_i']\cup SC_H[v_i',v_j']\cup T[v_j',v_j]$. This path is within $H_i:=\left(H\backslash\{e_2\}\right)\cup\{(u_i,v_i)\}$ but $\dist_{H_i}(u_j,v_j)>k$. Therefore, using $|E_{p'}|>k$ we get \Cref{combine}. From these, we have:
\begin{align*}
\maex(i,j)=|E_{T[u_i,b_1]}|+|E_{T[v_i',v_i]}|=\frac{(\ref{imore})+((\ref{combine})-(\ref{jless}))}{2}\ge \frac{k+1+2x-L_H}{2}\\
\maex(j,i)=|E_{T[u_j,b_1]}|+|E_{T[v_j',v_j]}|=\frac{(\ref{jmore})+((\ref{combine})-(\ref{iless}))}{2}\ge \frac{k+1+2x-L_H}{2}
\end{align*}
Since $|E_{SC_H[u_g',v_g']}|$ is minimized by definition, we have $|E_{SC_H[u_g',v_g']}|\leq x$, and the above calculation derives our statement.

\item[(3)] If $V_{T[v_i,v_i']}$ intersects with $V_{T[v_j,v_j']}$ and so does $V_{T[u_i,u_i']}$ with $V_{T[u_j,u_j']}$, By property of the tree we can conclude $u_i'=u_j',v_i'=v_j'$ and get the third graph of \Cref{casesover}. Similarly, we define $b_1,b_2$ as the lowest common ancestors of $u_i,u_j$ and $v_i,v_j$ respectively, and let $x:=|E_{SC_H[u_i',v_i']}|,y_u:=|E_{T[b_1,u_i']}|,y_v:=|E_{T[b_2,v_i']}|$. We can write the following inequalities
\begin{align}
\label{jless2}|E_{T[u_j,b_1]}|+|E_{T[v_j,b_2]}|&\leq k-x-y_u-y_v\\
\label{imore2}|E_{T[u_i,b_1]}|+|E_{T[v_i,b_2]}|&\ge k+1+x-L_H-y_u-y_v\\
\label{combine2}|E_{T[u_j,b_1]}|+|E_{T[b_1,u_i]}|+|E_{T[v_i,b_2]}|+|E_{T[b_2,v_j]}|&\ge k
\end{align}
These three inequalities \Cref{jless2}, \Cref{imore2} and \Cref{combine2} are derived in very similar ways as for \Cref{jless}, \Cref{imore} and \Cref{combine} respectively, so we omit details here. By a similar argument, we can show:
\begin{equation*}
\maex(i,j)=|E_{T[u_i,b_1]}|+|E_{T[b_2,v_i]}|=\frac{(\ref{imore2})+((\ref{combine2})-(\ref{jless2}))}{2}\ge \frac{k+1+2x-L_H}{2}
\end{equation*}
By a completely symmetric argument, we can get $\maex(j,i)\ge\frac{k+1+2x-L_H}{2}$. Since $|E_{SC_H[u_g',v_g']}|\leq x$. We can derive our statement.
\end{itemize}
\end{proof}
\begin{figure}[t!]
\centering 
\includegraphics[width=1.0\textwidth]{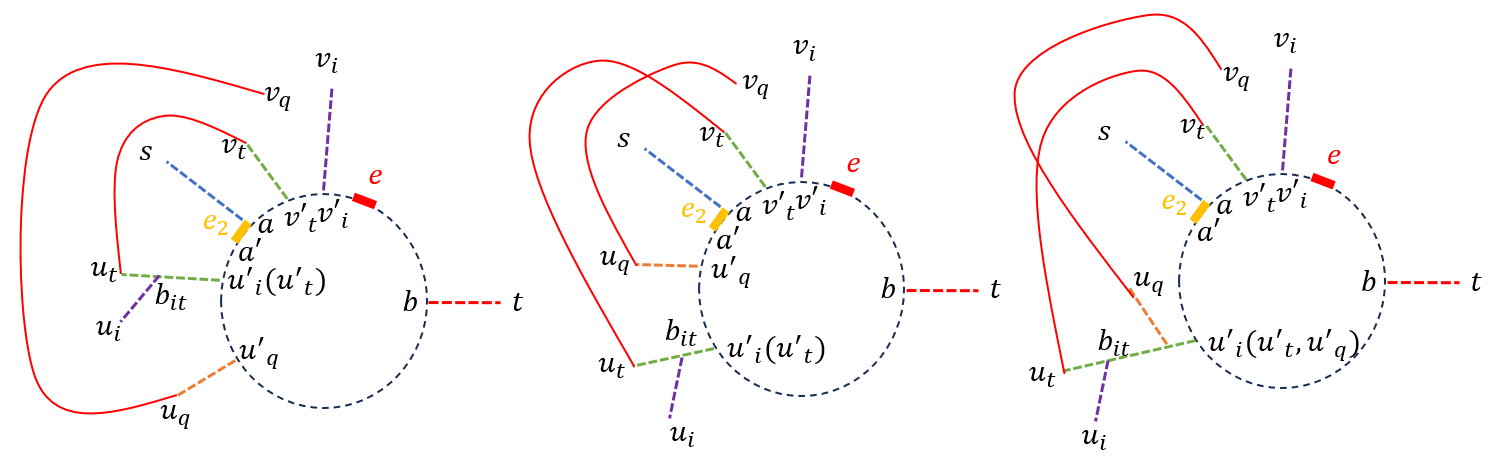} 
\caption{Three scenarios when $u_i'=u_t'$ for some $t\in\{g,h\}$. The red curves denote the original edges in $E_G$. Note that we don't identify and in fact don't care about the specific location of $T[v_q,v_q']$ so we don't draw it.} 
\label{cases2approx} 
\end{figure}
\begin{fact}\label{threepaths}
For any distinct $i,j,k\leq m$, we have
\begin{align*}
&|E_{T[u_i,u_i']}\backslash\left(E_{T[u_j,u_j']}\cup E_{T[u_k,u_k']}\right)|=\min\left(|E_{T[u_i,u_i']}\backslash E_{T[u_j,u_j']}|,|E_{T[u_i,u_i']}\backslash E_{T[u_k,u_k']}|\right)\\
&|E_{T[v_i,v_i']}\backslash\left(E_{T[v_j,v_j']}\cup E_{T[v_k,v_k']}\right)|=\min\left(|E_{T[v_i,v_i']}\backslash E_{T[v_j,v_j']}|,|E_{T[v_i,v_i']}\backslash E_{T[v_k,v_k']}|\right)
\end{align*}
\end{fact}
\begin{proof}
These two equations are symmetric, we only show the first one. Suppose if $T[u_i,u_i']$ doesn't intersect with at least one of $T[u_j,u_j']$ and $T[u_k,u_k']$. We are done. Otherwise, Since for any distinct $i,j\leq m$, if $T[u_i,u_i']$ shares some common edge with $T[u_j,u_j']$, there must be $u_i'=u_j'$. Therefore, $E_{T[u_i,u_i']}\backslash E_{T[u_j,u_j']}$ must form a tree path $T[u_i,b_{ij}]$ where $b_{ij}$ is the lowest common ancestor of $u_i$ and $u_j$. Therefore, we have
\begin{align*}
|E_{T[u_i,u_i']}\backslash\left(E_{T[u_j,u_j']}\cup E_{T[u_k,u_k']}\right)|&=|E_{T[u_i,b_{ij}]}\cap E_{T[u_i,b_{ik}]}|\\&=\min\{|E_{T[u_i,b_{ij}]}|,|E_{T[u_i,b_{ik}]}|\}\\
&=\min\left(|E_{T[u_i,u_i']}\backslash E_{T[u_j,u_j']}|,|E_{T[u_i,u_i']}\backslash E_{T[u_k,u_k']}|\right)
\end{align*}
\end{proof}
With the above two lemmas, we are ready to prove for any $i\in R$, $\dist_{H'}(u_i,v_i)\leq k$, which derives a contradiction and therefore completes the whole proof. Fix any $i\in R$, there are two cases:
\begin{itemize}
\item[(1)] If for some $t\in\{g,h\}$ and another $q:=\{g,h\}\backslash\{t\}$, there is $u_i'=u_t'$ and $b_{it}$ is not in $V_{T[u_{q},u_{q}')}$ (Symmetric for the case of $v_i'=v_t'$ and we omit here), where $b_{it}$ is the lowest common ancestor of $u_i$ and $u_t$. By definition of $t\in\{g,h\}$, $|E_{SC_H[u_t',v_t']}|$ is minimized, so $|E_{SC_H[u_i',v_i']}|=|E_{SC_H[u_t',v_i']}|\ge |E_{SC_H[u_t',v_t']}|$ and $v_i'\in V_{SC_H[v_t',b]}$. The three corresponding scenarios are shown in \Cref{cases2approx}.  Consider the path $p'=T[u_i,b_{it}]\cup T[b_{it},u_t]\cup\{(u_t,v_t)\}\cup T[v_t,v_t']\cup SC_H[v_t',v_i']\cup T[v_i',v_i]$ between $u_i$ and $v_i$. This path is in $H'$ and only uses edges in $T$ except for the one newly added edge $(u_t,v_t)$. Moreover, we can observe that
\begin{itemize}
\item[(a)] $p'$ doesn't touch edges in $E_{SC_H[u_t',v_t']}$ which are also tree-edges. By definition of $g$ we have $|E_{SC_H[u_t',v_t']}|\ge |E_{SC_H[u_g',v_g']}|$
\item[(b)] Since $v_i'\in V_{SC_H[v_t',b]}$, $p'$ doesn't touch cycle-edges in $E_{SC_H[v_i',u_t']}$, $|E_{SC_H[v_i',u_t']}|=L_H-|E_{SC_H[u_i',v_i']}|\ge L_H-|E_{SC_H[a,b]}|$. The last inequality is basically the same as \Cref{cyclepartup}. In these avoided edges at most one of them isn't tree-edge. Obviously, these edges are disjoint with the previous set of edges avoided.
\item[(c)] By a basically same argument as \Cref{innernon} and \Cref{doublenon}, $p'$ doesn't use any edges on $p[b,t]$, which are all tree-edges and not cycle-edges. By  \Cref{routelb}, we know $|E_{p[b,t]}|\ge \frac{k+|E_{SC_H[a,b]}|-L_H}{2}$. Since the previous two classes are all cycle-edges, this set is also disjoint with the previous sets of edges avoided.
\item[(d)] $p'$ doesn't touch edges in $E_{T[u_q,u_q']}$ and $E_{T[v_q,v_q']}\backslash\left(E_{T[v_i,v_i']}\cup E_{T[v_t,v_t']}\right)$. The number of edges not in $p'$ in this case is at least:
\begin{align}
&|E_{T[u_q,u_q']}|+|E_{T[v_q,v_q']}\backslash\left(E_{T[v_i,v_i']}\cup E_{T[v_t,v_t']}\right)|\\
\label{firsteq}\ge& |E_{T[u_q,u_q']}|+\min\left(|E_{T[v_q,v_q']}\backslash E_{T[v_i,v_i']}|,|E_{T[v_q,v_q']}\backslash E_{T[v_t,v_t']}|\right)\\
\ge&\min\left(\maex(q,i),\maex(q,t)\right)\\
\label{secondeq}\ge&\frac{k+1+|E_{SC_H[u_g',v_g']}|-L_H}{2}
\end{align}
\Cref{firsteq} is from \Cref{threepaths} and \Cref{secondeq} is from \Cref{routelbplus} since $q\in\{g,h\}$ and $i,t,q\in \{g,h\}\cup R$. These avoided edges are all tree-edges and not on the cycle so they are disjoint with the first and second set of edges avoided above. Moreover, by a similar argument to \Cref{innernon} and \Cref{doublenon}, this set is disjoint with the third set of avoided edges just discussed. 
\end{itemize}
All of the four classes of avoided edges discussed above are pairwise disjoint and are all tree-edges except for at most one, therefore. These edges are not in $p'$ and $p'$ is in the tree $T$. Therefore, the length of $p'$ can be bounded.
\begin{align*}
|E_{p'}|
\leq&1+|E_T|-|E_{SC_H[u_g',v_g']}|-(L_H-|E_{SC_H[a,b]}|-1)\\
&-\frac{k+|E_{SC_H[a,b]}|-L_H}{2}-\frac{k+1+|E_{SC_H[u_g',v_g']}|-L_H}{2}\\
\leq&n-\frac{3}{2}|E_{SC_H[u_g',v_g']}|-k+\frac{|E_{SC_H[a,b]}|}{2}+1\\
\leq&n-k+\frac{k}{4}+2
\end{align*}
The last inequality is from the property of shortest paths that $|E_{SC_H[a,b]}|\leq \frac{L_H}{2}\leq\frac{k+1}{2}$. When $k>\frac{4}{7}n+O(1)$, we have $|E_{p'}|\leq k$. Since $p'$ is in $H'$, we get $\dist_{H'}(u_i,v_i)\leq k$.

\item[(2)] If the above condition doesn't hold. Then either $u_i'\neq u_h'$ or $u_i'=u_h'$ and $b_{ih}\in V_{T[u_g,u_g']}$ (Symmetric for $v_i,v_h,v_g$ e.t.c). In both cases we have $T[u_h,u_h']\backslash T[u_g,u_g']$ doesn't edge-intersect with $T[u_i,u_i']$ (Symmetric for $v_i,v_h,v_g$ e.t.c). Consider the path $p'=T[u_i,u_i']\cup T[u_i',u_g']\cup T[u_g',u_g]\cup \{(u_g,v_g)\}\cup T[v_g,v_g']\cup T[v_g',v_i']\cup T[v_i',v_i]$. We can see $p'$ is a path on $H'$ and uses only one edge $(u_g,v_g)$ that is not on tree $T$. Let's consider which tree-edges it doesn't use.
\begin{itemize}
\item[(a)] By a basically same argument as \Cref{innernon} and \Cref{doublenon} as in the previous case, $p'$ doesn't use any edges on $p[b,t]$, which are all tree-edges and not cycle-edges. By \Cref{routelb}, we know $|E_{p[b,t]}|\ge \frac{k+|E_{SC_H[a,b]}|-L_H}{2}$
\item[(b)] $p'$ doesn't use edges in $E_{T[u_h,u_h']}\backslash\left(E_{T[u_g,u_g']}\cup E_{T[u_i,u_i']}\right)$ and $E_{T[v_h,v_h']}\backslash\left(E_{T[v_g,v_g']}\cup E_{T[v_i,v_i']}\right)$. They are all tree edges and not cycle-edges. Moreover, by a simple case analysis, in this case we have $E_{T[u_h,u_h']}\backslash\left(E_{T[u_g,u_g']}\cup E_{T[u_i,u_i']}\right)=E_{T[u_h,u_h']}\backslash E_{T[u_g,u_g']}$ and the similar result for $v_i,v_h,v_g$ e.t.c. Therefore, this set of edges is $\left(E_{T[u_h,u_h']}\cup E_{T[v_h,v_h']}\right)\backslash \left(E_{T[u_g,u_g']}\cup E_{T[v_g,v_g']}\right)$, which is exactly $\maex(h,g)$. From \Cref{routelbplus}, this set has at least $\frac{k+1+2|E_{SC_H[u_g',v_g']}|-L_H}{2}$ edges. By \Cref{innernon} and \Cref{doublenon}, it's disjoint with the previous set of avoided edges.
\item[(c)] In terms of cycle-edges, by the property of trees, $p'$ only touches cycle-edges in the edge-set $E_{T[u_i',a]}\cup E_{T[u_g',a]}\cup E_{T[v_i',a]}\cup E_{[v_g',a]}$. By definition of the tree $T$ and $u_i',v_i',u_g',v_g'$, we know $E_{T[u_i',a]}\cup E_{T[v_i',a]}=E_{SC_H[u_i',v_i']}$ and $E_{T[u_g',a]}\cup E_{[v_g',a]}=E_{SC_H[u_g',v_g']}$. It means $p'$ doesn't touch at least $L_H-|E_{SC_H[u_i',v_i']}|-|E_{SC_H[u_g',v_g']}|\ge L_H-|E_{SC_H[a,b]}|-|E_{SC_H[u_g',v_g']}|$ cycle-edges by \Cref{cyclepartup}, and at most one of cycle-edge avoided is not tree-edge. Since they are all cycle-edges, this set of edges is disjoint with the previous two avoided sets.
\end{itemize}
All three classes above of edges avoided in $p'$ are pairwise disjoint and are all tree-edges except for at most one. Therefore, the length of $p'$ can be bounded as
\begin{align*}
|E_{p'}|\leq&1+|E_T|-\frac{k+|E_{SC_H[a,b]}|-L_H}{2}-\frac{k+1+2|E_{SC_H[u_g',v_g']}|-L_H}{2}\\
&-(L_H-|E_{SC_H[a,b]}|-|E_{SC_H[u_g',v_g']}|)+1\\
\leq&n+3-k+\frac{|E_{SC_H[a,b]}|}{2}\\
\leq&n+3-k+\frac{1}{4}k
\end{align*}
The last inequality is from $|E_{SC_H[a,b]}|\leq \frac{L_H}{2}\leq\frac{k+1}{2}$. When $k>\frac{4}{7}n+O(1)$, we have $|E_{p'}|\leq k$. Since $p'$ is in $H'$, we get $\dist_{H'}(u_i,v_i)\leq k$.
\end{itemize}
This works for all $i\in R$, so we proved $H'$ is a $k$-spanner.
\end{proof}
Finally, we are ready to prove \Cref{2approxub}.

\begin{theorem}[Restated, \Cref{2approxub}]\label{2approxubit}
        There is a deterministic polynomial time algorithm $A$ such that for all sufficiently large $n$ and any $k>\frac{4}{7}n+O(1)$, given any $n$-vertex graph $G$ and its $k$-spanner $H=(V,E_H)\subseteq G$, $A(G,H,k)$ outputs a $k$-spanner $R=(V,E_R)$ of $G$ with girth at least $k+2$. Moreover, the size of 
 $R$ satisfies $|E_R|-n\leq2(E_H-n)+1$.  
\end{theorem}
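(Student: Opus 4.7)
The plan is to iterate the reduction used in the proof of Theorem~\ref{goodubit}, with Lemma~\ref{2approxmod} replacing Lemma~\ref{gdsmallcycle} as the local modification primitive. Starting with $H_0 := H$ I compute a smallest cycle of the current spanner $H_i$: if its length is at least $k+2$, halt and return $R := H_i$; if $L_{H_i} \geq 2(n-k)$, apply Lemma~\ref{largecycle} to delete one cycle edge while preserving the spanner property; otherwise extract the distinguished cycle edge $e_2$ as in Section~\ref{sec:common} and apply Lemma~\ref{2approxmod} to delete $e_2$ and insert at most two $e_2$-endangered edges of $E_G$. The resulting $H_{i+1}$ is still a $k$-spanner, and Lemma~\ref{2approxmod} additionally guarantees that the inserted edges do not lie on any cycle of length at most $k+1$ in $H_{i+1}$.

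To bound the running time and the size blow-up simultaneously I track a single monovariant. Let $\mathcal{S}_i$ denote the set of simple cycles of $H_i$ of length at most $k+1$, viewed as indicator vectors in $\mathbb{F}_2^{E_G}$, and define $\Phi(H_i) := \dim_{\mathbb{F}_2} \Span(\mathcal{S}_i)$. The key claim is that $\Phi$ strictly decreases at every iteration. By Lemma~\ref{2approxmod}, every inserted edge avoids each cycle of length $\leq k+1$ in $H_{i+1}$, so each small cycle of $H_{i+1}$ is in fact a small cycle of $H_i \backslash \{e_2\}$; consequently every vector in $\Span(\mathcal{S}_{i+1})$ has coordinate $0$ at $e_2$. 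On the other hand $SC_{H_i} \in \mathcal{S}_i$ has coordinate $1$ at $e_2$. Hence $\Span(\mathcal{S}_{i+1})$ is a strict subspace of $\Span(\mathcal{S}_i)$, forcing $\Phi(H_{i+1}) \leq \Phi(H_i) - 1$. Since $\Phi(H_0) \leq |E_H| - n + 1$, the algorithm halts after at most $|E_H| - n + 1$ iterations, each polynomial-time.

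For the size bound, let $L$ count large-cycle iterations and $T_0, T_1, T_2$ count the small-cycle iterations adding $0, 1, 2$ edges respectively. A direct accounting gives $|E_R| = |E_H| - L - T_0 + T_2$, and combining with $T_2 \leq L + T_0 + T_1 + T_2 \leq |E_H| - n + 1$ yields
\[
|E_R| - n \;\leq\; (|E_H| - n) + T_2 \;\leq\; 2(|E_H| - n) + 1,
\]
as required. The main technical subtlety I anticipate is the strict inclusion $\Span(\mathcal{S}_{i+1}) \subsetneq \Span(\mathcal{S}_i)$; the $e_2$-coordinate argument above dispatches it cleanly using only Lemma~\ref{2approxmod}'s guarantee that inserted edges avoid every cycle of length $\leq k+1$, plus the fact that $SC_{H_i}$ contains $e_2$. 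Apart from this, the structure just mirrors the template of Theorem~\ref{goodubit}.
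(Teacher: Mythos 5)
Your proof is correct and reaches the same conclusion as the paper's, but your termination argument is formulated differently. The paper tracks the subgraph $H'_i \subseteq H$ consisting of original edges of $H$ not yet deleted, argues inductively that $H'_i$ remains connected (because each deleted edge lies on a cycle $SC_{H_{i-1}}$ contained entirely in $H'_{i-1}$, since inserted edges never appear on small cycles), and concludes that once $|E_{H'_{i}}| \leq n-1$ the graph $H'_{i}$ is a tree, hence cycle-free, so the algorithm must already have halted; this caps the iteration count at $|E_H|-n+1$. You instead track $\Phi(H_i) := \dim_{\mathbb{F}_2}\Span(\mathcal{S}_i)$ and show $\Phi$ strictly decreases because $e_2 \in SC_{H_i}$ while every small cycle of $H_{i+1}$ lies in $H_i\setminus\{e_2\}$ (hence $\mathcal{S}_{i+1}\subseteq\mathcal{S}_i$ and $SC_{H_i}\notin\Span(\mathcal{S}_{i+1})$). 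Both yield the same bound; your cycle-space potential is the essentially dual viewpoint to the paper's ``untouched connected subgraph loses one edge per step'' bookkeeping, and it is arguably cleaner since it avoids the intermediate claims that all deleted edges are original and that $H'_i$ stays connected. Your explicit accounting $|E_R|=|E_H|-L-T_0+T_2$ and the resulting chain of inequalities match the paper's estimate. You are also more careful than the paper's write-up in explicitly dispatching the range $2(n-k)\leq L_{H_i}\leq k+1$ via Lemma~\ref{largecycle}, which the paper's proof of this theorem glosses over even though Lemma~\ref{2approxmod}'s proof inherits the $L_H < 2(n-k)$ setup of Section~\ref{sec:common}. No gaps.
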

\begin{proof}
Given any spanner $H$, we first find its smallest cycle and see whether $L_H>k+1$. If so, we are done. Otherwise, we apply \Cref{2approxmod} to break the smallest cycle while preserving the modified graph is still a $k$-spanner, with at most one more edge, and no newly added cycle with length at most $k+1$. By repeatedly doing the above procedure, we can finally find a subgraph $R\subseteq G$ with girth at least $k+2$ and is still a $k$-spanner of $G$. In each iteration, we remove an edge in a cycle with length at most $k+1$ and no cycle with length at most $k+1$ is newly generated. Therefore, all deleted edges are original edges from $E_H$ and every edge won't be deleted twice. Now let $H_i$ denote the graph modified from $H$ after $i$ iterations and let $H'_i=H_i\cap \left(H\backslash E_i\right)$ denote the subgraph consisting of edges untouched throughout the first $i$ iterations, where $E_i$ is the set of edges ever been added in the first $i$ iterations. Initially, $H'_0=H$. In the $i$-th iteration we delete an edge $e_2\in E_{SC_{H_{i-1}}}$ from $H_{i-1}$'s smallest cycle and moreover, as discussed above that smallest cycle $SC_{H_{i-1}}$ is totally within $H'_{i-1}$. Therefore, as long as $H'_{i-1}$ is connected, $H'_i$ is still connected. Moreover, we can see that $|E_{H'_i}|=|E_{H'_{i-1}}|-1$, so for any $i$, $H'_i$ is a connected subgraph such that $|E_{H'_{i}}|=|E_H|-i$. Therefore, it is impossible to get index $i>|E_H|-n+1$ since otherwise there must be $|E_{H'_{i-1}}|\leq n-1$ and $H'_{i-1}$ must be a tree by connectedness. This implies $H'_{i-1}$ doesn't contain any cycle and we terminate after $i-1$ iterations. Therefore, we will do at most $|E_H|-n+1$ iterations so the size blow-up is at most $|E_H|-n+1$. In conclusion, we have $|E_R|\leq |E_H|+(|E_H|-n+1)$
\end{proof}
\subsection{Upper Bound for Constant-Approx Good Pair with Smaller Stretch}\label{sec:moreapprox}
In this paragraph, we aim to prove \Cref{moreapproxub}. Our proof will generalize the idea in the proof of \Cref{2approxub} by classifying the $e_2$-endangered pairs $(u_1,v_1),\dots,(u_m,v_m)$ into different `buckets', and then add at most two $e_2$-endangered pairs from each of the `buckets'. Next, we will illustrate how we set up those buckets and what we do in each iteration.

\begin{lemma}\label{moreapproxmod}
Let $t\in\{1,2,3,4\}$. If $k>\frac{4t}{9t-4}n+O(1)$, and $H$ is a $k$-spanner of $G$ with smallest cycle of length at most $k+1$, we can find at most $r=2t^2$ $e_2$-endangered pairs $M:=\{(u_1,v_1),\dots,(u_r,v_r)\}\in E_G$ such that $H'=\left(H\backslash\{e_2\}\right)\cup M$ is still a $k$-spanner, and the newly added edges in $H'$ are not in any cycle with length at most $k+1$.
\end{lemma}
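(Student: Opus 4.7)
The plan is to generalize the two-element selection procedure of \Cref{2approxmod} via a bucket decomposition. Since every $e_2$-endangered pair $(u_i,v_i)$ satisfies $u_i' \in V_{SC_H[b,a']}$ and $v_i' \in V_{SC_H[a,b]}$ (as established in the setup preceding \Cref{2approxmod}), I partition each of these two cycle arcs into $t$ consecutive sub-arcs of almost equal length, yielding $t^2$ rectangular buckets $B_{j,k}$. Each $e_2$-endangered pair is placed into the unique bucket determined by the sub-arcs containing $u_i'$ and $v_i'$. Within each non-empty $B_{j,k}$ I run the same two-step selection as in \Cref{2approxmod} but restricted to $B_{j,k}$: first pick $(u_g,v_g) \in B_{j,k}$ minimizing $|E_{SC_H[u_g',v_g']}|$ and add it to $H$; if some $(u_q,v_q) \in B_{j,k}$ is still $e_2$-endangered in the resulting graph, pick the corresponding bucket-minimizer $(u_h,v_h) \in B_{j,k}$ and add it as well. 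This produces at most two new edges per bucket, hence at most $r = 2t^2$ added edges overall. Each added edge is $e_2$-endangered at the moment of insertion, so by \Cref{depen} it cannot lie on any cycle of length $\leq k+1$ in the updated $H'$.

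For correctness, I re-run the two-case analysis of \Cref{2approxmod} on an arbitrary remaining endangered pair $(u_i,v_i) \in B_{j,k}$, using the bucket-local witnesses $(u_g,v_g), (u_h,v_h)$. The essential gain from bucketing is the uniformity estimate
\[
\bigl||E_{SC_H[u_i',v_i']}| - |E_{SC_H[u_g',v_g']}|\bigr| \;\leq\; \tfrac{|E_{SC_H[b,a']}|}{t} + \tfrac{|E_{SC_H[a,b]}|}{t} \;=\; \tfrac{L_H}{t},
\]
which replaces the coarser comparison $|E_{SC_H[u_g',v_g']}| \leq |E_{SC_H[u_i',v_i']}|$ used when $t = 2$. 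Propagating this refinement through the analogue of \Cref{routelbplus} (where the $\maex$ lower bounds sharpen by a bucket-width correction) and through the length-bound inequalities of both cases of \Cref{2approxmod}, the term $|E_{SC_H[a,b]}|/2$ that produced the $k/4$ penalty at $t = 2$ becomes effectively $|E_{SC_H[a,b]}|/t$. The resulting path-length estimate is $|E_{p'}| \leq n - \tfrac{5}{4}k + \tfrac{k}{t} + O(1)$, and requiring $|E_{p'}| \leq k$ rearranges precisely to $k > \tfrac{4t}{9t-4}n + O(1)$, matching the hypothesis.

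The main obstacle is the arithmetic bookkeeping of the refined inequalities: carefully re-deriving \Cref{routelbplus} with bucket-width corrections in all three intersection patterns of the tree paths, and then checking that in both cases of the path construction the contributions of avoided edges (tree edges along $p[b,t]$, cycle edges in $E_{SC_H[u_g',v_g']}$ and $E_{SC_H[v_i',u_g']}$, and tree edges from the other bucket-witness) remain pairwise disjoint so that no double-counting occurs. A secondary subtlety is that $u_g',u_h',u_i'$ now all lie in the common sub-arc $U_j$, so the trees $T[u_g,u_g'], T[u_h,u_h'], T[u_i,u_i']$ can share non-trivial ancestors; the analogue of \Cref{threepaths} must therefore be invoked inside the bucket to justify the claimed lower bound on $|E_{T[u_h,u_h']} \setminus (E_{T[u_g,u_g']} \cup E_{T[u_i,u_i']})|$. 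Finally, one must verify that two pairs per bucket indeed suffice: after adding $(u_g,v_g)$ and $(u_h,v_h)$, every further endangered pair in $B_{j,k}$ falls into one of the two cases of \Cref{2approxmod}, with all inequalities now benefiting from the $L_H/t$ tightening, which closes the argument.
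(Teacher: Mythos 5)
Your bucket decomposition is not the same as the paper's, and as stated it does not reach the claimed stretch threshold. You partition the two arcs $V_{SC_H[b,a']}$ and $V_{SC_H[a,b]}$ each into $t$ sub-arcs of nearly equal length, giving widths $\approx (L_H-|E_{SC_H[a,b]}|)/t$ and $\approx |E_{SC_H[a,b]}|/t$, and your uniformity estimate correctly records a total width $L_H/t$. The paper instead buckets both $|E_{T[u_k',a]}|$ and $|E_{T[v_k',a']}|$ by intervals of the \emph{same} width $B=\lceil |E_{SC_H[a,b]}|/t\rceil$; this is legitimate because, by the analogue of \Cref{cyclepartup}, every endangered pair has $|E_{SC_H[u_k',v_k']}|\le|E_{SC_H[a,b]}|$, so $u_k'$ is confined to the sub-arc of $V_{SC_H[b,a']}$ within distance $|E_{SC_H[a,b]}|$ of $a'$. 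Consequently the paper's two bucket widths sum to $\approx 2|E_{SC_H[a,b]}|/t$, whereas yours sum to $\approx L_H/t\ge 2|E_{SC_H[a,b]}|/t$, with equality only when $|E_{SC_H[a,b]}|=L_H/2$. The difference is not cosmetic: carrying your widths through the case-(2) length bound of \Cref{2approxmod} (where the avoided cycle-edge count is $L_H$ minus the combined bucket width), one gets
\[
|E_{p'}|\;\le\; n-k-\tfrac{1}{2}|E_{SC_H[a,b]}|+\tfrac{L_H}{t}+O(1),
\]
in which the two copies of $L_H$ no longer cancel. The worst case is now at small $|E_{SC_H[a,b]}|$ and $L_H\approx k$, giving only $k>\tfrac{t}{2t-1}n+O(1)$ (e.g.\ $k>4n/7$ at $t=4$), rather than $k>\tfrac{4t}{9t-4}n+O(1)$. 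Your asserted estimate $|E_{p'}|\le n-\tfrac{5}{4}k+\tfrac{k}{t}+O(1)$ is the paper's final number, but it is not what your bucketing yields; to fix this, restrict the $u$-side buckets to the length-$|E_{SC_H[a,b]}|$ portion of $V_{SC_H[b,a']}$ near $a'$, or simply bucket by the tree-distances as the paper does, so that both widths are $O(|E_{SC_H[a,b]}|/t)$.

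A secondary point: the $t^2$ buckets must be processed sequentially, recomputing the endangered set against the current graph before each bucket's two-edge selection; otherwise an edge chosen for one bucket can create a cycle of length $\le k+1$ together with an edge chosen independently for another bucket. The paper makes this sequentialization explicit, whereas your phrase ``at the moment of insertion'' leaves it ambiguous whether buckets are handled in parallel or in series. With the sequential order and the corrected bucket width, your plan does match the paper's argument.
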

\begin{proof}
Since some technical parts of this proof are similar to proof of \Cref{2approxmod}, we will sketch the overlap part, and focus more on new ideas from this proof.

Let $R=\{(u_1,v_1),\dots,(u_m,v_m)\}$ be the set of $e_2$-endangered pairs in $H$. We will partition it into $r=t^2$ disjoint subsets $\{R_{i,j}\}_{i,j\in[t]}$. They are defined as
\[R_{i,j}:=\left\{(u_k,v_k):k\in[m],|E_{T[u_k',a]}|\in\big((i-1)B,iB\big],|E_{T[v_k',a']}|\in\big((j-1)B,jB\big]\right\},\text{ where }B:=\lceil\frac{|E_{SC_H[a,b]}|}{t}\rceil\]
By a similar argument as \Cref{cyclepartup}, it can be shown that $\{R_{i,j}\}_{i,j\in[t]}$ form a partition of $R$. Let $G_{i,j}:=\left(G\backslash R\right)\cup R_{i,j}$. We will solve each $G_{i,j}$ using the same procedure as in \Cref{2approxmod} separately, and then combine them since $G=\bigcup_{i,j\in[t]}G_{i,j}$. However, At this time we can take advantage of the special property of $G_{i,j}$, and get a better bound for each subtask on $G_{i,j}$.
\begin{lemma}\label{divide}
For any $i,j\in[t]$, if $k>\frac{4t}{9t-4}n+O(1)$, and $H$ is a $k$-spanner of $G_{i,j}$ with smallest cycle of length at most $k+1$, we can find at most two $e_2$-endangered pairs $(u^{(i,j)}_g,v^{(i,j)}_g),(u^{(i,j)}_h,v^{(i,j)}_h)\in R_{i,j}$ such that $H'_{i,j}=\left(H\backslash\{e_2\}\right)\cup\{(u^{(i,j)}_g,v^{(i,j)}_g),(u^{(i,j)}_h,v^{(i,j)}_h)\}$ is still a $k$-spanner of $G_{i,j}$, and the newly added edges in $H'_{i,j}$ are not in any cycle with length at most $k+1$.
\end{lemma}
\begin{proof}
We use exactly the same construction as in \Cref{2approxmod} on $H$ and $G_{i,j}$. It suffices to build a better bound $k>\frac{4t}{9t-4}n+O(1)$ by extracting the special property of $G_{i,j}$. All notations in this proof will follow from the proof of \Cref{2approxmod}. For any $(u^{(i,j)}_k,v^{(i,j)}_k)\in R_{i,j}$, we aim to prove $\dist_{H'_{i,j}}(u^{(i,j)}_k,v^{(i,j)}_k)\leq k$, which completes our proof.

Note that there are two cases in the proof of \Cref{2approxmod} and we analyzed them separately. There are the same two cases here
\begin{itemize}
\item[(1)] If for some $w\in\{g,h\}$ and another $q\in\{g,h\}\backslash\{w\}$, there is $u'^{(i,j)}_k=u'^{(i,j)}_w$ and $b_{kw}$ is not in $V_{T[u^{(i,j)}_q,u'^{(i,j)}_q)}$, where $b_{kw}$ is the lowest common ancestor of $u^{(i,j)}_k$ and $u^{(i,j)}_w$. We can get similar arguments about $p'$-avoided edges in $T$. The only difference is now the size of the 
 cycle-part in $E_{p'}\cap E_{SC_H}=E_{T[v'^{(i,j)}_w,v'^{(i,j)}_k]}$ can be bounded by \[|E_{T[v'^{(i,j)}_w,v'^{(i,j)}_k]}|\leq B+1\leq \frac{|E_{SC_H[a,b]}|}{t}+2\]
Therefore, we can improve the bound of $|E_{p'}|$ as
\begin{align*}
|E_{p'}|
\leq&1+|E_T|-(L_H-\frac{|E_{SC_H[a,b]}|}{t}-2)\\
&-\frac{k+|E_{SC_H[a,b]}|-L_H}{2}-\frac{k+1+|E_{SC_H[u'^{(i,j)}_g,v'^{(i,j)}_g]}|-L_H}{2}\\
\leq&n-\frac{1}{2}|E_{SC_H[u'^{(i,j)}_g,v'^{(i,j)}_g]}|-k+(\frac{1}{t}-\frac{1}{2})|E_{SC_H[a,b]}|+O(1)\\
\leq&n-k+\frac{k}{2}\max(0,\frac{1}{t}-\frac{1}{2})+O(1)
\end{align*}
Since $t\in\{1,2,3,4\}$, we can verify that $|E_{p'}|\leq k$ holds when $k>\frac{4t}{9t-4}n+O(1)$, so the first case is finished.
\item[(2)] If it isn't in the first case, then either $u'^{(i,j)}_k\neq u'^{(i,j)}_h$ or $u'^{(i,j)}_k=u'^{(i,j)}_h$ and $b_{kh}\in V_{T[u^{(i,j)}_g,u'^{(i,j)}_g]}$. We can get similar arguments about $p'$-avoided edges in $T$. The only difference is now the size of the cycle-part in $E_{p'}\cap E_{SC_H}=E_{T[v'^{(i,j)}_g,v'^{(i,j)}_k]}\cup E_{T[u'^{(i,j)}_g,u'^{(i,j)}_k]}$ can be bounded by \[|E_{T[v'^{(i,j)}_w,v'^{(i,j)}_k]}|+|E_{T[u'^{(i,j)}_g,u'^{(i,j)}_k]}|\leq 2(B+1)\leq \frac{2|E_{SC_H[a,b]}|}{t}+4\]
Therefore, we can improve the bound of $|E_{p'}|$ as
\begin{align*}
|E_{p'}|\leq&1+|E_T|-\frac{k+|E_{SC_H[a,b]}|-L_H}{2}-\frac{k+1+2|E_{SC_H[u'^{(i,j)}_g,v'^{(i,j)}_g]}|-L_H}{2}\\
&-(L_H-\frac{2|E_{SC_H[a,b]}|}{t}-4)+1\\
\leq&n-k+(\frac{2}{t}-\frac{1}{2})|E_{SC_H[a,b]}|+O(1)\\
\leq&n-k+(\frac{1}{t}-\frac{1}{4})k+O(1)
\end{align*}
Since $t\in\{1,2,3,4\}$, we can verify that $|E_{p'}|\leq k$ holds when $k>\frac{4t}{9t-4}n+O(1)$, we finish the proof for both cases.
\end{itemize}
\end{proof}
Let $H'=\bigcup_{i,j\in[t]}H'_{i,j}$, where $H'_{i,j}$ is the transformed $k$-spanner of $G_{i,j}$ by invoking \Cref{divide} for each $i,j\in[t]$. Since $H'_{i,j}$ is a $k$-spanner of $G_{i,j}$ for each $i,j\in[t]$, we know $H'$ is also a $k$-spanner of $G$. Note that $H'$ may create new small cycles with length at most $k+1$. However, we can slightly modify our construction to avoid it. Instead of partitioning $R$ into $\{R_{i,j}\}_{i,j\in[t]}$ and running \Cref{divide} on each $(i,j)$ concurrently, we should generate $R_{i,j}$ and  $H'_{i,j}$ one by one. Specifically, there are at most $r=t^2$ pairs $(i_1,j_1),\dots,(i_r,j_r)\in[t]\times[t]$ in total. For any $d\leq r$, let $H'_d$ denote the transformed $k$-spanner after processing the first $d$ pairs. Suppose $H'_d$ doesn't create new small cycles with length at most $k+1$. Then when processing $(i_{d+1},j_{d+1})$, we consider $R$ and $R_{i_{d+1},j_{d+1}}$ as the corresponding set of $e_2$-endangered pairs on $H'_d\cup \{e_2\}$, not on the original spanner $H$. Then, we use $R_{i_{d+1},j_{d+1}}$ generated in this way to find $(u^{(i_{d+1},j_{d+1})}_g,v^{(i_{d+1},j_{d+1})}_g),(u^{(i_{d+1},j_{d+1})}_h,v^{(i_{d+1},j_{d+1})}_h)\in R_{i_{d+1},j_{d+1}}$ and invoke \Cref{divide} to generate $H'_{i_{d+1},j_{d+1}}$. Let $H'_{d+1}:=H'_d\cup H'_{i_{d+1},j_{d+1}}$, we finish the procedure on the $(d+1)$-th pair. In this way, by proceeding the $t^2$ pairs sequentially, we can show that the final transformed $k$-spanner $H'=H'_r$ doesn't create any new small cycles with length at most $k+1$. 
\end{proof}
Similar to \Cref{2approxub}, we can finally derive \Cref{moreapproxub} using the procedure \Cref{moreapproxmod}.
\begin{theorem}[Restated, \Cref{moreapproxub}]\label{moreapproxubit}
    For $t\in\{1,2,3,4\}$, there is a deterministic polynomial time algorithm $A$ such that for all sufficiently large $n$ and any $k>\frac{4t}{9t-4}n+O(1)$, given any $n$-vertex graph $G$ and its $k$-spanner $H=(V,E_H)\subseteq G$, $A(G,H,k)$ outputs a $k$-spanner $R=(V,E_R)$ of $G$ with girth at least $k+2$. Moreover, the size of 
 $R$ satisfies $|E_R|-n\leq2t^2(E_H-n)+2t^2$.  
\end{theorem}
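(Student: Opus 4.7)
The plan is to apply \Cref{moreapproxmod} iteratively, mirroring the iterative cleanup that derives \Cref{2approxubit} from \Cref{2approxmod}. Starting from $H_0 := H$, at iteration $i$ I locate the smallest cycle of $H_{i-1}$. If its length already exceeds $k+1$, the girth requirement is met and I output $H_{i-1}$. Otherwise, since $|SC_{H_{i-1}}| \leq k+1$, I apply \Cref{moreapproxmod} to delete a cycle-edge $e_2 \in E_{SC_{H_{i-1}}}$ and to insert at most $2t^2$ new edges drawn from $E_G$, producing $H_i$ that remains a $k$-spanner of $G$. The crucial structural promise from \Cref{moreapproxmod} is that none of the newly added edges lies on any cycle of length at most $k+1$ in $H_i$.

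To bound the number of iterations, I mimic the bookkeeping of \Cref{2approxubit}. Let $H'_i := H_i \cap (H \setminus E_i)$ denote the subgraph of original edges from $H$ still present after $i$ iterations, where $E_i$ is the set of edges added in the first $i$ iterations. The ``no-new-edge-in-small-cycle'' invariant forces $SC_{H_{i-1}} \subseteq H'_{i-1}$, so the deleted edge $e_2$ always comes from $H'_{i-1}$, and hence $|E_{H'_i}| = |E_{H'_{i-1}}| - 1$. Because $e_2$ sits on a cycle of $H'_{i-1}$, its removal preserves connectivity, so $H'_i$ stays connected inductively. Once $|E_{H'_i}|$ drops to $n - 1$, the subgraph $H'_i$ must be a spanning tree with no cycle, so iteration must terminate. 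This caps the iteration count at $|E_H| - n + 1$.

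For the size bound, each iteration removes exactly one edge and adds at most $2t^2$ edges, yielding a net increase of at most $2t^2 - 1$ per iteration. Summing across at most $|E_H| - n + 1$ iterations gives
\[
|E_R| \;\leq\; |E_H| + (2t^2 - 1)(|E_H| - n + 1),
\]
which rearranges to $|E_R| - n \leq 2t^2(|E_H| - n) + (2t^2 - 1) \leq 2t^2(|E_H| - n) + 2t^2$, matching the claimed inequality. Polynomial running time is immediate: each iteration locates the smallest cycle and executes the procedure underlying \Cref{moreapproxmod} in polynomial time, and the total number of iterations is at most $|E_H|$.

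The only substantive step beyond invoking \Cref{moreapproxmod} is confirming that the small-cycle invariant really cascades across iterations, so that the untouched subgraph $H'_i$ strictly loses one edge per round while remaining connected. Since this is precisely the bookkeeping already worked out in \Cref{2approxubit}, no new analytic ideas are required; the extension to general $t \in \{1,2,3,4\}$ is entirely mechanical once \Cref{moreapproxmod} is in hand.
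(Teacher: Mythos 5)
Your proposal is correct and matches the paper's intent exactly: the paper's proof of \Cref{moreapproxubit} simply states that it ``follows from exactly the same proof as in \Cref{2approxubit}'' with \Cref{moreapproxmod} replacing \Cref{2approxmod}, and you have carried out precisely that bookkeeping (the $H'_i$ invariant, the $|E_H|-n+1$ iteration cap, and the per-iteration net gain of at most $2t^2-1$), arriving at $|E_R|-n\leq 2t^2(|E_H|-n)+2t^2-1\leq 2t^2(|E_H|-n)+2t^2$ as claimed.
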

\begin{proof}
Using \Cref{moreapproxmod}, this theorem follows from exactly the same proof as in \Cref{2approxubit}. We omit it here for brevity.
\end{proof}
\begin{remark}\label{cannotbeyond}
We should explain the reason why we cannot set $t\ge 5$ and get bounds when $k\leq \frac{1}{2}n$ in \Cref{moreapproxubit}. In fact, even if we set the bucket parameter $t\ge 5$, the calculations in \Cref{divide} won't derive better bound than $|E_{p'}|\leq n-k$, which satisfies $|E_{p'}|\leq k$ only when $k>\frac{1}{2}n$.
\end{remark}
\section{Future Direction}
\begin{itemize}
\item[(1)] We still don't know any (approximately) universal optimality argument when $\frac{1}{3}n<k<\frac{1}{2}n$. It is interesting to explore how well the greedy algorithm works in this range.
\item[(2)] It is interesting to generalize our results to weighted graphs. Currently, it's not quite clear how to do that.
\end{itemize}

\section*{Acknowledgments}
We would like to thank Greg Bodwin for his wonderful course EECS598 ``Theory of Network Design'' that led the author to this area. We also think Greg for helpful discussions including his advice on exploring ``approximately universal optimality''. Besides, we thank Benyu Wang and Gary Hoppenworth for insightful discussions. In particular, Benyu Wang gave us some inspirations in proving \cref{largecycle}.
{\small \bibliography{main}}\newpage

\listoffixmes

\end{document}